%% file: sbl_tsp_technical_main.tex
\documentclass[journal]{IEEEtran}
\input{sbl_tsp_setup.tex}

\begin{document}
\title{Large-System Analysis of Sparse Bayesian Learning}
\author{
	Fangqing~Xiao,~\IEEEmembership{Member,~IEEE}, \quad
	Dirk~T.~M.~Slock,~\IEEEmembership{Life Fellow,~IEEE}
	
	\thanks{
		Fangqing Xiao is with the School of Information Science and Engineering, 
		Yunnan University, Kunming 650091, China (e-mail: fangqing.xiao@ynu.edu.cn).
	}
	
	\thanks{
		Dirk T. M. Slock is with the Communication Systems Department, 
		EURECOM, 06410 Biot, France (e-mail: dirk.slock@eurecom.fr).
	}
	
	\thanks{
		Corresponding author: Fangqing Xiao (e-mail: fangqing.xiao@ynu.edu.cn).
	}}
\maketitle
\input{sbl_tsp_front.tex}
\input{sbl_tsp_body.tex}
\input{sbl_tsp_appendices.tex}

\input{sbl_tsp_refs.tex}
\end{document}

%% file: sbl_tsp_setup.tex
\usepackage{amsmath,amssymb}
\usepackage{bm}
\usepackage{cite}

\newtheorem{theorem}{Theorem}
\newtheorem{proposition}{Proposition}
\newtheorem{corollary}{Corollary}
\newtheorem{lemma}{Lemma}
\newtheorem{definition}{Definition}
\newtheorem{assumption}{Assumption}

%% file: sbl_tsp_front.tex
\begin{abstract}
Sparse Bayesian learning is widely used for sparse linear inverse problems, yet
its large-system stationary behavior remains poorly understood because all
variance hyperparameters are estimated from the same data. We study classical
sparse Bayesian learning, formulated as evidence maximization (type-II maximum
likelihood), for underdetermined linear models with sensing matrices having independent and
identically distributed Gaussian entries, Gaussian measurement noise, and an
unknown deterministic signal sequence. Hyperparameter reoptimization induces a nonvanishing
feedback term: a typical coordinate obeys a reoptimization-corrected scalar
Gaussian law whose signal coefficient is governed by the normalized adaptive
response rather than the frozen resolvent trace. A one-coordinate leave-one-out
construction gives an exact conditional Gaussian law, which is transferred to
a selected full stationary branch without assuming asymptotic closeness of the
reduced and full stationary vectors. Combining this law with the
Karush--Kuhn--Tucker conditions of the evidence objective yields a generally
set-valued scalar relation and three branchwise large-system consistency
relations. If the model noise variance is jointly estimated by evidence
maximization, interior joint stationarity yields an exact finite-dimensional
equality between the normalized residual energy and normalized resolvent trace.
When the limiting signal law has nonzero mass at zero, this identity further
yields a parameter-free asymptotic chi-square null law. Under an additional differentiability condition
on the selected scalar branch, the large-system characterization also gives a
closed relation for the reconstruction error of the posterior mean. The
analysis is stationary-point based and permits multiple stationary branches.
\end{abstract}

\begin{IEEEkeywords}
Sparse Bayesian learning, type-II maximum likelihood, stationary-point characterization, large-system analysis, leave-one-out analysis.
\end{IEEEkeywords}

\section{Introduction}

Sparse Bayesian learning (SBL) is a widely used framework for sparse
linear inverse problems~\cite{Tipping2001,TippingFaul2003,WipfRao2004} and has
been applied to array processing, channel estimation, source imaging, and other
sensing problems~\cite{Dai2015,Srivastava2019,Ojeda2018,Grebien2024}. In classical SBL, the coefficients have independent zero-mean Gaussian priors with
variance hyperparameters, which are estimated by maximizing the marginal likelihood
(evidence), equivalently by type-II maximum likelihood within an empirical Bayes framework
\cite{Tipping2001,WipfRao2004}. We impose no hyperprior on these variances; the model
noise variance is fixed or estimated from the same evidence objective.

Existing SBL theory has studied the geometry of the evidence objective, sparsity-inducing
stationary points, reweighted and latent-variable interpretations,
hyperparameter optimization, and convergence properties
\cite{FaulTipping2001,WipfRao2004,WipfNagarajan2007,WipfNagarajan2010,
	WipfRaoNagarajan2011,Hashemi2021,Yu2024,Song2024}. Related developments in empirical Bayes sparse recovery include Bayesian compressive sensing and
simultaneous or block-structured SBL formulations
\cite{Ji2008,WipfRao2007,ZhangRao2013}. Complementary results address estimation limits,
support recovery, and SBL pruning mechanisms
\cite{PrasadMurthy2013,KhannaMurthy2022,Moderl2025}. Off-grid and multisnapshot
array formulations further illustrate SBL evidence maximization in
structured sensing models \cite{YangXieZhang2013,Gerstoft2016}.
Recent empirical Bayes theory has also studied automatic relevance determination and
consistency for lasso-type regularizers under a different nonconjugate model
\cite{YoshidaWatanabe2026}. Our preliminary conference study derived an SBL fixed-point
law under Gaussian designs \cite{XiaoSlock2026ICASSP}. Here we analyze fully reoptimized
stationary branches through a genuine leave-one-out (cavity) construction and the feedback induced by hyperparameter
reoptimization, without assuming full-to-reduced stationary-vector proximity. To our knowledge,
this reoptimization-corrected coordinate law has not been established in the proportional large-system regime.

State evolution (SE) gives scalar characterizations for approximate message passing (AMP), vector approximate message passing (VAMP), and related iterative estimators
\cite{Donoho2009,BayatiMontanari2011,Rangan2019,VilaSchniter2013}, while rigorous
leave-one-out representations are available for AMP under Gaussian designs \cite{BaoHanXu2025}.
Exact error asymptotics are also known for several convex or explicitly regularized estimators
\cite{BayatiMontanari2012,Thrampoulidis2015,Thrampoulidis2018}. These analyses concern a
prescribed algorithm or optimizer. We instead characterize a selected branch of Karush--Kuhn--Tucker (KKT) points of the
nonconvex SBL evidence objective, allowing multiple stationary points and limiting scalar roots.

The central difficulty is dependence created by hyperparameter reoptimization: deleting a sensing
column from a full-data resolvent does not make the remaining hyperparameters independent of that
column. We instead delete both the target column and its true signal contribution, producing a reduced
SBL problem independent of the deleted Gaussian column. Local continuations return to the selected full
stationary branch without assuming that the genuine reduced and full stationary vectors are asymptotically close. This construction reveals nonvanishing feedback induced by reoptimizing the
remaining hyperparameters. The resulting scalar Gaussian law is governed by the normalized adaptive
response rather than the frozen resolvent trace. Combining it with the finite-dimensional KKT conditions
yields a branchwise scalar KKT relation and large-system consistency relations. Although the resulting scalar characterization is compact, its derivation is not a direct consequence of Gaussian sensing. The proof must retain the nonvanishing response of the reoptimized hyperparameters, control piecewise-smooth support changes and a random continuation endpoint, handle the dependence of the adaptive residual on the observation noise, and use a two-coordinate cavity to obtain empirical self-averaging.

Our contributions are threefold. First, we derive a reoptimization-corrected scalar Gaussian law for a
typical coordinate of a regular stationary branch of the SBL evidence objective, without assuming asymptotic
closeness of the genuine reduced and full stationary vectors or a fixed active hyperparameter set.
Second, the finite-dimensional KKT conditions yield a generally set-valued scalar KKT relation and
three branchwise large-system consistency relations, allowing multiple stationary points and scalar
roots. Third, joint estimation of the model noise variance by evidence maximization gives the exact identity
$V_n=\eta_n$ at an interior joint stationary point and, when $P_X(\{0\})>0$, a parameter-free
asymptotic chi-square null law. An additional scalar-branch differentiability condition yields a
closed relation for the mean-square error (MSE) of the posterior mean. No generative prior or sparsity assumption is imposed on the
deterministic true signal sequence. The resulting scalar equations are necessary branchwise consistency conditions for realized regular subsequential limits; they are not an iterative update, and we do not claim that every algebraic scalar root is realizable by SBL.

\emph{Notation:}
Bold lowercase and uppercase symbols denote vectors and matrices,
respectively; bold Greek symbols are used analogously. Unbold italic symbols
denote scalars, including vector components and matrix entries, unless stated
otherwise (e.g., $x_i$ and $A_{\ell i}$). The $i$th column of $\mathbf A$ is
$\mathbf a_i$, and $(\cdot)^{\mathsf T}$ denotes transpose. For vectors,
$\|\cdot\|=\|\cdot\|_2$ and $\|\cdot\|_\infty$ denote the Euclidean and maximum
norms; for matrices, $\operatorname{tr}(\cdot)$, $\|\cdot\|_{\mathrm{op}}$, and
$\|\cdot\|_{\mathrm F}$ denote trace, operator norm, and Frobenius norm.
$\mathbf I_m$ is the $m\times m$ identity and $\mathbf I$ has context-implied
dimension. The operator $\operatorname{diag}(\cdot)$ forms a diagonal matrix, and $\circ$ denotes the
Hadamard product; $\mathbf z^{\circ2}$ denotes elementwise squaring. Here $\delta_x$ is a Dirac mass (distinct from aspect ratio $\delta$),
$D_{\mathbf z}$ is a Jacobian with respect to $\mathbf z$, and
$D_{\mathbf a}f:=\mathbf a^{\mathsf T}\nabla_{\mathbf y}f$ is a scalar
directional derivative. Also, $\sigma(\cdot)$ generates a sigma-field,
$\mathcal F\vee\mathcal G$ is a sigma-field join, and $\perp\!\!\!\perp$
denotes independence. We use $O_{\mathrm P}(\cdot)$ and $o_{\mathrm P}(\cdot)$ in their standard
stochastic senses, $\xrightarrow{\mathrm P}$ for convergence in probability, and
$\Rightarrow$ for weak convergence. Probability statements for a uniformly
sampled coordinate include the randomness of that coordinate.

%% file: sbl_tsp_body.tex
% Technical body v4 for IEEE TSP revision (Sections II--V only)
% Insert after the Introduction in the final IEEEtran manuscript.

\section{Problem Formulation and SBL Stationary Points}
\label{sec:model}

This section specifies stationary points of the finite-dimensional SBL evidence objective,
the local regularity required under finite-coordinate perturbations, and the
adaptive response induced by hyperparameter reoptimization.

\subsection{Gaussian Linear Model}

Consider a sequence of real-valued linear inverse problems
\begin{align}
    \mathbf y&=\mathbf A\mathbf x+\mathbf v, &\mathbf v\sim\mathcal N(\mathbf 0,\sigma_0^2\mathbf I_m),
    \label{eq:model}\\
    A_{\ell i}&\overset{\mathrm{i.i.d.}}{\sim}\mathcal N(0,1/m),&
    \frac{m}{n}\longrightarrow\delta\in(0,1).
    \label{eq:aspect_ratio}
\end{align}
Here $\mathbf y\in\mathbb R^m$, $\mathbf x\in\mathbb R^n$, and
$\mathbf A\in\mathbb R^{m\times n}$, with $1\le\ell\le m$ and $1\le i\le n$; $\mathbf v$ is independent of
$\mathbf A$ and $\sigma_0^2>0$. The unknown signal $\mathbf x=\mathbf x_n$ is deterministic, so probability statements concern only the Gaussian system and any explicit branch-selection randomization.

\begin{assumption}[Gaussian proportional regime]
\label{ass:model}
The sensing matrix and measurement noise satisfy
\eqref{eq:model}--\eqref{eq:aspect_ratio}. The deterministic signal obeys
$\sup_n\max_i|x_i|<\infty$, and its empirical distribution
$n^{-1}\sum_i\delta_{x_i}$ converges weakly to a probability law $P_X$.
\end{assumption}

No sparsity is assumed: an atom of $P_X$ at zero is needed only for the null law in Section~V-B.
The Gaussian design is essential in Section~\ref{sec:scalar}, where deletion gives exact
conditional independence and Gaussianity.

\subsection{SBL Evidence Objective and KKT Conditions}

In the working SBL model, the coefficients are conditionally independent with
$x_i\mid\gamma_i\sim\mathcal N(0,\gamma_i)$, and the variance hyperparameters
$\boldsymbol\gamma\geq\mathbf 0$ are estimated by marginal-likelihood (evidence)
maximization via type-II maximum likelihood \cite{Tipping2001,WipfRao2004}.
No hyperprior is imposed. Equivalently,
$\mathbf x\mid\boldsymbol\gamma\sim\mathcal N(\mathbf 0,\boldsymbol\Gamma)$,
where $\boldsymbol\Gamma=\operatorname{diag}(\boldsymbol\gamma)$, and the
working likelihood is
$p(\mathbf y\mid\mathbf x;\tau)=\mathcal N(\mathbf A\mathbf x,\tau\mathbf I_m)$.
This working prior is not a generative assumption on the deterministic true
signal. The model noise variance $\tau>0$ is fixed through Section~IV and
estimated jointly in Section~V; the true measurement-noise variance is
$\sigma_0^2$.

Marginalizing the coefficient vector gives
$p(\mathbf y\mid\boldsymbol\gamma,\tau)=\mathcal N(\mathbf 0,\mathbf C)$ with
$\mathbf C=\tau\mathbf I_m+\mathbf A\boldsymbol\Gamma\mathbf A^{\mathsf T}$.
After dropping the additive constant and the common factor $1/2$, evidence
maximization is equivalent to minimizing
\begin{equation}
    \mathcal L_n(\boldsymbol\gamma;\tau)
    =\log\det\mathbf C+\mathbf y^{\mathsf T}\mathbf C^{-1}\mathbf y,
    \qquad \boldsymbol\gamma\geq\mathbf 0.
    \label{eq:evidence}
\end{equation}
Define the resolvent and residual as
\begin{equation}
    \mathbf Q=\mathbf C^{-1},\qquad \mathbf r=\mathbf Q\mathbf y.
    \label{eq:Qr}
\end{equation}
The posterior mean under the working SBL model and the local quantities used throughout are
\begin{equation}
\begin{aligned}
    \boldsymbol\mu&=\boldsymbol\Gamma\mathbf A^{\mathsf T}\mathbf Q\mathbf y,\qquad \mu_i=\gamma_i p_i,\\
    p_i&=\mathbf a_i^{\mathsf T}\mathbf r,\qquad
    d_i=\mathbf a_i^{\mathsf T}\mathbf Q\mathbf a_i.
\end{aligned}
    \label{eq:local_quantities}
\end{equation}
The $i$th gradient component is
$\partial\mathcal L_n/\partial\gamma_i=d_i-p_i^2$, so the KKT conditions are
\begin{align}
    \gamma_i>0&\ \Longrightarrow\ d_i=p_i^2,
    \label{eq:kkt_active}\\
    \gamma_i=0&\ \Longrightarrow\ d_i-p_i^2\geq0.
    \label{eq:kkt_inactive}
\end{align}
We characterize these KKT points rather than the trajectory of a
particular SBL optimization algorithm.

\subsection{Regular Stationary-Point Class}

Because the SBL evidence objective is nonconvex, different algorithms or
initializations may select different stationary points
\cite{FaulTipping2001,WipfRao2004}. The cavity construction
therefore has to specify which local KKT component is followed when a coordinate
is deleted or perturbed; neither uniqueness nor global optimality is assumed.

\begin{definition}[Compatible stationary branch family]
\label{def:branch}
An admissible branch family consists of the selected full KKT point
and the auxiliary stationary points obtained by deleting, restoring, or
continuously perturbing finitely many coordinates in the constructions below.
These auxiliary points are selected measurably from the information available
to the corresponding problem and, along each prescribed homotopy, are joined
to its designated endpoints by a continuous path of KKT points. If
random tie-breaking is needed, the same system-independent auxiliary
randomization is used throughout; a shared auxiliary endpoint (such as the
pseudo-response state) denotes the same selected stationary point across the
homotopies that meet there. On a parameterized path, KKT stationarity is imposed
only on the hyperparameters that remain free; any explicitly controlled
coordinate is treated as an external parameter.

Compatibility is required only for the uniformly sampled coordinate $I$ and,
for the empirical argument, the sampled pair $(I,J)$ with probability tending
to one. No uniqueness, global optimality, or a priori proximity between full
and reduced stationary points is assumed.
\end{definition}

Definition~\ref{def:branch} restricts existence/selection of these finite-coordinate
continuations but assumes neither stability nor full-to-reduced proximity; local regularity is stated next.

\begin{assumption}[Local regularity of the selected KKT branch]
\label{ass:regular}
There exist deterministic constants $\gamma_{\max}<\infty$ and $c_H>0$,
independent of $n$, such that $0\leq\gamma_i\leq\gamma_{\max}$ along the
selected full, reduced, and parameterized branches. Let $\mathcal S$ denote
the active coordinates among the KKT variables that are free in the problem
under consideration, let $\mathbf A_{\mathcal S}$ collect the corresponding
columns, and define
$\mathbf p_{\mathcal S}=(p_i)_{i\in\mathcal S}$,
$\mathbf G_{\mathcal S}=\mathbf A_{\mathcal S}^{\mathsf T}\mathbf Q\mathbf A_{\mathcal S}$,
and $\mathbf D_{\mathbf p}=\operatorname{diag}(\mathbf p_{\mathcal S})$. The active Hessian of the evidence objective is
\begin{equation}
    \mathbf H_{\mathcal S}
    =2\mathbf D_{\mathbf p}\mathbf G_{\mathcal S}\mathbf D_{\mathbf p}
     -\mathbf G_{\mathcal S}\circ\mathbf G_{\mathcal S}.
    \label{eq:Hactive}
\end{equation}
On every smooth branch segment and its regular one-sided limit at an adjacent
support transition,
$\lambda_{\min}(\mathbf H_{\mathcal S})\geq c_H$. Support changes are locally
finite, isolated, and simple, with one free coordinate entering or leaving at
a time. For fixed sensing matrix and branch-selection randomization, the
selected residual maps are continuous and locally Lipschitz and are $C^1$
away from their regular support-transition boundaries; these boundaries are
of Gaussian measure zero. For the full selected map, this regularity is understood for almost every Gaussian observation.
\end{assumption}

Assumption~\ref{ass:regular} imposes uniform positive curvature on each free active face,
ensuring the nonsingularity needed for implicit continuation; it assumes neither sparsity,
a restricted-isometry property, nor full-to-reduced proximity. The map regularity supports
Gaussian integration by parts in Appendix~C.

\subsection{Adaptive Response at an SBL Stationary Point}

The KKT conditions describe the stationary point, but the cavity transfer also
requires its response to an observation perturbation while the active
hyperparameters are reoptimized. To distinguish the frozen and adaptive
responses, define
\begin{equation}
\begin{aligned}
    \eta_n&=\frac1m\operatorname{tr}\mathbf Q,\qquad
    V_n=\frac1m\|\mathbf r\|^2,\\
    \mathbf J&=D_{\mathbf y}\mathbf r,\qquad
    h_n=\frac1m\operatorname{tr}\mathbf J.
\end{aligned}
    \label{eq:macro_defs}
\end{equation}
Here $\eta_n$ is the normalized frozen-resolvent trace, $V_n$ is the normalized
residual energy, and $h_n$ is the normalized adaptive response after the active
hyperparameters are reoptimized (with $\tau$ fixed).

\begin{proposition}[Local SBL response]
\label{prop:local_response}
Under Assumption~\ref{ass:regular}, on every smooth segment,
\begin{align}
    D_{\mathbf y}\boldsymbol{\gamma}_{\mathcal S}
    &=2\mathbf H_{\mathcal S}^{-1}
      \mathbf D_{\mathbf p}\mathbf A_{\mathcal S}^{\mathsf T}\mathbf Q,
    \label{eq:Dgamma}\\
    \mathbf J
    &=\mathbf Q
    -2\mathbf Q\mathbf A_{\mathcal S}\mathbf D_{\mathbf p}
      \mathbf H_{\mathcal S}^{-1}\mathbf D_{\mathbf p}
      \mathbf A_{\mathcal S}^{\mathsf T}\mathbf Q.
    \label{eq:J_formula}
\end{align}
Consequently,
\begin{equation}
    h_n
    =\eta_n-\frac{2}{m}\operatorname{tr}\!\left(
      \mathbf H_{\mathcal S}^{-1}\mathbf L_{\mathcal S}
    \right),
    \label{eq:h_trace}
\end{equation}
where
$\mathbf L_{\mathcal S}=\mathbf D_{\mathbf p}\mathbf A_{\mathcal S}^{\mathsf T}
\mathbf Q^2\mathbf A_{\mathcal S}\mathbf D_{\mathbf p}$, and hence
$h_n\leq\eta_n$.
\end{proposition}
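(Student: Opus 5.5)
The plan is implicit differentiation of the active KKT system on a smooth segment, followed by the chain rule for $\mathbf r=\mathbf Q\mathbf y$. On a smooth segment the active set $\mathcal S$ is fixed and the inactive hyperparameters stay at zero, so $\mathbf C=\tau\mathbf I+\mathbf A_{\mathcal S}\boldsymbol\Gamma_{\mathcal S}\mathbf A_{\mathcal S}^{\mathsf T}$. The active stationarity equations are
\[
F_i(\boldsymbol\gamma_{\mathcal S},\mathbf y)=d_i-p_i^2=0,\qquad i\in\mathcal S.
\]

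First I compute the partial Jacobians of $F$. Using $\partial\mathbf Q/\partial\gamma_j=-\mathbf Q\mathbf a_j\mathbf a_j^{\mathsf T}\mathbf Q$, we get
\[
\partial d_i/\partial\gamma_j=-G_{ij}^2,\qquad \partial p_i/\partial\gamma_j=-G_{ij}p_j .
\]
Hence
\[
\partial F_i/\partial\gamma_j=-G_{ij}^2+2p_iG_{ij}p_j,
\]
which is exactly the $(i,j)$ entry of $\mathbf H_{\mathcal S}$ in \eqref{eq:Hactive}. This also confirms that $\mathbf H_{\mathcal S}$ is the active Hessian. With respect to $\mathbf y$, $d_i$ does not depend on $\mathbf y$ and $D_{\mathbf y}p_i=\mathbf a_i^{\mathsf T}\mathbf Q$. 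Therefore
\[
D_{\mathbf y}F=-2\mathbf D_{\mathbf p}\mathbf A_{\mathcal S}^{\mathsf T}\mathbf Q .
\]
Assumption~\ref{ass:regular} gives $\lambda_{\min}(\mathbf H_{\mathcal S})\ge c_H>0$, so the implicit function theorem applies. It yields
\[
D_{\mathbf y}\boldsymbol\gamma_{\mathcal S}=-\mathbf H_{\mathcal S}^{-1}D_{\mathbf y}F=2\mathbf H_{\mathcal S}^{-1}\mathbf D_{\mathbf p}\mathbf A_{\mathcal S}^{\mathsf T}\mathbf Q,
\]
which is \eqref{eq:Dgamma}. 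The selected branch coincides with this implicit solution on the segment by continuity and local uniqueness.

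Next I differentiate $\mathbf r=\mathbf Q(\boldsymbol\gamma(\mathbf y))\mathbf y$. The total derivative is
\[
\mathbf J=\mathbf Q+\sum_{j\in\mathcal S}\frac{\partial\mathbf Q}{\partial\gamma_j}\mathbf y\,D_{\mathbf y}\gamma_j=\mathbf Q-\sum_j\mathbf Q\mathbf a_j p_j\,D_{\mathbf y}\gamma_j=\mathbf Q-\mathbf Q\mathbf A_{\mathcal S}\mathbf D_{\mathbf p}D_{\mathbf y}\boldsymbol\gamma_{\mathcal S}.
\]
Substituting \eqref{eq:Dgamma} gives \eqref{eq:J_formula}. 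Taking $m^{-1}\operatorname{tr}$ and using cyclicity,
\[
\operatorname{tr}(\mathbf Q\mathbf A_{\mathcal S}\mathbf D_{\mathbf p}\mathbf H_{\mathcal S}^{-1}\mathbf D_{\mathbf p}\mathbf A_{\mathcal S}^{\mathsf T}\mathbf Q)=\operatorname{tr}(\mathbf H_{\mathcal S}^{-1}\mathbf L_{\mathcal S}),
\]
which produces \eqref{eq:h_trace}.

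For $h_n\le\eta_n$, write $\mathbf L_{\mathcal S}=\mathbf B^{\mathsf T}\mathbf B$ with $\mathbf B=\mathbf Q\mathbf A_{\mathcal S}\mathbf D_{\mathbf p}$. Then
\[
\operatorname{tr}(\mathbf H_{\mathcal S}^{-1}\mathbf L_{\mathcal S})=\operatorname{tr}(\mathbf B\mathbf H_{\mathcal S}^{-1}\mathbf B^{\mathsf T})\ge0,
\]
since $\mathbf H_{\mathcal S}^{-1}\succ0$.

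The computations themselves are routine. The only delicate point is justifying that the selected branch is differentiable and equal to the implicit-function solution on a smooth segment. This relies on Assumption~\ref{ass:regular}: the active set is constant there, the map is $C^1$, and $\mathbf H_{\mathcal S}$ is nonsingular. I also need inactive coordinates to stay at zero locally, so they contribute nothing to $D_{\mathbf y}\boldsymbol\gamma$. I would state that the formulas hold on the open segment, and extend by one-sided limits at transitions.
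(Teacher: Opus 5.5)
Your proposal is correct and follows the paper's own argument in Appendix~A essentially step for step. Both compute $\partial_{\gamma_j}\mathbf Q=-\mathbf Q\mathbf a_j\mathbf a_j^{\mathsf T}\mathbf Q$, identify $\mathbf H_{\mathcal S}$ as the Jacobian of the active KKT residual and $D_{\mathbf y}\mathbf g_{\mathcal S}=-2\mathbf D_{\mathbf p}\mathbf A_{\mathcal S}^{\mathsf T}\mathbf Q$, and apply implicit differentiation; they then substitute into $d\mathbf r=\mathbf Q\,d\mathbf y-\mathbf Q\mathbf A_{\mathcal S}\mathbf D_{\mathbf p}\,d\boldsymbol\gamma_{\mathcal S}$, take the trace using cyclicity, and use $\mathbf H_{\mathcal S}^{-1}\succ0$, $\mathbf L_{\mathcal S}\succeq0$ to get $h_n\le\eta_n$.
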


\emph{Proof:} See Appendix~A.

Proposition~\ref{prop:local_response} identifies the adaptive-response correction:
$h_n$ differs from the frozen-resolvent trace $\eta_n$ by the KKT-response term in
\eqref{eq:h_trace}, which becomes the coefficient correction in the scalar law.
The regularity bounds give $\eta_n,V_n,h_n=O_{\mathrm P}(1)$, while the resolvent bound
and Gaussian concentration keep $\eta_n$ and $V_n$ bounded away from zero in
probability along regular subsequences.

\section{Reoptimization-Corrected Scalar Law}
\label{sec:scalar}

We now characterize a typical coordinate after the remaining variance hyperparameters are reoptimized. A leave-one-out resolvent built from full-data hyperparameters is
not independent of the deleted column, so we start from a genuine reduced leave-one-out SBL
problem and transfer its exact conditional Gaussian law back to the selected full branch.

\subsection{Genuine One-Column Cavity}

For coordinate $i$, let $\mathbf A_{-i}$ and $\mathbf x_{-i}$ denote deletion
of the $i$th column and entry, and remove the corresponding true signal
contribution from the observation:
\begin{equation}
    \mathbf y_{-i}:=\mathbf y-\mathbf a_ix_i
    =\mathbf A_{-i}\mathbf x_{-i}+\mathbf v.
    \label{eq:reduced_observation}
\end{equation}
Let $\boldsymbol\gamma_{i,-}$ be the selected stationary hyperparameter vector
for $(\mathbf A_{-i},\mathbf y_{-i})$ and set
$\boldsymbol\Gamma_{i,-}:=\operatorname{diag}(\boldsymbol\gamma_{i,-})$. If the branch
selection uses the auxiliary randomization of Definition~\ref{def:branch},
denote it here by $\zeta$; it is independent of the Gaussian system and is the
same for all compatible full and reduced selections. Define
\begin{equation}
\begin{aligned}
    \mathbf Q_{i,-}&=\bigl(\tau\mathbf I_m+\mathbf A_{-i}\boldsymbol\Gamma_{i,-}\mathbf A_{-i}^{\mathsf T}\bigr)^{-1},\\
    \mathbf r_{i,-}&=\mathbf Q_{i,-}\mathbf y_{-i},\qquad
    \mathcal F_{-i}=\sigma(\mathbf A_{-i},\mathbf y_{-i},\zeta).
\end{aligned}
    \label{eq:reduced_objects}
\end{equation}
The entire reduced branch is $\mathcal F_{-i}$-measurable, while the deleted
Gaussian column is independent of this sigma-field:
\begin{equation}
    \mathbf a_i\perp\!\!\!\perp\mathcal F_{-i}.
    \label{eq:fresh_column}
\end{equation}
Set $\eta_{i,-}=m^{-1}\operatorname{tr}\mathbf Q_{i,-}$,
$V_{i,-}=m^{-1}\|\mathbf r_{i,-}\|^2$, and let $h_{i,-}$ be the normalized
trace of the reduced-branch residual Jacobian at $\mathbf y_{-i}$. The cavity
projection $Z_{i,-}=\mathbf a_i^{\mathsf T}\mathbf r_{i,-}$ is therefore exactly Gaussian conditional on $\mathcal F_{-i}$:
\begin{equation}
    Z_{i,-}\mid\mathcal F_{-i}\sim\mathcal N(0,V_{i,-}).
    \label{eq:exact_cavity_gaussian}
\end{equation}

This exact conditional Gaussian law still belongs to the deleted problem. Appendix~B returns
to the selected full branch through a response continuation
\begin{equation}
    \mathbf u_I(t)=\mathbf y_{-I}+t\mathbf a_I,
    \qquad |t|\leq B,
    \label{eq:response_path}
\end{equation}
together with insertion and local correction continuations. Deletion supplies
exact independence; these continuations supply the transfer back to the full
stationary point.

\begin{assumption}[Sublinear support-transition count]
\label{ass:continuation}
Let $I$ be uniform on $\{1,\ldots,n\}$ and independent of the system. For each
fixed $B>0$, let $N_{I,B}$ be the total number of support transitions of
the free KKT coordinates encountered on the response path
\eqref{eq:response_path} and on the one-coordinate insertion continuation of
Appendix~B. We assume
\begin{equation}
    \frac{N_{I,B}}{m}\xrightarrow{\mathrm P}0.
    \label{eq:nonextensive_transitions}
\end{equation}
The same condition holds after deletion of one additional uniformly sampled
coordinate, as required by the two-coordinate cavity argument.
\end{assumption}

Assumption~\ref{ass:continuation} is pathwise regularity, not a transfer or sparsity
condition: ``support'' means $\{i:\gamma_i>0\}$. It requires $N_{I,B}=o_{\mathrm P}(m)$;
Appendix~B derives the vanishing local correction while the total cavity-to-full nuisance
displacement may remain $O_{\mathrm P}(1)$. Their finite-dimensional counterparts can be checked through continuation
existence, active-Hessian and hyperparameter bounds, and transition counts; we
do not claim universal validity across algorithms or initializations.

\subsection{Reoptimization-Corrected Scalar Gaussian Law}

\begin{theorem}[Reoptimization-corrected scalar Gaussian law]
\label{thm:scalarization}
For a branch family satisfying Definition~\ref{def:branch} and
Assumptions~\ref{ass:model}--\ref{ass:continuation}, let $I$ be uniform on
$\{1,\ldots,n\}$ and independent of the system. Then
\begin{align}
    &p_I+h_{I,-}\mu_I
    =h_{I,-}x_I+\sqrt{V_{I,-}}\,G_I+o_{\mathrm P}(1),
    \label{eq:quenched_scalar}\\
    &\eta_{I,-}-\eta_n=o_{\mathrm P}(1), \quad
    V_{I,-}-V_n=o_{\mathrm P}(1),\\
    &h_{I,-} -h_n=o_{\mathrm P}(1).
    \label{eq:cavity_macro_replace}\\
    &U_i:=p_i+h_n\mu_i,
    \label{eq:reaction_field}\\
    &U_I=h_nx_I+\sqrt{V_n}\,G_I+o_{\mathrm P}(1).
    \label{eq:full_scalar_law}
\end{align}
Here $G_I\sim\mathcal N(0,1)$ conditionally on the genuine cavity environment.
Furthermore, for every bounded Lipschitz
$\varphi:\mathbb R^2\to\mathbb R$,
\begin{equation}
    \frac1n\sum_{i=1}^{n}
    \left\{\varphi(x_i,U_i)-
    \mathbb E_G\!\left[\varphi\!\left(x_i,h_nx_i+\sqrt{V_n}G\right)\right]\right\}
    \xrightarrow{\mathrm P}0,
    \label{eq:empirical_scalarization}
\end{equation}
where $G\sim\mathcal N(0,1)$ is an auxiliary Gaussian variable independent
of the finite-dimensional system; $\mathbb E_G$ denotes expectation only over
this variable.
\end{theorem}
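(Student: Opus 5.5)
The argument runs through the response continuation \eqref{eq:response_path}, starting from the exact cavity law \eqref{eq:exact_cavity_gaussian}. Fix $B$ and let $\mathbf r(t)$ denote the reduced-branch residual map evaluated at $\mathbf u_I(t)$, with all free hyperparameters reoptimized. On each smooth segment Proposition~\ref{prop:local_response} gives $\frac{d}{dt}\mathbf a_I^{\mathsf T}\mathbf r(t)=\mathbf a_I^{\mathsf T}\mathbf J(t)\mathbf a_I$. Assumption~\ref{ass:regular} keeps the maps Lipschitz across support transitions, so integrating gives
\[
\mathbf a_I^{\mathsf T}\mathbf r(t)=Z_{I,-}+\int_0^t\mathbf a_I^{\mathsf T}\mathbf J(s)\mathbf a_I\,ds .
\]
At $t=0$ the column $\mathbf a_I$ is independent of $\mathcal F_{-I}$ by \eqref{eq:fresh_column}. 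Gaussian quadratic-form concentration then gives $\mathbf a_I^{\mathsf T}\mathbf J(0)\mathbf a_I=h_{I,-}+o_{\mathrm P}(1)$, since $\|\mathbf J\|_{\mathrm{op}}$ is bounded by \eqref{eq:J_formula} and the lower curvature bound $c_H$. For $s\neq0$ the column enters the observation, so independence is lost. I would control the drift $\mathbf a_I^{\mathsf T}(\mathbf J(s)-\mathbf J(0))\mathbf a_I$ segment by segment. On a smooth segment the implicit-function formula \eqref{eq:Dgamma} shows that a rank-one perturbation of size $O(1)$ in direction $\mathbf a_I$ moves $\boldsymbol\gamma$ only by $O(1)$ in $\ell_2$ spread over $m$ coordinates, which changes the normalized quadratic form by $O(m^{-1/2})$. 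Each support transition contributes $O(1/m)$, and Assumption~\ref{ass:continuation} makes the total $N_{I,B}/m=o_{\mathrm P}(1)$.

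Next comes the insertion continuation, which reintroduces coordinate $I$ as a free KKT variable and returns to the full branch at $\mathbf y=\mathbf y_{-I}+\mathbf a_Ix_I$. Freeing $\gamma_I$ acts on the residual like a rank-one Sherman--Morrison update along $\mathbf Q\mathbf a_I$. On the full branch this yields the relation $p_I=\mathbf a_I^{\mathsf T}\mathbf r_{\mathrm{ext}}(t^\star)$, where the response parameter satisfies $t^\star=x_I-\mu_I$ (the true contribution minus the part absorbed by the restored coordinate). Substituting into the linearized response of the first step gives
\[
p_I=Z_{I,-}+h_{I,-}(x_I-\mu_I)+o_{\mathrm P}(1),
\]
which is \eqref{eq:quenched_scalar} with $G_I=Z_{I,-}/\sqrt{V_{I,-}}$.

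\textbf{Main obstacle.} The endpoint $t^\star$ is random and depends on $\mathbf a_I$ through $\mu_I$, so the linearization cannot simply be evaluated at a fixed $t$. I would handle this by establishing the linearization uniformly over $|t|\le B$, using a net in $t$ together with the Lipschitz bound. Since $|\mu_I|\le\gamma_{\max}|p_I|=O_{\mathrm P}(1)$, the endpoint satisfies $|t^\star|\le B$ with probability arbitrarily close to one, and a diagonal choice of $B\to\infty$ finishes the step. I would also have to check that reoptimizing the other hyperparameters after insertion only contributes a correction of order $o_{\mathrm P}(1)$ to $p_I$, even though the total nuisance displacement of the remaining hyperparameters stays $O_{\mathrm P}(1)$. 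For this I would reuse the same segmentwise bound with Assumption~\ref{ass:continuation}.

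For the macroscopic replacements \eqref{eq:cavity_macro_replace}, $\eta$, $V$ and $h$ are normalized traces or norms. A single-column deletion combined with an $O(1)$ reoptimized displacement of $\boldsymbol\gamma$ changes each of them by $O_{\mathrm P}(m^{-1/2})$ plus $O(N_{I,B}/m)$ from transitions. Substituting these replacements into \eqref{eq:quenched_scalar} gives \eqref{eq:full_scalar_law}, using $\mu_I=O_{\mathrm P}(1)$. For \eqref{eq:empirical_scalarization} I would use a second-moment argument. The mean of the empirical average equals $\mathbb E[\varphi(x_I,U_I)-\cdots]$, which is $o(1)$ by the one-coordinate law, because $\varphi$ is bounded Lipschitz and $G_I$ is conditionally standard Gaussian. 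The variance reduces to the covariance for an independent pair $(I,J)$. I would apply the two-coordinate cavity, deleting both $\mathbf a_I$ and $\mathbf a_J$, so that the pair $(Z_{I,-},Z_{J,-})$ is conditionally Gaussian with cross-covariance $\mathbf r_{I,-}^{\mathsf T}\mathbf r_{J,-}/m\cdot O(1)$. Once both residuals refer to the common doubly-deleted problem this cross term is $O_{\mathrm P}(m^{-1/2})$, so the pair is asymptotically independent and the covariance vanishes. Chebyshev's inequality then completes the argument.
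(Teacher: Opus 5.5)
Your overall architecture matches the paper's: a response path from the exact cavity, a return to the full point at $t^\star=x_I-\mu_I$, and a two-coordinate cavity for self-averaging. However, two load-bearing steps are not supported. The first is the drift control of $\mathbf a_I^{\mathsf T}\mathbf J(s)\mathbf a_I$ away from $s=0$.

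The rates you invoke (an $O(1)$ $\ell_2$ movement of $\boldsymbol\gamma$ spread over $m$ coordinates giving $O(m^{-1/2})$, and $O(1/m)$ per support transition) are rates for the normalized trace $h_I(s)=m^{-1}\operatorname{tr}\mathbf J_I(s)$. They do not transfer to a quadratic form taken in exactly the direction along which the observation is moved. The velocity $\dot{\boldsymbol\gamma}_{\mathcal S}=2\mathbf H_{\mathcal S}^{-1}\mathbf D_{\mathbf p}\mathbf A_{\mathcal S}^{\mathsf T}\mathbf Q\mathbf a_I$ is built from the cross-couplings $\mathbf a_j^{\mathsf T}\mathbf Q\mathbf a_I$, so nothing averages out. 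The regularity bounds give only $\|\dot{\mathbf J}_I\|_{\mathrm{op}}=O_{\mathrm P}(1)$, hence $O_{\mathrm P}(1)$ drift per unit $s$. In particular, the adaptive piece $-2\mathbf w^{\mathsf T}\mathbf H_{\mathcal S}^{-1}\mathbf w$, with $\mathbf w=\mathbf D_{\mathbf p}\mathbf A_{\mathcal S}^{\mathsf T}\mathbf Q\mathbf a_I$, is $O(1)$ and carries the whole $\eta$-versus-$h$ gap, and no smaller derivative bound is available for it. Likewise, the quadratic-form jump $-2\kappa_j^{-1}(\mathbf a_I^{\mathsf T}\mathbf v_j)^2$ at a transition admits only an $O_{\mathrm P}(1)$ bound, not $O(1/m)$.

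The missing idea is that independence is essentially \emph{not} lost at any fixed $s$. Conditionally on $\mathcal G_I(s)=\sigma(\mathbf A_{-I},\mathbf y_{-I}+s\mathbf a_I,\zeta)$, the observation reveals $\mathbf a_I$ only through $\mathbf v+s\mathbf a_I$. Gaussian regression therefore gives $\mathbf a_I\mid\mathcal G_I(s)\sim\mathcal N(\mathbf m_I(s),\varsigma_I^2(s)\mathbf I_m)$ with $\|\mathbf m_I(s)\|=O_{\mathrm P}(m^{-1/2})$ and $\varsigma_I^2(s)=m^{-1}+O(m^{-2})$, while $\mathbf J_I(s)$ is $\mathcal G_I(s)$-measurable. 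Hence $\mathbb E|\mathbf a_I^{\mathsf T}\mathbf J_I(s)\mathbf a_I-h_I(s)|=O(m^{-1/2})$ for each deterministic $s$, and Fubini plus Markov give $\int_{-B}^{B}|\mathbf a_I^{\mathsf T}\mathbf J_I(s)\mathbf a_I-h_I(s)|\,ds\xrightarrow{\mathrm P}0$. This $L^1$ bound controls $\int_0^t$ simultaneously for all $|t|\le B$. The random endpoint therefore needs no net, no Lipschitz property in $s$, and no endpoint-away-from-kink condition, which is why Theorem~\ref{thm:scalarization} does not use Assumption~\ref{ass:derivative}. Your segment-plus-transition-count argument is needed only for $\sup_{|s|\le B}|h_I(s)-h_I(0)|=o_{\mathrm P}(1)$, where it is exactly right.

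The second gap is the return to the full branch. Sherman--Morrison gives $p_I=\mathbf a_I^{\mathsf T}\mathbf R_I^{\mathrm F}\mathbf u_I(t^\star)$ only with the \emph{full} nuisance vector $\boldsymbol\gamma^{\mathrm F}_{-I}$ inside $\mathbf R_I^{\mathrm F}$. Your response integral, by contrast, describes the genuine reduced branch at $\mathbf u_I(t^\star)$, whose nuisance is reoptimized without column $I$. At $\boldsymbol\gamma^{\mathrm F}_{-I}$ the reduced KKT equation for coordinate $j$ is violated by $\gamma_I(\mathbf a_j^{\mathsf T}\mathbf R_I^{\mathrm F}\mathbf a_I)^2/(1+\gamma_I s_I^{\mathrm F})$. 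This is a mismatch between two KKT systems at the same observation, not a drift along the response path, so reusing the segmentwise bound does not address it. Moreover, in a nonconvex problem a small KKT residual does not by itself place the \emph{selected} reduced point near $\boldsymbol\gamma^{\mathrm F}_{-I}$.

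The paper instead runs a correction continuation $\mathbf u_I(t^\star)+\theta\mu_I\mathbf a_I$, $\theta\in[0,1]$, with $\theta\gamma_I^{\mathrm F}\mathbf a_I\mathbf a_I^{\mathsf T}$ added to the covariance. Its nuisance forcing is $\gamma_I^{\mathrm F}\mathbf c_I^{\circ2}+2\gamma_I^{\mathrm F}\Delta_I\mathbf D_{\mathbf p}\mathbf c_I$, where $\Delta_I=p_I^{\mathrm F}-p_I^{\mathrm c}(\theta)$. The indispensable input, absent from your proposal, is delocalization of the cross-coupling vector: $\|\mathbf A_{-I}^{\mathsf T}\mathbf Q\mathbf a_I\|_\infty=O_{\mathrm P}(\rho_n)$ with $\rho_n=\sqrt{\log n/m}$. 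This comes from the fresh column by a union bound and is propagated along the insertion path by Gronwall. A bootstrap anchored at $\Delta_I(1)=0$ then yields $\|\boldsymbol\gamma^{\mathrm{ps}}_{-I}-\boldsymbol\gamma^{\mathrm F}_{-I}\|_2=O_{\mathrm P}(\rho_n)$ and $|\mathbf a_I^{\mathsf T}\mathbf r(t^\star)-p_I|=O_{\mathrm P}(\rho_n)$.

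Finally, in your empirical step $(Z_{I,-},Z_{J,-})$ is not jointly conditionally Gaussian under any common conditioning, because each one-coordinate residual depends on the other deleted column. You must replace $Z_{I,-}$ by the double-cavity field $\mathbf a_I^{\mathsf T}\mathbf r_{-IJ}$. The displacement $\mathbf r_{I,-}-\mathbf r_{-IJ}$ is $O_{\mathrm P}(1)$ in norm, not small. However, it arises from a nested insertion of $J$ with $I$ still deleted, so it is $\mathcal F_{-IJ}\vee\sigma(\mathbf a_J)$-measurable and independent of $\mathbf a_I$, and its projection on $\mathbf a_I$ is therefore $O_{\mathrm P}(m^{-1/2})$. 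The two double-cavity fields are then exactly conditionally independent given $\mathcal F_{-IJ}$, not merely weakly correlated.
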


\emph{Proof:} See Appendix~B.

Theorem~\ref{thm:scalarization} shows that the high-dimensional adaptive stationary problem nevertheless decouples, for a typical coordinate, into a scalar Gaussian channel. The reoptimization effect does not vanish: it changes the effective signal gain from the frozen response $\eta_n$ to the adaptive response $h_n$. Equation~\eqref{eq:empirical_scalarization} upgrades this typical-coordinate statement to an empirical law, so the same scalar description governs the asymptotic coordinate population.

\subsection{Scalar KKT Geometry}

Combining Theorem~\ref{thm:scalarization} with the finite-dimensional KKT conditions, introduce the marked empirical law
\begin{equation}
    \nu_n=\frac1n\sum_{i=1}^{n}\delta_{(x_i,U_i,\gamma_i,p_i,\mu_i)}.
    \label{eq:marked_measure}
\end{equation}
Let $\tau_n=\tau$ in the fixed-noise analysis. Along any subsequence on which
$(\eta_n,h_n,V_n,\tau_n,\nu_n)$ converges jointly in distribution (with
$\nu_n$ endowed with the weak topology), denote the limiting random element by
$(\eta,h,V,\tau,\nu_b)$, where $b$ indexes the selected branch, and write
$\langle f\rangle_b=\int f\,d\nu_b$; identities hold almost surely under this
limit. Let $\mathcal M=(\eta,h,V,\tau)$. A generic mark is
$(X,U,\Gamma,P,\widehat X)$ with $\widehat X=\Gamma P$ and $P$ the limit of $p_i$.
Equation~\eqref{eq:empirical_scalarization}
then identifies its $(X,U)$ marginal as
\begin{equation}
    U=hX+\sqrt V\,G,\qquad
    G\sim\mathcal N(0,1),\qquad G\perp\!\!\!\perp X\mid\mathcal M,
    \label{eq:scalar_channel}
\end{equation}
with $X\sim P_X$. No independence is asserted between
$(\Gamma,P,\widehat X)$ and $(X,U)$.

To combine the scalar law with the KKT conditions, Appendix~C establishes
the typical-coordinate leave-one-out diagonal resolvent relation
\begin{equation}
    d_I-\frac{\eta_n}{1+\eta_n\gamma_I}=o_{\mathrm P}(1).
    \label{eq:di_selfscreen}
\end{equation}
The same relation holds at the empirical-measure level. Combining it with the
finite-dimensional KKT inequalities therefore yields the limiting scalar KKT
relation
\begin{align}
    \Gamma=0
    &\ \Longrightarrow\ \widehat X=0,\quad U=P,\quad U^2=P^2\leq\eta,
    \label{eq:scalar_inactive}\\
    \Gamma>0
    &\ \Longrightarrow\
    P^2=\frac{\eta}{1+\eta\Gamma}.
    \label{eq:scalar_active}
\end{align}
Moreover, every active scalar KKT point obeys
\begin{align}
    U&=P+h\widehat X=P(1+h\Gamma),\\
    \frac{U^2}{\eta}
    &=\frac{(1+h\Gamma)^2}{1+\eta\Gamma}.
    \label{eq:scalar_kkt_graph}
\end{align}

Equation~\eqref{eq:scalar_kkt_graph} is generally a set-valued scalar KKT
relation, not an update rule: $\eta$ describes the frozen coordinate curvature,
whereas $h$ retains the adaptive-response correction. If several positive roots coexist,
the selected finite-dimensional branch determines the root represented in
$\nu_b$. In the frozen-response special case $h=\eta$, \eqref{eq:scalar_inactive}--
\eqref{eq:scalar_kkt_graph} reduce to
$\Gamma=(U^2-\eta)_+/\eta^2$, where $(a)_+:=\max\{a,0\}$, recovering
the fixed-point relation of the preliminary analysis in
\cite{XiaoSlock2026ICASSP}.

\section{Branchwise Large-System Characterization}
\label{sec:macro}

With $\tau$ fixed, the first two balances constrain the frozen-resolvent trace and
noise coupling, while the third closes the adaptive response.
The marked law retains branch-dependent root selection.

\subsection{Resolvent and Noise Balances}

\begin{theorem}[Branchwise large-system balances]
\label{thm:macro_balances}
Under the setting of Theorem~\ref{thm:scalarization}, every joint
subsequential limit $(\eta,h,V,\tau,\nu_b)$ of
$(\eta_n,h_n,V_n,\tau_n,\nu_n)$ satisfies, almost surely,
\begin{align}
    1
    &=\tau\eta+\frac1\delta
      \left\langle\frac{\eta\Gamma}{1+\eta\Gamma}\right\rangle_b,
    \label{eq:balance_eta}\\
    \sigma_0^2h
    &=\tau V+\frac1\delta
      \left\langle(\widehat X-X)P\right\rangle_b.
    \label{eq:balance_noise}
\end{align}
\end{theorem}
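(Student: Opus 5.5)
The plan is to derive both balances as exact finite-$n$ algebraic identities that hold at every selected stationary point, and then pass to the subsequential limit along which $(\eta_n,h_n,V_n,\tau_n,\nu_n)$ converges. Only the noise term will need a probabilistic argument.

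\emph{Resolvent balance.} From $\mathbf Q\mathbf C=\mathbf I_m$, taking the trace and dividing by $m$ gives
\begin{equation*}
1=\tau\eta_n+\frac1m\operatorname{tr}(\mathbf Q\mathbf A\boldsymbol\Gamma\mathbf A^{\mathsf T})
=\tau\eta_n+\frac nm\cdot\frac1n\sum_{i=1}^n\gamma_i d_i .
\end{equation*}
\begin{enumerate}
\item Replace $d_i$ by $\eta_n/(1+\eta_n\gamma_i)$ using the empirical-measure version of \eqref{eq:di_selfscreen} from Appendix~C. The error is controlled because $0\le\gamma_i\le\gamma_{\max}$ by Assumption~\ref{ass:regular}, and $d_i\le\|\mathbf Q\|_{\mathrm{op}}\|\mathbf a_i\|^2\le\tau^{-1}\|\mathbf a_i\|^2$, which is uniformly bounded with high probability.
\item The map $(\gamma,\eta)\mapsto \eta\gamma/(1+\eta\gamma)$ is bounded and continuous on $[0,\gamma_{\max}]\times[0,\infty)$.
\item Weak convergence of $\nu_n$ jointly with $\eta_n\to\eta$, together with $n/m\to1/\delta$, then yields \eqref{eq:balance_eta}.
\end{enumerate}

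\emph{Noise balance.} From $\mathbf C\mathbf r=\mathbf y$ and $\boldsymbol\Gamma\mathbf A^{\mathsf T}\mathbf r=\boldsymbol\mu$ we get $\tau\mathbf r=\mathbf y-\mathbf A\boldsymbol\mu=\mathbf A(\mathbf x-\boldsymbol\mu)+\mathbf v$. Taking the inner product with $\mathbf r$ and dividing by $m$ gives the exact identity
\begin{equation*}
\tau V_n=\frac1m\sum_{i=1}^n(x_i-\mu_i)p_i+\frac1m\mathbf v^{\mathsf T}\mathbf r .
\end{equation*}
\begin{enumerate}
\item \emph{The sum term.} The factors are bounded: $x_i$ by Assumption~\ref{ass:model}, and $p_i^2\le d_i$ by the KKT conditions \eqref{eq:kkt_active}--\eqref{eq:kkt_inactive}, so $|\mu_i|\le\gamma_{\max}|p_i|$. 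After a bounded truncation, weak convergence of $\nu_n$ sends the sum to $\delta^{-1}\langle(X-\widehat X)P\rangle_b$.
\item \emph{The noise term.} It remains to show $m^{-1}\mathbf v^{\mathsf T}\mathbf r-\sigma_0^2h_n\xrightarrow{\mathrm P}0$; rearranging then gives \eqref{eq:balance_noise}.
\item \emph{Stein's identity.} Condition on $\mathbf A$ and $\zeta$. The selected residual map $\mathbf v\mapsto\mathbf r(\mathbf A\mathbf x+\mathbf v)$ is locally Lipschitz and $C^1$ off a Gaussian-null set (Assumption~\ref{ass:regular}), so it is weakly differentiable. Its Jacobian satisfies $\|\mathbf J\|_{\mathrm{op}}\le\|\mathbf Q\|_{\mathrm{op}}\le\tau^{-1}$ by \eqref{eq:J_formula}, since the subtracted term is positive semidefinite. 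Gaussian integration by parts then gives $\mathbb E[\mathbf v^{\mathsf T}\mathbf r\mid\mathbf A,\zeta]=\sigma_0^2\,\mathbb E[\operatorname{tr}\mathbf J\mid\mathbf A,\zeta]$.
\item \emph{Concentration of $\mathbf v^{\mathsf T}\mathbf r$.} Apply the Gaussian Poincaré inequality to $f(\mathbf v)=m^{-1}\mathbf v^{\mathsf T}\mathbf r$. Its gradient is $m^{-1}(\mathbf r+\mathbf J^{\mathsf T}\mathbf v)$, with norm $O(m^{-1/2})$ on the high-probability event $\|\mathbf v\|=O(\sqrt m)$ (using $\|\mathbf r\|\le\tau^{-1}\|\mathbf y\|$). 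Hence $\operatorname{Var}(f\mid\mathbf A,\zeta)=O(1/m)$, and $f$ concentrates around $\sigma_0^2\,\mathbb E[h_n\mid\mathbf A,\zeta]$.
\end{enumerate}

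\emph{Main obstacle.} The hard step is showing that $h_n$ itself concentrates around its conditional mean given $(\mathbf A,\zeta)$. The trace $\operatorname{tr}\mathbf J$ is only piecewise smooth in $\mathbf v$, since it jumps at support transitions, so a direct Poincaré bound is not available. I would try two routes.
\begin{itemize}
\item \emph{Cavity route.} Use the two-coordinate cavity construction of Appendix~B, together with \eqref{eq:cavity_macro_replace}, to show that $h_n$ is asymptotically insensitive to resampling the noise component along each deleted direction. This would give self-averaging.
\item \emph{Direct route.} Avoid concentrating $h_n$ altogether. Write $\mathbf v^{\mathsf T}\mathbf r$ as an $\mathbf A$-averaged quantity and apply Stein's identity jointly in $(\mathbf A,\mathbf v)$ through $\mathbf y$, using $D_{\mathbf a}$ directional derivatives. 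The jumps of $\operatorname{tr}\mathbf J$ would then be handled via Assumption~\ref{ass:continuation}: at most $o_{\mathrm P}(m)$ transitions, each contributing a bounded change to $\operatorname{tr}\mathbf J$, which vanishes after normalization by $m$.
\end{itemize}
With this concentration in hand, $m^{-1}\mathbf v^{\mathsf T}\mathbf r=\sigma_0^2h_n+o_{\mathrm P}(1)$, and taking limits along the subsequence gives \eqref{eq:balance_noise} almost surely.
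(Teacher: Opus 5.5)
Your resolvent balance is fine. Working directly with $\gamma_i d_i$ and the empirical form of \eqref{eq:di_selfscreen} is equivalent to the paper's route through $\gamma_i d_i=\mu_i p_i$ and the scalar KKT relation. Your algebra and uniform-integrability control for the sum term in the noise balance are also correct.

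The gap is exactly the step you flag. First-order Stein plus Poincar\'e only gives $m^{-1}\mathbf v^{\mathsf T}\mathbf r-\sigma_0^2\,\mathbb E[h_n\mid\mathbf A,\zeta]\to0$. You would then need $h_n$ to concentrate over the noise, and neither proposed route delivers this:
\begin{itemize}
\item The cavity route is only sketched. The cavity comparisons behind Theorem~\ref{thm:scalarization} relate reduced and full states at the same noise realization; they do not control the fluctuation of $h_n$ in $\mathbf v$ with $\mathbf A$ fixed.
\item The direct route misuses Assumption~\ref{ass:continuation}. That assumption counts support transitions only along the one-dimensional response and insertion paths $\mathbf y_{-I}+t\mathbf a_I$ of a sampled coordinate, not along arbitrary directions in $\mathbf v\in\mathbb R^m$. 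Counting jumps along particular paths does not yield a variance bound for a discontinuous function of $\mathbf v$.
\end{itemize}

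The missing idea is that concentration of $h_n$ is unnecessary. The second-order Gaussian Stein identity \eqref{eq:C_second_order_stein} (Bellec--Zhang) compares $\mathbf v^{\mathsf T}\mathbf r$ with the random divergence itself:
\begin{equation*}
\mathbb E\bigl[(\mathbf v^{\mathsf T}\mathbf r-\sigma_0^2\operatorname{tr}\mathbf J)^2\mid\mathbf A,\zeta\bigr]
=\sigma_0^2\,\mathbb E\bigl[\|\mathbf r\|^2\mid\mathbf A,\zeta\bigr]
+\sigma_0^4\,\mathbb E\bigl[\|\mathbf J\|_{\mathrm F}^2\mid\mathbf A,\zeta\bigr].
\end{equation*}
It requires only $W^{1,2}$ regularity of $\mathbf v\mapsto\mathbf r$ (Lemma~\ref{lem:sobolev_residual}, which your local-Lipschitz, piecewise-$C^1$ observation already supplies). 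The jumps of $\operatorname{tr}\mathbf J$ at support transitions never enter. The right side is $O(m)$, so dividing by $m^2$ gives \eqref{eq:stein_noise} in $L^2$. This is how the paper closes the noise balance.

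A smaller error also needs fixing. The bound $\|\mathbf J\|_{\mathrm{op}}\le\|\mathbf Q\|_{\mathrm{op}}$ does not follow from \eqref{eq:J_formula}, because subtracting a positive semidefinite term only gives $\mathbf J\preceq\mathbf Q$.
\begin{itemize}
\item Put $\mathbf M=\mathbf Q^{1/2}\mathbf A_{\mathcal S}\mathbf D_{\mathbf p}$. Since $\mathbf G_{\mathcal S}\circ\mathbf G_{\mathcal S}\succeq0$, you have $\mathbf H_{\mathcal S}\preceq2\mathbf M^{\mathsf T}\mathbf M$.
\item Hence every nonzero eigenvalue of $2\mathbf M\mathbf H_{\mathcal S}^{-1}\mathbf M^{\mathsf T}$ is at least $1$, and $\mathbf J$ is negative semidefinite on an $|\mathcal S|$-dimensional subspace.
\item The correct two-sided bound uses the Hessian gap: $\|\mathbf J\|_{\mathrm{op}}\le\tau^{-1}\max\{1,2c_H^{-1}\|\mathbf M\|_{\mathrm{op}}^2\}$, with $\|\mathbf D_{\mathbf p}\|_{\mathrm{op}}\le\tau^{-1/2}\max_i\|\mathbf a_i\|$ by KKT.
\end{itemize}
This bound depends on $\mathbf A$ alone, so the conditional moment bounds needed for either Stein identity still hold.
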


\emph{Proof:} See Appendix~C.

The first balance follows from the exact trace identity
\begin{equation}
    1=\tau\eta_n+\frac1m\sum_{i=1}^{n}\mu_ip_i.
    \label{eq:exact_trace_balance}
\end{equation}
For the second, the posterior-mean identity and the observation model give
\begin{align}
    \mathbf y&=\mathbf A\boldsymbol\mu+\tau\mathbf r,
    \label{eq:reconstruction_identity}\\
    \mathbf v&=\mathbf A(\boldsymbol\mu-\mathbf x)+\tau\mathbf r.
    \label{eq:noise_identity}
\end{align}
Because $\mathbf r$ is adaptive, the mixed noise term cannot be conditionally centered. Appendix~C instead proves by Gaussian integration by parts
\cite{Stein1981,BellecZhang2021} that
\begin{equation}
    m^{-1}\mathbf v^{\mathsf T}\mathbf r-\sigma_0^2h_n\xrightarrow{\mathrm P}0.
    \label{eq:stein_noise}
\end{equation}
This yields the branchwise noise balance; closing $h$ requires the directional KKT response below.

\subsection{Adaptive-Response Balance}

Closing the adaptive response additionally differentiates the reduced
response at its random endpoint, so we exclude the asymptotically exceptional
case in which that endpoint hits a support-transition kink.

\begin{assumption}[Typical endpoint differentiability]
\label{ass:derivative}
Along the reduced response path \eqref{eq:response_path}, let
$\mathcal T_{I,B}$ be the set of support-transition points on $[-B,B]$ and let
$e_I:=x_I-\mu_I$ be the full-branch response endpoint. Write
$\operatorname{dist}(t,\mathcal T):=\inf_{s\in\mathcal T}|t-s|$ and
$\operatorname{dist}(t,\varnothing):=\infty$. There exists a deterministic
sequence $\omega_n\downarrow0$
with $\log(1/\omega_n)=o(m)$ such that, for every sufficiently large fixed
$B$,
\begin{equation}
    \Pr\!\left(
      |e_I|\leq B,\;
      \operatorname{dist}(e_I,\mathcal T_{I,B})\leq\omega_n
    \right)\longrightarrow0.
    \label{eq:endpoint_away_kink}
\end{equation}
In addition, on $\{|e_I|\leq B,\gamma_I>0\}$, the probability that the
regular correction continuation of Appendix~B crosses a nuisance
support-transition boundary tends to zero.
\end{assumption}

Assumption~\ref{ass:derivative} is used only for derivative-level adaptive-response
transfer and the closed reconstruction-error relation, not for Theorem~\ref{thm:scalarization}
or the first two balances.

Define $\Psi_{\eta,h}(\Gamma)=0$ at $\Gamma=0$ and
$\Psi_{\eta,h}(\Gamma)=(2+\eta\Gamma)/[(2h-\eta)+\eta h\Gamma]$ for
$\Gamma>0$; Appendix~D proves that the active denominator is nonzero.

\begin{theorem}[Adaptive-response balance]
\label{thm:susceptibility}
Under the setting of Theorem~\ref{thm:scalarization} and
Assumption~\ref{ass:derivative}, every joint subsequential limit
$(\eta,h,V,\tau,\nu_b)$ additionally satisfies, almost surely,
\begin{equation}
\begin{aligned}
    1
    ={}&\tau h
    +\frac{h}{\delta}
     \left\langle\Psi_{\eta,h}(\Gamma)\right\rangle_b.
\end{aligned}
    \label{eq:balance_h}
\end{equation}
\end{theorem}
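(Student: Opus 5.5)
The plan is to obtain an exact finite-dimensional identity, parallel to the trace balance \eqref{eq:exact_trace_balance}, by differentiating the reconstruction identity \eqref{eq:reconstruction_identity} instead of the resolvent identity. On a smooth segment of the selected branch (almost every $\mathbf y$, by Assumption~\ref{ass:regular}), $\mathbf y=\mathbf A\boldsymbol\mu(\mathbf y)+\tau\mathbf r(\mathbf y)$ holds identically in $\mathbf y$. Taking the Jacobian gives $\mathbf I_m=\mathbf A D_{\mathbf y}\boldsymbol\mu+\tau\mathbf J$. Taking the normalized trace then gives
\begin{equation*}
1=\tau h_n+\frac1m\sum_{i=1}^n D_{\mathbf a_i}\mu_i ,
\end{equation*}
where $D_{\mathbf a_i}\mu_i=\mathbf a_i^{\mathsf T}\nabla_{\mathbf y}\mu_i$. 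Since $n/m\to1/\delta$, the theorem reduces to a typical-coordinate statement, $D_{\mathbf a_I}\mu_I-h_n\Psi_{\eta_n,h_n}(\gamma_I)=o_{\mathrm P}(1)$, together with uniform integrability to pass from the empirical average to $\langle\cdot\rangle_b$. For the integrability I would bound $|D_{\mathbf a_i}\mu_i|$ using \eqref{eq:Dgamma}, $\gamma_{\max}$, $\lambda_{\min}(\mathbf H_{\mathcal S})\ge c_H$ and $\|\mathbf Q\|_{\mathrm{op}}\le1/\tau$. This gives a bound by polynomial moments of $\|\mathbf a_i\|$, $|p_i|$ and $\|\mathbf A\|_{\mathrm{op}}$, which are $O_{\mathrm P}(1)$ uniformly. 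The empirical-measure version then follows from the two-coordinate cavity argument already used for \eqref{eq:empirical_scalarization}.

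The core computation is the scalar derivative. Since $\|\mathbf a_I\|^2\to1$, differentiating along $\mathbf a_I$ is, up to $o_{\mathrm P}(1)$, differentiating along the response path \eqref{eq:response_path} in $t$ at the endpoint $e_I=x_I-\mu_I$. Here I would reuse the transfer of Appendix~B: along the path, the full-branch field $U_I$ has slope $h_{I,-}+o_{\mathrm P}(1)=h_n+o_{\mathrm P}(1)$. This is the derivative-level content of \eqref{eq:quenched_scalar}, namely that the $t$-dependence enters through $h_{I,-}t$ plus a reoptimization-corrected cavity term. It then remains to compute $d\mu_I/dU_I$ on the scalar KKT graph.

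On the inactive face, $\mu_I\equiv0$ locally, so the derivative is $0=\Psi(0)$. On the active face I use \eqref{eq:scalar_active}--\eqref{eq:scalar_kkt_graph}, parameterized by $\Gamma$, with $P=\sqrt{\eta/(1+\eta\Gamma)}$ and hence $dP/d\Gamma=-P\eta/[2(1+\eta\Gamma)]$. This gives
\begin{equation*}
\frac{dU}{d\Gamma}=\frac{P[(2h-\eta)+\eta h\Gamma]}{2(1+\eta\Gamma)},\qquad
\frac{d\widehat X}{d\Gamma}=\frac{P(2+\eta\Gamma)}{2(1+\eta\Gamma)},
\end{equation*}
so $d\widehat X/dU=\Psi_{\eta,h}(\Gamma)$, using the Appendix~D nonvanishing of the denominator. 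The chain rule then yields $D_{\mathbf a_I}\mu_I\approx h\Psi$. In finite dimensions, I would carry out the same implicit differentiation on the full KKT system, using \eqref{eq:Dgamma} for the $I$th row. I would replace $d_I$ by $\eta_n/(1+\eta_n\gamma_I)$ via \eqref{eq:di_selfscreen}, and its directional derivative by the derivative of that expression.

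The main obstacle is justifying that the finite-dimensional directional derivative converges to the scalar-graph derivative. The point is that the nuisance hyperparameters are reoptimized along the direction $\mathbf a_I$, and their aggregate feedback must appear exactly as the factor $h$ and not as $\eta$. I would handle it as follows. By Assumption~\ref{ass:derivative}, with probability tending to one the endpoint $e_I$ is at distance at least $\omega_n$ from every kink, and the correction continuation crosses no nuisance transition. Hence the full-branch map is $C^1$ in a neighborhood of the endpoint, and the derivative of the cavity representation can be taken termwise. I would then control the derivative of the local correction term of Appendix~B through the Lipschitz constants of Assumption~\ref{ass:regular}, with the condition $\log(1/\omega_n)=o(m)$ absorbing union bounds over concentration events. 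The delicate points are showing that the derivative of $d_I$ along $\mathbf a_I$ matches the derivative of $\eta_n/(1+\eta_n\gamma_I)$ (i.e.\ $\eta_n$ itself has negligible sensitivity) and that the cross-term of $U_I=p_I+h_n\mu_I$ has slope $h_n$. I expect both to need the same leave-one-out resolvent expansion as in Appendix~C.
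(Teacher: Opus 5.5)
Your outer structure matches the paper. You differentiate $\mathbf y=\mathbf A\boldsymbol\mu+\tau\mathbf r$, take normalized traces, reduce to a typical-coordinate formula $\mathbf a_I^{\mathsf T}\nabla_{\mathbf y}\mu_I=h_n\Psi_{\eta_n,h_n}(\gamma_I)+o_{\mathrm P}(1)$, and pass to the limit with a polynomial moment envelope. Your scalar computation of $d\widehat X/dU$ is correct, and your handling of $\dot d_I$ goes through, since its nuisance part $\sum_{j\ne I}\dot\gamma_jG_{Ij}^2$ is $O_{\mathrm P}(\rho_n)$.

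The gap is the step that carries all the content: that $U_I=p_I+h_n\mu_I$ has slope $h_n+o_{\mathrm P}(1)$ along $\mathbf a_I$. Given $p_I^2=d_I$, this amounts to showing that the full-data response of $p_I$ with $\gamma_I$ frozen and the nuisance reoptimized equals $h_n/(1+h_n\gamma_I)+o_{\mathrm P}(1)$. In the notation of Appendix~D this response is $\vartheta_I=d_I-2\mathbf w_I^{\mathsf T}\mathbf K_I^{-1}\mathbf w_I$. The feedback term $2\mathbf w_I^{\mathsf T}\mathbf K_I^{-1}\mathbf w_I$ is of order one: it is the term that turns $\eta_n/(1+\eta_n\gamma_I)$ into $h_n/(1+h_n\gamma_I)$.

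Termwise differentiation of the cavity representation cannot produce this term. It would require the derivative of the remainder $p_I^{\mathrm F}-\mathbf a_I^{\mathsf T}\mathbf r_I^{\mathrm{ps}}$ to vanish. That remainder is only $O_{\mathrm P}(\rho_n)$ in value, and the Lipschitz constants of Assumption~\ref{ass:regular} bound its derivative merely by $O_{\mathrm P}(1)$. Its derivative compares two Jacobians:
\begin{itemize}
\item a full-data Jacobian, in which $\gamma_I\mathbf a_I\mathbf a_I^{\mathsf T}$ sits inside the resolvent and couples to every nuisance coordinate;
\item the reduced Jacobian at the moving endpoint $e_I(s)=x_I+s-\mu_I(s)$, which moves at rate $1-\dot\mu_I$, not at the unit speed your reduction to the response path suggests.
\end{itemize}
These Jacobians differ at order one. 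Showing that the difference is exactly absorbed by the chain rule through $e_I(s)$ is the whole difficulty, and appealing to ``the same leave-one-out expansion as in Appendix~C'' does not address it.

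The paper never differentiates an $o_{\mathrm P}(1)$ error and works only with value-level identities, in four steps.
\begin{enumerate}
\item The exact block--Schur formula, with $\psi_I=O_{\mathrm P}(\rho_n)$, $\phi_I=O_{\mathrm P}(\rho_n^2)$ and the Schur gap, gives $\mathbf a_I^{\mathsf T}\nabla_{\mathbf y}\mu_I=\vartheta_I(2-\gamma_Id_I)/(2\vartheta_I-d_I)+o_{\mathrm P}(1)$.
\item Sherman--Morrison gives $\mathbf K_i=\mathbf K_i^{\mathrm R}-2\lambda_i\mathbf w_i^{\mathrm R}(\mathbf w_i^{\mathrm R})^{\mathsf T}+\mathbf E_i^{\mathrm K}$ with $\|\mathbf E_i^{\mathrm K}\|_{\mathrm{op}}=O_{\mathrm P}(\rho_n^2)$. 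The rank-one part is precisely the feedback through $\mathbf u_i^{\mathrm{ps}}$ that your chain rule presupposes. A controlled rank-one inversion then yields $\vartheta_i=\ell_i^{\mathrm R}/(1+\gamma_i\ell_i^{\mathrm R})+o_{\mathrm P}(1)$.
\item $\ell_I^{\mathrm R}$ is compared with the genuine $\ell_I^{\mathrm{ps}}=\mathbf a_I^{\mathsf T}\mathbf J_I(e_I)\mathbf a_I$ by linear interpolation between nuisance vectors $O_{\mathrm P}(\rho_n)$ apart on a common active face. This uses Gronwall control of $\|\mathbf c_I\|_\infty$, $\|\mathbf p_{\mathcal C}\|_\infty$ and the gap of $\mathbf K_I$. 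Guaranteeing the common face is what the no-crossing clause of Assumption~\ref{ass:derivative} is for, not local $C^1$ regularity of the full map.
\item The deterministic-grid random-endpoint concentration gives $\ell_I^{\mathrm{ps}}=h_n+o_{\mathrm P}(1)$.
\end{enumerate}
You need this chain, or an equivalent, in place of termwise differentiation.

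Two smaller points. First, $\Psi_{\eta,h}$ jumps at $\Gamma=0$, so you must also show that active marks with $\Gamma\to0$ carry no limiting mass. They lie in $\{U^2=\eta\}$, which is null since $V>0$. Second, the two-coordinate cavity is unnecessary here: the typical-coordinate statement plus uniform integrability already control the empirical average.
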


\emph{Proof:} See Appendix~D.

Appendix~D evaluates the directional posterior-mean response after hyperparameter
reoptimization and takes its normalized trace. Assumption~\ref{ass:derivative} is used only
at the random endpoint; a finite-dimensional Schur-complement gap keeps the active denominator
nonzero. The balance implies $h\neq0$ for every regular limiting state.

Together, Theorems~\ref{thm:scalarization}--\ref{thm:susceptibility} reduce the high-dimensional stationary KKT system to a scalar Gaussian channel, a scalar KKT graph, and three macroscopic branchwise balances. The state variables have distinct roles: $\eta$ describes frozen-resolvent geometry, $V$ the scalar fluctuation scale, and $h$ the response after hyperparameter reoptimization. Theorem~\ref{thm:susceptibility} closes the adaptive quantity $h$, which cannot in general be replaced by $\eta$; no uniqueness of stationary branches or scalar roots is required.

\section{Joint Noise Variance Estimation and Reconstruction Error}
\label{sec:noise}

We now estimate the model noise variance jointly with the variance
hyperparameters. Interior stationarity gives the exact identity $V_n=\eta_n$; under the joint
regularity below, the extra response from estimating this parameter is lower order, so the leading scalar and large-system laws persist.

\subsection{Noise-Variance Stationarity}

For fixed $\tau$, an overbar denotes variance-hyperparameter profiling. Let
$\overline{\boldsymbol{\gamma}}(\mathbf y,\tau)$ denote the selected branch and
$\overline{\mathcal L}(\mathbf y,\tau)
:=\mathcal L_n(\overline{\boldsymbol{\gamma}}(\mathbf y,\tau);\tau)$.
The derivatives of the unprofiled and profiled objectives with respect to $\tau$ are
\begin{equation}
\begin{aligned}
    s_\tau&:=\partial_\tau\mathcal L_n
    =\operatorname{tr}\mathbf Q-\mathbf y^{\mathsf T}\mathbf Q^2\mathbf y,\\
    S_\tau&:=s_\tau(\overline{\boldsymbol\gamma};\mathbf y,\tau)
    =\partial_\tau\overline{\mathcal L},
\end{aligned}
    \label{eq:noise_scores}
\end{equation}
where the last equality is the envelope theorem on each smooth regular
segment.

\begin{proposition}[Noise-variance stationarity]
\label{prop:self_normalization}
At every interior joint stationary point with respect to
$(\boldsymbol{\gamma},\tau)$,
\begin{equation}
    V_n=\eta_n.
    \label{eq:finite_self_normalization}
\end{equation}
\end{proposition}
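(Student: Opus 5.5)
The plan is to read the identity $V_n=\eta_n$ directly off the $\tau$-component of the joint KKT system. No probabilistic input is needed. At an interior joint stationary point the partial derivative of $\mathcal L_n$ with respect to $\tau$ vanishes, $s_\tau=0$. This holds whether we view the point as stationary for the unprofiled objective in $(\boldsymbol\gamma,\tau)$, or as a stationary point of the profiled objective $\overline{\mathcal L}(\mathbf y,\tau)$. In the profiled view, $S_\tau=s_\tau(\overline{\boldsymbol\gamma};\mathbf y,\tau)$ by the envelope identity in \eqref{eq:noise_scores}, which uses the $\boldsymbol\gamma$-KKT conditions \eqref{eq:kkt_active}--\eqref{eq:kkt_inactive}. "Interior" means $\tau>0$ lies in the open domain, so no one-sided inequality arises. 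Because of this, I would work with the unprofiled derivative and avoid any appeal to regularity of the selected branch.

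Next I would verify the formula for $s_\tau$ in \eqref{eq:noise_scores}. Since $\partial\mathbf C/\partial\tau=\mathbf I_m$, Jacobi's formula gives $\partial_\tau\log\det\mathbf C=\operatorname{tr}(\mathbf C^{-1})=\operatorname{tr}\mathbf Q$. Also $\partial_\tau\mathbf C^{-1}=-\mathbf C^{-1}\mathbf C^{-1}$, so $\partial_\tau(\mathbf y^{\mathsf T}\mathbf Q\mathbf y)=-\mathbf y^{\mathsf T}\mathbf Q^2\mathbf y$. Since $\mathbf Q$ is symmetric, $\mathbf y^{\mathsf T}\mathbf Q^2\mathbf y=\|\mathbf Q\mathbf y\|^2=\|\mathbf r\|^2$ by \eqref{eq:Qr}.

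Setting $s_\tau=0$ then gives $\operatorname{tr}\mathbf Q=\|\mathbf r\|^2$. Dividing by $m$ and using the definitions \eqref{eq:macro_defs} yields $\eta_n=V_n$, which is \eqref{eq:finite_self_normalization}. The identity is exact and finite-dimensional.

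The only point that needs care is what "interior" means when some $\gamma_i=0$. The boundary faces of $\boldsymbol\gamma$ do not affect the $\tau$-derivative, because the KKT condition in $\tau$ is the equality $s_\tau=0$ whenever $\tau>0$, regardless of the active set of $\boldsymbol\gamma$. I do not expect any substantive obstacle beyond stating this convention clearly.
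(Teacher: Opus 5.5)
Your proposal is correct and follows the paper's proof: setting the $\tau$-score $\operatorname{tr}\mathbf Q-\mathbf y^{\mathsf T}\mathbf Q^2\mathbf y$ to zero and using $\mathbf r=\mathbf Q\mathbf y$ gives $m(\eta_n-V_n)=0$. The paper phrases the condition as $S_\tau=0$; your choice to work directly with the unprofiled score $s_\tau$, so that neither the envelope identity nor branch regularity is needed, is a harmless simplification of the same argument.
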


\begin{IEEEproof}
At an interior joint stationary point, $S_\tau=0$. Since
$\mathbf r=\mathbf Q\mathbf y$,
\begin{equation}
    0=\operatorname{tr}\mathbf Q-\|\mathbf r\|^2
     =m(\eta_n-V_n).
\end{equation}
\end{IEEEproof}

Proposition~\ref{prop:self_normalization} is exact and finite-dimensional. With fixed $\tau$, $V_n$ and $\eta_n$ are distinct residual-energy and resolvent-response states; interior evidence stationarity in $\tau$ forces them to coincide exactly, removing one macroscopic degree of freedom. Extending the cavity and adaptive-response results to data-dependent $\tau$ requires positive profiled curvature.

\begin{assumption}[Regular jointly profiled noise branch]
\label{ass:joint_regular}
There exist deterministic constants
$0<\tau_{\min}\leq\tau_{\max}<\infty$ and $c_\tau>0$, independent of $n$,
such that the jointly selected full, reduced, response, insertion, and
correction branches remain on the compatible regular family of
Definition~\ref{def:branch}, with $\tau_{\min}\leq\tau\leq\tau_{\max}$.
Along these paths, the free hyperparameters satisfy
Assumption~\ref{ass:regular}; the support-transition count condition of
Assumption~\ref{ass:continuation} and, when derivatives at the random endpoint
are used, Assumption~\ref{ass:derivative}, hold for the corresponding jointly
profiled and double-reduced paths. On every smooth segment, with the regular
one-sided limit at a simple support transition,
\begin{equation}
    \frac1m\frac{\partial S_\tau}{\partial\tau}\geq c_\tau,
    \label{eq:tau_curvature}
\end{equation}
where $S_\tau$ profiles only the hyperparameters that are free on the current
path; any explicitly controlled coordinate is held fixed.
\end{assumption}

Assumption~\ref{ass:joint_regular} extends the preceding branch regularity to jointly profiled
paths and adds boundedness and uniform positive curvature of the profiled $\tau$ equation.

Appendix~E analyzes the jointly profiled response directly rather than
substituting a random estimated $\tau$ into the fixed-$\tau$ theory. Estimating $\tau$ adds only a rank-one response whose normalized trace is
$O_{\mathrm P}(m^{-1})$, and, for a uniformly sampled deletion, the full and reduced profiled $\tau$ values are
$o_{\mathrm P}(m^{-1/2})$ apart. Hence the fixed-$\tau$ profiled and fully joint
adaptive responses have the same limit, so
Theorems~\ref{thm:scalarization}--\ref{thm:susceptibility} extend to the joint
branch. The only additional limiting constraint is
\begin{equation}
    V=\eta.
    \label{eq:V_equals_eta}
\end{equation}

\subsection{Normalized Scalar Distribution}

For a regular joint-estimation limit, $\eta>0$ and $V=\eta$. Under its limiting
marked law, define $Y=U/\sqrt\eta$ and $\alpha=h/\sqrt\eta$. For $P_X$-almost every $x$, the squared normalized field satisfies the first relation below;
when $P_X(\{0\})>0$, its null specialization is the second:
\begin{align}
    Y^2\mid(X=x,\mathcal M)&\sim\chi_1^2(\alpha^2x^2),
    \label{eq:noncentral_chi}\\
    Y^2\mid X=0&\sim\chi_1^2.
    \label{eq:null_chi}
\end{align}
Here $\chi_1^2(\lambda)$ denotes the noncentral chi-square distribution with
one degree of freedom and noncentrality parameter $\lambda$
($\chi_1^2=\chi_1^2(0)$). Thus the null law, when present, is parameter free: joint noise stationarity self-normalizes the scalar field, since $X=0$ gives $U/\sqrt\eta=G$. The preceding characterization does not require an atom at zero; for nonzero coordinates the noncentrality remains branch dependent through $\alpha$.

\subsection{Reconstruction Error}

Finally, consider the mean-square error (MSE) of the posterior mean,
\begin{equation}
    \operatorname{MSE}_n=\frac1n\|\boldsymbol\mu-\mathbf x\|^2.
    \label{eq:mse_definition}
\end{equation}
Joint noise-variance estimation is not needed for the basic MSE identity. Along a
subsequence covered by Theorems~\ref{thm:scalarization}--\ref{thm:susceptibility},
let $\mathcal E_b=\langle(\widehat X-X)^2\rangle_b$. Uniform integrability
identifies $\mathcal E_b$ as the subsequential limit of the empirical MSE;
when the limiting marked law is deterministic, the convergence is in
probability. Theorem~\ref{thm:susceptibility} also gives $h\neq0$.

\begin{corollary}[Effective-noise and MSE relation]
\label{cor:mse}
Under the setting of Theorem~\ref{thm:susceptibility}, suppose the selected scalar branch admits
$\widehat X=\widehat x_b(X,U;\mathcal M)$ such that, for $P_X$-almost every $x$, the map in $u$ is continuous, piecewise $C^1$ with locally finitely many breakpoints, and has integrable derivative under the scalar Gaussian law. This additional condition selects a differentiable branch when several positive KKT roots coexist. Then every such subsequential limit satisfies, almost surely,
\begin{equation}
    \mathcal E_b
    =\delta\left(\frac{V}{h^2}-\sigma_0^2\right).
    \label{eq:closed_mse}
\end{equation}
If, in addition, the model noise variance is jointly estimated under Assumption~\ref{ass:joint_regular}, then $V=\eta$ and
\begin{align}
    \mathcal E_b
    &=\delta\left(\frac{\eta}{h^2}-\sigma_0^2\right),
    \label{eq:closed_mse_learned}\\
    \alpha^{-2}
    &=\sigma_0^2+\frac{\mathcal E_b}{\delta}.
    \label{eq:alpha_mse}
\end{align}
\end{corollary}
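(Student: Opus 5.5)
The plan is to combine three ingredients: the noise balance \eqref{eq:balance_noise}, Gaussian integration by parts (Stein's lemma) in the scalar channel \eqref{eq:scalar_channel}, and the adaptive-response balance \eqref{eq:balance_h}. Together these should give a linear identity for $\mathcal E_b$. I will work conditionally on $\mathcal M=(\eta,h,V,\tau)$ along a fixed joint subsequential limit, and use $h\neq0$ from Theorem~\ref{thm:susceptibility}.

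First, I rewrite the noise coupling $\langle(\widehat X-X)P\rangle_b$ in terms of $U$. From the scalar KKT relation, $P=U-h\widehat X$ holds in both cases: for $\Gamma>0$ by $U=P+h\widehat X$, and for $\Gamma=0$ because then $\widehat X=0$ and $U=P$. Hence
\[
\langle(\widehat X-X)P\rangle_b=\langle(\widehat X-X)U\rangle_b-h\langle(\widehat X-X)\widehat X\rangle_b .
\]
Substituting $U=hX+\sqrt V G$ and writing $\widehat X=(\widehat X-X)+X$, this becomes
\[
\langle(\widehat X-X)P\rangle_b=-h\,\mathcal E_b+\sqrt V\,\langle(\widehat X-X)G\rangle_b .
\]
Because $G\perp\!\!\!\perp X\mid\mathcal M$, we have $\langle XG\rangle_b=0$, so the last term reduces to $\sqrt V\langle \widehat x_b(X,U;\mathcal M)\,G\rangle_b$.

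Next, I apply Stein's lemma conditionally on $X=x$, using the differentiable-branch hypothesis of the corollary. The map $u\mapsto\widehat x_b(x,u;\mathcal M)$ is continuous and piecewise $C^1$ with locally finitely many breakpoints, and its derivative is integrable. Gaussian integration by parts in $G$ is therefore valid, since the breakpoints produce no boundary terms by continuity. This gives
\[
\sqrt V\,\langle \widehat x_b G\rangle_b=V\,\langle\partial_u\widehat x_b\rangle_b .
\]
Inserting this into \eqref{eq:balance_noise} yields
\[
\sigma_0^2h=\tau V-\frac{h}{\delta}\mathcal E_b+\frac V\delta\langle\partial_u\widehat x_b\rangle_b .
\]
The remaining task is to show $\langle\partial_u\widehat x_b\rangle_b=\langle\Psi_{\eta,h}(\Gamma)\rangle_b$. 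Once that holds, \eqref{eq:balance_h} gives $\tau V+(V/\delta)\langle\Psi\rangle_b=V/h$. Dividing the displayed identity by $h$ then produces exactly \eqref{eq:closed_mse}.

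The main obstacle is this derivative identification. On the inactive part the branch is locally $\widehat x_b\equiv0$, so $\partial_u\widehat x_b=0=\Psi_{\eta,h}(0)$ away from the breakpoints, which form a null set. On the active part I will differentiate the scalar KKT graph implicitly. Write $\widehat X=\Gamma U/(1+h\Gamma)$ and take $\Gamma(u)$ defined by $u^2(1+\eta\Gamma)=\eta(1+h\Gamma)^2$. Using $P^2=\eta/(1+\eta\Gamma)$, I will compute $d\Gamma/du$ from this relation and then $d\widehat X/du$, and simplify to $(2+\eta\Gamma)/[(2h-\eta)+\eta h\Gamma]$. The nonvanishing of this denominator, proved in Appendix~D, is exactly what makes the implicit differentiation legitimate. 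I expect this algebra to go through, because $\Psi$ in Theorem~\ref{thm:susceptibility} presumably arises as the limiting directional posterior-mean response. If it fails, the fallback is to read the equality directly off the derivation in Appendix~D, where the normalized trace of the posterior-mean response is shown to converge to $\langle\Psi\rangle_b$; this response is the finite-dimensional counterpart of $\partial_u\widehat x_b$.

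For the learned-noise statement, Assumption~\ref{ass:joint_regular} extends Theorems~\ref{thm:scalarization}--\ref{thm:susceptibility} to the joint branch and adds the constraint $V=\eta$ from \eqref{eq:V_equals_eta}. Substituting $V=\eta$ into \eqref{eq:closed_mse} gives \eqref{eq:closed_mse_learned}. Then, since $\alpha^{-2}=\eta/h^2$, rearranging gives \eqref{eq:alpha_mse}.
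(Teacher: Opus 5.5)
Your proposal is correct and follows the paper's proof in Appendix~E essentially step by step. You write $P=-h(\widehat X-X)+\sqrt V\,G$, use scalar Stein to get $\langle(\widehat X-X)G\rangle_b=\sqrt V\langle\partial_U\widehat X\rangle_b$, identify $\partial_U\widehat X$ with $\Psi_{\eta,h}(\Gamma)$ by differentiating the scalar KKT graph, and combine the noise and adaptive-response balances using $h\neq0$. The implicit differentiation you left unfinished does close as expected, giving $\partial_U\widehat X=(1+h\Gamma)(2+\eta\Gamma)/\{(1+h\Gamma)[(2h-\eta)+\eta h\Gamma]\}=\Psi_{\eta,h}(\Gamma)$, so the fallback is not needed.
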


\emph{Proof:} See Appendix~E.

Appendix~E combines the scalar-branch derivative with scalar Stein's identity
\cite{Stein1981} and the noise and adaptive-response balances to obtain
Corollary~\ref{cor:mse}. Rescaling the scalar channel gives
$U/h=X+(\sqrt V/h)G$, so $V/h^2$ is its effective noise variance. Equation~\eqref{eq:closed_mse} shows that this variance decomposes into the physical noise $\sigma_0^2$ plus the branchwise reconstruction error $\mathcal E_b/\delta$; under joint noise-variance estimation, $V=\eta$ expresses the same relation through $(\eta,h)$.

\section{Conclusion}
\label{sec:conclusion}

We developed a large-system characterization of regular stationary branches of the classical SBL evidence objective under i.i.d.\ Gaussian sensing. Hyperparameter reoptimization leaves a nonvanishing macroscopic feedback, so the effective scalar channel is governed by the adaptive response $h$ rather than the frozen resolvent response $\eta$. The high-dimensional stationary system is thereby reduced to a scalar Gaussian law, a generally set-valued scalar KKT graph, and three branchwise balance equations. Joint evidence learning of the noise variance further enforces $V=\eta$, yielding a self-normalized chi-square null law when $P_X(\{0\})>0$, while $V/h^2=\sigma_0^2+\mathcal E_b/\delta$ links the equivalent scalar noise directly to reconstruction error. Establishing when standard SBL algorithms select compatible regular branches, and extending the analysis beyond i.i.d.\ Gaussian sensing, remain important directions for future work.

%% file: sbl_tsp_appendices.tex
% Appendices v26: compressed reviewer-facing proofs
\appendices

\section{Local Geometry of Regular SBL Stationary Points}
\label{app:local}

This appendix proves Proposition~\ref{prop:local_response} and collects the
moment bounds used in the cavity and Stein arguments. From \eqref{eq:Qr},
\begin{equation}
\begin{aligned}
\partial_{\gamma_j}\mathbf Q&=-\mathbf Q\mathbf a_j\mathbf a_j^{\mathsf T}\mathbf Q,\\
\partial_{\gamma_j}d_i&=-G_{ij}^2,\qquad
\partial_{\gamma_j}p_i=-G_{ij}p_j,
\end{aligned}
\end{equation}
where $G_{ij}=\mathbf a_i^{\mathsf T}\mathbf Q\mathbf a_j$. Hence the
Jacobian of the active KKT residual $\mathbf g_{\mathcal S}=(d_i-p_i^2)_{i\in\mathcal S}$
with respect to $\boldsymbol\gamma_{\mathcal S}$ is
$\mathbf H_{\mathcal S}$ in \eqref{eq:Hactive}, while
$D_{\mathbf y}\mathbf g_{\mathcal S}=-2\mathbf D_{\mathbf p}
\mathbf A_{\mathcal S}^{\mathsf T}\mathbf Q$. Implicit differentiation gives
\eqref{eq:Dgamma}; substituting it into
$d\mathbf r=\mathbf Q\,d\mathbf y-\mathbf Q\mathbf A_{\mathcal S}
\mathbf D_{\mathbf p}\,d\boldsymbol\gamma_{\mathcal S}$ gives
\eqref{eq:J_formula} and, after taking traces, \eqref{eq:h_trace}. Since
$\mathbf H_{\mathcal S}^{-1}\succ0$ and $\mathbf L_{\mathcal S}\succeq0$,
$h_n\le\eta_n$.

Let $\tau_\star=\tau$ in the fixed-noise analysis and
$\tau_\star=\tau_{\min}$ in the joint-noise analysis. The KKT inequalities and
resolvent bound imply
\begin{equation}
\|\mathbf Q\|_{\mathrm{op}}\le\tau_\star^{-1},
\qquad
p_i^2\le d_i\le\tau_\star^{-1}\|\mathbf a_i\|^2.
\label{eq:A_basic_bounds}
\end{equation}
Indeed, \eqref{eq:J_formula}, \eqref{eq:A_basic_bounds}, and the Hessian gap give
$\|\mathbf J\|_{\mathrm{op}}=O_{\mathrm P}(1)$ because
$\|\mathbf D_{\mathbf p}\|_{\mathrm{op}}\le\tau_\star^{-1/2}\max_i\|\mathbf a_i\|$ and all remaining factors have bounded operator norm. Since
$\|\mathbf J\|_{\mathrm F}^2\le m\|\mathbf J\|_{\mathrm{op}}^2$, standard Gaussian spectral-, column-, and residual-norm bounds \cite{Vershynin2018}, uniformly over each fixed bounded regular continuation, give
\begin{equation}
\begin{aligned}
\mathbb E\|\mathbf r\|^2=O(m),\ 
\mathbb E\|\mathbf J\|_{\mathrm F}^2=O(m),\ 
\sup_n\mathbb E\|\mathbf J\|_{\mathrm{op}}^2<\infty.
\end{aligned}
\label{eq:A_moment_bounds}
\end{equation}

\begin{lemma}[Adaptive residual regularity]
\label{lem:sobolev_residual}
Under Assumption~\ref{ass:regular}, for almost every fixed $(\mathbf A,\zeta)$
the selected residual map belongs to the Gaussian Sobolev class $W^{1,2}$
with respect to the observation Gaussian measure, and its weak Jacobian agrees
almost everywhere with \eqref{eq:J_formula}.
\end{lemma}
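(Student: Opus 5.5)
The plan is to fix $(\mathbf A,\zeta)$ and regard the selected residual as a map $\mathbf y\mapsto\mathbf r(\mathbf y)$. The goal is to show three things: the map is locally Lipschitz everywhere; it has at most linear growth with constant depending only on $\tau_\star$ and $\|\mathbf A\|_{\mathrm{op}}$; and its classical Jacobian exists off a Gaussian-null set and equals \eqref{eq:J_formula}. Membership in $W^{1,2}$ with respect to $\mathcal N(\mathbf 0,\sigma_0^2\mathbf I_m)$ then follows. A locally Lipschitz function is differentiable almost everywhere with weak gradient equal to its a.e. gradient (Rademacher), and the square-integrability of the map and its derivative under the Gaussian weight reduces to polynomial growth bounds.

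\emph{Step 1 (pointwise bounds).} From $\mathbf r=\mathbf Q\mathbf y$ and $\|\mathbf Q\|_{\mathrm{op}}\le\tau_\star^{-1}$ we get $\|\mathbf r(\mathbf y)\|\le\tau_\star^{-1}\|\mathbf y\|$, so $\mathbf r\in L^2(\text{Gauss})$. On each smooth segment, Proposition~\ref{prop:local_response} gives \eqref{eq:J_formula}. The bounds \eqref{eq:A_basic_bounds} give $\|\mathbf D_{\mathbf p}\|_{\mathrm{op}}\le\tau_\star^{-1/2}\max_i\|\mathbf a_i\|$, and the Hessian gap of Assumption~\ref{ass:regular} gives $\|\mathbf H_{\mathcal S}^{-1}\|_{\mathrm{op}}\le c_H^{-1}$. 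Together these yield $\|\mathbf J\|_{\mathrm{op}}\le K(\mathbf A)$ uniformly in $\mathbf y$ off the transition boundaries, with $K(\mathbf A)=\tau_\star^{-1}+2\tau_\star^{-3}c_H^{-1}\|\mathbf A\|_{\mathrm{op}}^2\max_i\|\mathbf a_i\|^2$. This bound is deterministic given $\mathbf A$.

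\emph{Step 2 (global Lipschitz via segments).} Assumption~\ref{ass:regular} states that the selected residual map is continuous, locally Lipschitz, and $C^1$ off regular support-transition boundaries of Gaussian (hence Lebesgue) measure zero. Take two points $\mathbf y,\mathbf y'$ and the segment joining them. For almost every parallel translate of that segment, the segment meets the null boundary set only in a one-dimensional null set; this follows from Fubini in the transverse coordinates. Along such a segment the map is absolutely continuous, because it is locally Lipschitz on a compact set. Its derivative equals $\mathbf J$ times the direction almost everywhere, so the increment is bounded by $K(\mathbf A)\|\mathbf y-\mathbf y'\|$. Continuity then extends the bound to all pairs, giving a global Lipschitz constant $K(\mathbf A)$.

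\emph{Step 3 (identification).} By Rademacher's theorem the map is differentiable almost everywhere. Off the measure-zero boundary set its derivative is the classical $C^1$ Jacobian \eqref{eq:J_formula}. For Lipschitz functions, integration by parts against smooth compactly supported test functions holds with the a.e. derivative, so the weak Jacobian agrees a.e. with \eqref{eq:J_formula}. Since $\|\mathbf J\|_{\mathrm F}^2\le mK(\mathbf A)^2$ is bounded, $\mathbf J\in L^2(\text{Gauss})$, and the Gaussian Sobolev norm is finite. The Gaussian integration-by-parts identity used in Appendix~C therefore applies with weight $\mathcal N(\mathbf 0,\sigma_0^2\mathbf I)$ after the usual truncation argument.

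\emph{Main obstacle.} The delicate point is Step 2. The statement requires that the selected branch is a single-valued, continuous map of $\mathbf y$ across support transitions, rather than merely piecewise defined. The uniform bound on $\mathbf J$ must also hold at the one-sided limits at the boundaries, and this is exactly what the uniform Hessian gap and the simple-transition clause of Assumption~\ref{ass:regular} provide. I would therefore rely on the continuity and local Lipschitz clause of Assumption~\ref{ass:regular} directly. Only the bound on the Lipschitz constant needs to be supplied, and the segment/Fubini argument supplies it without any further control of the geometry of the boundaries.
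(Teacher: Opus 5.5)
Your proof is correct, and its identification step matches the paper's. A continuous, locally Lipschitz, piecewise-$C^1$ map whose transition set is Lebesgue-null has, by Rademacher, a weak Jacobian equal a.e.\ to the classical one, \eqref{eq:J_formula}.

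The genuine difference is how you get square integrability. The paper uses the unconditional moment bounds \eqref{eq:A_moment_bounds}, $\mathbb E\|\mathbf r\|^2=O(m)$ and $\mathbb E\|\mathbf J\|_{\mathrm F}^2=O(m)$, followed by Fubini over $(\mathbf A,\zeta)$. That Fubini step is exactly where its ``almost every fixed $(\mathbf A,\zeta)$'' comes from. You instead condition on $\mathbf A$ and use the deterministic bounds $\|\mathbf r\|\le\tau_\star^{-1}\|\mathbf y\|$ and $\|\mathbf J\|_{\mathrm{op}}\le K(\mathbf A)$. These make integrability against the Gaussian weight immediate, need no Gaussian spectral-norm moments, and give the conclusion for every $(\mathbf A,\zeta)$ at which Assumption~\ref{ass:regular} holds, not merely almost every one. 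The paper's route has the advantage of reusing the quantitative $O(m)$ bounds it needs anyway for the $O(m^{-1})$ Stein variance estimate in Appendix~C. Your $K(\mathbf A)$, averaged over $\mathbf A$, would supply those as well.

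Three small remarks:
\begin{itemize}
\item Your Step~2 (global Lipschitz) is more than the lemma needs. Local Lipschitz continuity already places each component in $W^{1,\infty}_{\mathrm{loc}}$ with weak gradient equal to the a.e.\ gradient, and the a.e.\ bound $\|\mathbf J\|_{\mathrm F}^2\le mK(\mathbf A)^2$ then suffices. So what you call the main obstacle is not really one, although global Lipschitz continuity does make the Sobolev approximation behind Stein's identity trivial.
\item $K(\mathbf A)$ also depends on $c_H$ and $\max_i\|\mathbf a_i\|$, not only on $\tau_\star$ and $\|\mathbf A\|_{\mathrm{op}}$ as your opening sentence says. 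Your displayed formula is right.
\item You and the paper rely equally on the Hessian gap holding on every smooth piece of the full map for almost every observation, not just along the continuation paths.
\end{itemize}
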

\begin{IEEEproof}
The selected map is locally Lipschitz and piecewise $C^1$, with transition boundaries of Gaussian measure zero. Rademacher's theorem identifies its weak derivative with the piecewise-$C^1$ derivative. Equation~\eqref{eq:A_moment_bounds} and Fubini
then give square integrability of both the residual and its weak Jacobian for
almost every fixed $(\mathbf A,\zeta)$.
\end{IEEEproof}

\section{Proof of the Reoptimization-Corrected Scalar Law}
\label{app:scalar}

The proof has four steps: deleted-column Gaussian concentration, insertion/correction transfer, integrated adaptive response, and a two-column cavity for empirical
self-averaging. All pathwise stochastic bounds below are uniform over the
smooth segments of a fixed bounded continuation; Assumptions~\ref{ass:regular}
and~\ref{ass:continuation} control the intervening simple support changes.
Appendix decorations are $\mathrm F$ (selected full point), $-i/-IJ$ (genuine
deletion), $\mathrm{ps}$ (pseudo-response), $\mathrm c$ (correction), $\mathrm R$ (algebraic
full-nuisance leave-one-out), and $\mathrm{jt}$ (joint profiling); matching $\boldsymbol\Gamma$ decorations denote the diagonal matrices formed from the corresponding $\boldsymbol\gamma$, and dots denote path derivatives.

\subsection{Reduced Cavity and Local Transfer}

By \eqref{eq:fresh_column},
$Z_{i,-}\mid\mathcal F_{-i}\sim\mathcal N(0,V_{i,-})$ exactly. Put
$\mathbf c_i^{(0)}=\mathbf A_{-i}^{\mathsf T}\mathbf Q_{i,-}\mathbf a_i$.
Set $\rho_n:=\sqrt{\log n/m}$. Conditionally on $\mathcal F_{-i}$,
\begin{equation}
\operatorname{Var}(c_{ji}^{(0)}\mid\mathcal F_{-i})
=m^{-1}\mathbf a_j^{\mathsf T}\mathbf Q_{i,-}^2\mathbf a_j=O_{\mathrm P}(m^{-1}).
\end{equation}
A Gaussian union bound over $j$, together with standard concentration bounds \cite{RudelsonVershynin2013,Vershynin2018}, and the operator bound on $\mathbf Q_{i,-}$ therefore give
\begin{equation}
\begin{aligned}
\|\mathbf c_i^{(0)}\|_\infty&=O_{\mathrm P}(\rho_n),&
\|\mathbf c_i^{(0)}\|_2&=O_{\mathrm P}(1),\\
\|(\mathbf c_i^{(0)})^{\circ2}\|_2&=O_{\mathrm P}(\rho_n).&&
\end{aligned}
\end{equation}
This rate comes from the Gaussian maximum bound over the columns.

For the insertion path
\begin{equation}
\begin{aligned}
\mathbf y_\lambda&=\mathbf y_{-i}+\lambda x_i\mathbf a_i,\\
\mathbf C_\lambda&=\tau\mathbf I_m+
\mathbf A_{-i}\boldsymbol\Gamma_{-i}(\lambda)\mathbf A_{-i}^{\mathsf T}
+\lambda\gamma_i^{\mathrm F}\mathbf a_i\mathbf a_i^{\mathsf T}.
\end{aligned}
\end{equation}
With $\mathbf Q_\lambda:=\mathbf C_\lambda^{-1}$,
$\mathbf r_\lambda:=\mathbf Q_\lambda\mathbf y_\lambda$, and $\mathcal S_\lambda$
the active nuisance set, put $p_i:=\mathbf a_i^{\mathsf T}\mathbf r_\lambda$,
$s_i:=\mathbf a_i^{\mathsf T}\mathbf Q_\lambda\mathbf a_i$, $e_i:=x_i-\gamma_i^{\mathrm F}p_i$, and
$\mathbf c_i=\mathbf A_{\mathcal S_\lambda}^{\mathsf T}\mathbf Q_\lambda\mathbf a_i$.
Differentiating the nuisance KKT equations gives
\begin{equation}
\mathbf H_{\mathcal S_\lambda}\dot{\boldsymbol\gamma}_{\mathcal S_\lambda}
=\gamma_i^{\mathrm F}\mathbf c_i^{\circ2}
+2e_i\mathbf D_{\mathbf p}\mathbf c_i,
\quad
\dot p_i=s_i e_i-(\mathbf D_{\mathbf p}\mathbf c_i)^{\mathsf T}
\dot{\boldsymbol\gamma}_{\mathcal S_\lambda}.
\end{equation}
The Hessian gap and Gronwall yield
$\sup_\lambda|p_i(\lambda)|+\sup_\lambda
\|\dot{\boldsymbol\gamma}_{\mathcal S_\lambda}\|_2=O_{\mathrm P}(1)$.
For the nuisance cross-coupling vector
$\mathbf c_{-i}=\mathbf A_{-i}^{\mathsf T}\mathbf Q_\lambda\mathbf a_i$, let
$\mathbf G_{-i,\lambda}:=\mathbf A_{-i}^{\mathsf T}\mathbf Q_\lambda\mathbf A_{-i}$.
Then
\begin{equation}
\dot{\mathbf c}_{-i}
=-\gamma_i^{\mathrm F}s_i\mathbf c_{-i}
-\mathbf G_{-i,\lambda}(\dot{\boldsymbol\gamma}_{-i}\circ
\mathbf c_{-i}).
\end{equation}
Hence, piecewise Gronwall propagates the initial Gaussian bound:
$\sup_{\lambda\in[0,1]}\|\mathbf c_{-i}(\lambda)\|_\infty=O_{\mathrm P}(\rho_n)$.

At the selected full point define
\begin{equation}
\mathbf R_i^{\mathrm F}=(\tau\mathbf I_m+\mathbf A_{-i}\boldsymbol\Gamma_{-i}^{\mathrm F}
\mathbf A_{-i}^{\mathsf T})^{-1},
\qquad
\mathbf u_i^{\mathrm{ps}}=\mathbf y-\mu_i^{\mathrm F}\mathbf a_i.
\end{equation}
Sherman--Morrison gives $\mathbf r^{\mathrm F}=\mathbf r=\mathbf R_i^{\mathrm F}\mathbf u_i^{\mathrm{ps}}$.
Rather than infer parameter proximity from a KKT residual of order $O_{\mathrm P}(\rho_n)$, we use the correction continuation
\begin{equation}
\mathbf y_\theta^{\mathrm c}=\mathbf u_i^{\mathrm{ps}}+\theta\mu_i^{\mathrm F}\mathbf a_i,
\quad
\mathbf C_\theta^{\mathrm c}=\tau\mathbf I_m+
\mathbf A_{-i}\boldsymbol\Gamma_{-i}^{\mathrm c}(\theta)\mathbf A_{-i}^{\mathsf T}
+\theta\gamma_i^{\mathrm F}\mathbf a_i\mathbf a_i^{\mathsf T}.
\end{equation}
With $\mathbf Q_\theta^{\mathrm c}:=(\mathbf C_\theta^{\mathrm c})^{-1}$,
$\mathbf r_\theta^{\mathrm c}:=\mathbf Q_\theta^{\mathrm c}\mathbf y_\theta^{\mathrm c}$,
$p_i^{\mathrm c}:=\mathbf a_i^{\mathsf T}\mathbf r_\theta^{\mathrm c}$, and $\mathcal S_\theta$ the
active nuisance set, put $\Delta_i:=p_i^{\mathrm F}-p_i^{\mathrm c}$,
$s_i^{\mathrm c}:=\mathbf a_i^{\mathsf T}\mathbf Q_\theta^{\mathrm c}\mathbf a_i$ and
$\mathbf c_i^{\mathrm c}=\mathbf A_{\mathcal S_\theta}^{\mathsf T}\mathbf Q_\theta^{\mathrm c}\mathbf a_i$,
its differentiated KKT system is
\begin{align}
\mathbf H_{\mathcal S_\theta}\dot{\boldsymbol\gamma}_{\mathcal S_\theta}^{\mathrm c}
&=\gamma_i^{\mathrm F}(\mathbf c_i^{\mathrm c})^{\circ2}
+2\gamma_i^{\mathrm F}\Delta_i\mathbf D_{\mathbf p}\mathbf c_i^{\mathrm c},
\label{eq:B_correction_KKT}\\
\dot p_i^{\mathrm c}
&=\gamma_i^{\mathrm F}s_i^{\mathrm c}\Delta_i-(\mathbf D_{\mathbf p}\mathbf c_i^{\mathrm c})^{\mathsf T}
\dot{\boldsymbol\gamma}_{\mathcal S_\theta}^{\mathrm c}.
\label{eq:B_correction_pi}
\end{align}
The insertion bound supplies $\|\mathbf c_i^{\mathrm c}(1)\|_\infty=O_{\mathrm P}(\rho_n)$.
The same cross-coupling differential equation as above, followed by a
bootstrap in \eqref{eq:B_correction_KKT}--\eqref{eq:B_correction_pi}, gives
\begin{equation}
\sup_\theta|\Delta_i(\theta)|=O_{\mathrm P}(\rho_n),
\qquad
\sup_\theta\|\dot{\boldsymbol\gamma}^{\mathrm c}(\theta)\|_2=O_{\mathrm P}(\rho_n).
\end{equation}
Integration over $\theta\in[0,1]$ therefore yields
\begin{align}
\|\boldsymbol\gamma_{-i}^{\mathrm{ps}}-\boldsymbol\gamma_{-i}^{\mathrm F}\|_2
&=O_{\mathrm P}(\rho_n),
\label{eq:B_gamma_transfer}\\
|\mathbf a_i^{\mathsf T}\mathbf r_i^{\mathrm{ps}}-p_i^{\mathrm F}|
&=O_{\mathrm P}(\rho_n)=o_{\mathrm P}(1).
\label{eq:B_projection_transfer}
\end{align}
Thus only the local pseudo-response nuisance vector is close to the full
nuisance vector; no closeness of the genuine cavity endpoint is assumed.

\subsection{Integrated Adaptive Response and Large-System State Replacement}

Let $\mathcal R_{-i}$ be the genuine reduced residual map. For
$\mathbf u_i(t):=\mathbf y_{-i}+t\mathbf a_i$, let $\mathbf Q_{i,-}(t)$ be its
resolvent and set $\mathbf r_i(t):=\mathcal R_{-i}(\mathbf u_i(t))$,
$\mathbf J_i(t):=D_{\mathbf u}\mathcal R_{-i}(\mathbf u_i(t))$, and
$h_i(t)=m^{-1}\operatorname{tr}\mathbf J_i(t)$. On every smooth segment,
\begin{equation}
\frac{d}{dt}[\mathbf a_i^{\mathsf T}\mathbf r_i(t)]
=\mathbf a_i^{\mathsf T}\mathbf J_i(t)\mathbf a_i.
\label{eq:B_response_derivative}
\end{equation}
The response-path KKT derivative is $O_{\mathrm P}(1)$ in $\ell_2$, so the same
Gronwall argument used above gives, for fixed $B$,
\begin{equation}
\sup_{|t|\le B}\|\mathbf A_{-i}^{\mathsf T}\mathbf Q_{i,-}(t)\mathbf a_i\|_\infty
=O_{\mathrm P}(\rho_n).
\end{equation}
For deterministic $t$, with
$\mathcal G_i(t)=\sigma(\mathbf A_{-i},\mathbf y_{-i}+t\mathbf a_i,\zeta)$,
Gaussian regression gives
\begin{align}
\mathbf a_i\mid\mathcal G_i(t)
&\sim\mathcal N(\mathbf m_i(t),\varsigma_i^2(t)\mathbf I_m),\\
\mathbf m_i(t)
&=\frac{t[\mathbf u_i(t)-\mathbf A_{-i}\mathbf x_{-i}]}
{m\sigma_0^2+t^2}.
\end{align}
Here $\varsigma_i(t)>0$ is defined by
$\varsigma_i^2(t)=\sigma_0^2/(m\sigma_0^2+t^2)$. Thus $\|\mathbf m_i(t)\|=O_{\mathrm P}(m^{-1/2})$ and
$\varsigma_i^2(t)=m^{-1}+O(m^{-2})$. Using \eqref{eq:A_moment_bounds}, the
conditional quadratic-form mean and variance then imply uniformly on bounded
$t$-intervals
\begin{equation}
\mathbb E|\mathbf a_i^{\mathsf T}\mathbf J_i(t)\mathbf a_i-h_i(t)|
=O(m^{-1/2}).
\end{equation}
Fubini and Markov therefore give
\begin{equation}
\int_{-B}^{B}|\mathbf a_i^{\mathsf T}\mathbf J_i(t)\mathbf a_i-h_i(t)|dt
\xrightarrow{\mathrm P}0.
\label{eq:B_integrated_concentration}
\end{equation}

It remains to control the adaptive-response trace itself. On a smooth response segment,
$\|\dot{\boldsymbol\gamma}_{\mathcal S}\|_2=O_{\mathrm P}(1)$ and
$\dot{\mathbf Q}=-\mathbf Q\mathbf A_{\mathcal S}\operatorname{diag}(\dot{\boldsymbol\gamma}_{\mathcal S})\mathbf A_{\mathcal S}^{\mathsf T}\mathbf Q$ give
$\|\dot{\mathbf Q}\|_{\mathrm F}=O_{\mathrm P}(1)$ because
$\|\operatorname{diag}(\dot{\boldsymbol\gamma}_{\mathcal S})\|_{\mathrm F}=\|\dot{\boldsymbol\gamma}_{\mathcal S}\|_2$. Hence
$\|\dot{\mathbf p}_{\mathcal S}\|_2+\|\dot{\mathbf G}_{\mathcal S}\|_{\mathrm F}+\|\dot{\mathbf H}_{\mathcal S}\|_{\mathrm F}+\|\dot{\mathbf L}_{\mathcal S}\|_{\mathrm F}=O_{\mathrm P}(1)$, and differentiating \eqref{eq:h_trace} gives $|\dot h_i(t)|=O_{\mathrm P}(m^{-1/2})$. At an entering crossing, let $\mathcal S_+=\{j\}\cup\mathcal C$,
$\mathbf c_j=\mathbf A_{\mathcal C}^{\mathsf T}\mathbf Q\mathbf a_j$,
$\boldsymbol\xi_j=2p_j\operatorname{diag}(\mathbf p_{\mathcal C})\mathbf c_j-\mathbf c_j^{\circ2}$,
$\mathbf B_{\mathcal C}=\mathbf Q\mathbf A_{\mathcal C}\operatorname{diag}(\mathbf p_{\mathcal C})$, and $\mathbf b_j=p_j\mathbf Q\mathbf a_j$.
With $\mathbf K:=\mathbf H_{\mathcal C}$ and $\mathbf J_-/\mathbf J_+$ the
pre/post-crossing Jacobians, block inversion gives
\begin{equation}
\mathbf J_+-\mathbf J_-=-2\kappa_j^{-1}\mathbf v_j\mathbf v_j^{\mathsf T},
\end{equation}
where $\mathbf v_j=\mathbf b_j-\mathbf B_{\mathcal C}\mathbf K^{-1}\boldsymbol\xi_j$ and
$\kappa_j=d_j^2-\boldsymbol\xi_j^{\mathsf T}\mathbf K^{-1}\boldsymbol\xi_j\ge c_H$; leaving is the reverse. By \eqref{eq:A_basic_bounds}, the Hessian gap, and uniform Gaussian spectral/column bounds, $\sup_j(\|\mathbf b_j\|+\|\mathbf B_{\mathcal C}\|_{\mathrm{op}}+\|\boldsymbol\xi_j\|_2)=O_{\mathrm P}(1)$ over regular crossings. Hence
$\sup_j|m^{-1}\operatorname{tr}(\mathbf J_+-\mathbf J_-)|=O_{\mathrm P}(m^{-1})$.
Assumption~\ref{ass:continuation} then gives
\begin{equation}
\sup_{|t|\le B}|h_i(t)-h_i(0)|=o_{\mathrm P}(1).
\label{eq:B_hstable}
\end{equation}
For the random endpoint $e_I=x_I-\mu_I=O_{\mathrm P}(1)$, truncate to $|e_I|\le B$,
integrate \eqref{eq:B_response_derivative}, and use
\eqref{eq:B_integrated_concentration}--\eqref{eq:B_hstable} together with
\eqref{eq:B_projection_transfer}; then
\begin{equation}
p_I=Z_{I,-}+h_{I,-}(x_I-\mu_I)+o_{\mathrm P}(1),
\end{equation}
which is \eqref{eq:quenched_scalar}.

The same smooth-path estimates give $\|\dot{\mathbf r}\|=O_{\mathrm P}(1)$ on both
response and insertion continuations. Since $\|\mathbf r\|=O_{\mathrm P}(\sqrt m)$,
$|\dot V|=O_{\mathrm P}(m^{-1/2})$; differentiating $m^{-1}\operatorname{tr}\mathbf Q$
gives the same order for $\dot\eta$. For $h$, the fixed-$\gamma_i$ insertion
has smooth derivative $O_{\mathrm P}(m^{-1/2})$ and $O_{\mathrm P}(m^{-1})$ normalized jumps;
allowing the final coordinate $\gamma_i$ to respond changes the normalized
trace by one Schur correction of order $O_{\mathrm P}(m^{-1})$. Hence
\begin{equation}
\begin{aligned}
\eta_{I,-}-\eta_n&=o_{\mathrm P}(1),&
V_{I,-}-V_n&=o_{\mathrm P}(1),\\
h_{I,-}-h_n&=o_{\mathrm P}(1),&&
\end{aligned}
\end{equation}
which proves \eqref{eq:cavity_macro_replace}.

\subsection{Two-Column Cavity and Empirical Scalar-Law Convergence}

For uniformly sampled distinct indices $I$ and $J$, let
$\mathbf y_{-IJ}:=\mathbf y-\mathbf a_Ix_I-\mathbf a_Jx_J$ and
$\mathbf r_{-IJ},\mathbf J_{-IJ}$ be the double-reduced residual/Jacobian. Put
$V_{-IJ}:=m^{-1}\|\mathbf r_{-IJ}\|^2$,
$h_{-IJ}:=m^{-1}\operatorname{tr}\mathbf J_{-IJ}$ and
$Z_k^{(2)}:=\mathbf a_k^{\mathsf T}\mathbf r_{-IJ}$ for $k=I,J$. With
$\mathcal F_{-IJ}:=\sigma(I,J,\mathbf A_{-IJ},\mathbf y_{-IJ},\zeta)$,
the deleted columns are conditionally independent Gaussian vectors, so
\begin{equation}
(Z_I^{(2)},Z_J^{(2)})
=\sqrt{V_{-IJ}}(G_I,G_J),
\quad G_I\perp\!\!\!\perp G_J\mid\mathcal F_{-IJ}.
\end{equation}
The only subtle point is to compare this double cavity with the one-coordinate
cavities without reintroducing dependence on the other deleted column. Keep
$I$ deleted and insert $J$ along the \emph{nested reduced insertion} whose
endpoint is the selected $(\mathbf A_{-I},\mathbf y_{-I})$ cavity and whose
inserted parameter $\gamma_{J\mid I,-}$ (the selected $J$th variance hyperparameter with $I$ deleted). The path is
$\mathcal F_{-IJ}\vee\sigma(\mathbf a_J)$-measurable and therefore remains
independent of $\mathbf a_I$. Its residual derivative has $O_{\mathrm P}(1)$ norm, so
$\|\mathbf r_{I,-}-\mathbf r_{-IJ}\|=O_{\mathrm P}(1)$. Since this displacement is
$\mathcal F_{-IJ}\vee\sigma(\mathbf a_J)$-measurable while $\mathbf a_I$ remains conditionally independent,
its conditional projected second moment equals
$m^{-1}\|\mathbf r_{I,-}-\mathbf r_{-IJ}\|^2=O_{\mathrm P}(m^{-1})$. Consequently, $Z_{I,-}-Z_I^{(2)}=o_{\mathrm P}(1)$. Interchanging $I$ and $J$ gives the
same conclusion for $J$. The corresponding large-system state insertion bounds give
\begin{equation}
\begin{aligned}
h_{I,-}-h_{-IJ}&=o_{\mathrm P}(1),& h_{J,-}-h_{-IJ}&=o_{\mathrm P}(1),\\
V_{I,-}-V_{-IJ}&=o_{\mathrm P}(1),& V_{J,-}-V_{-IJ}&=o_{\mathrm P}(1).
\end{aligned}
\end{equation}
Two successive insertions give $h_{-IJ}-h_n=o_{\mathrm P}(1)$ and
$V_{-IJ}-V_n=o_{\mathrm P}(1)$. Hence
\begin{align}
U_I&=h_{-IJ}x_I+\sqrt{V_{-IJ}}G_I+o_{\mathrm P}(1),
\label{eq:B_double_scalar_I}\\
U_J&=h_{-IJ}x_J+\sqrt{V_{-IJ}}G_J+o_{\mathrm P}(1),\\
h_{-IJ}-h_n&=o_{\mathrm P}(1),\qquad V_{-IJ}-V_n=o_{\mathrm P}(1).
\label{eq:B_double_macro}
\end{align}

For bounded Lipschitz $\varphi$, set
$D_i=\varphi(x_i,U_i)-\mathbb E_G\varphi(x_i,h_nx_i+\sqrt{V_n}G)$. Define
$\widetilde D_I:=\varphi(x_I,h_{-IJ}x_I+\sqrt{V_{-IJ}}G_I)-\mathbb E_G\varphi(x_I,h_{-IJ}x_I+\sqrt{V_{-IJ}}G)$, and similarly $\widetilde D_J$. Since $(h_{-IJ},V_{-IJ})$ is $\mathcal F_{-IJ}$-measurable and $G_I\perp G_J\mid\mathcal F_{-IJ}$, these variables are conditionally centered and independent. Equations \eqref{eq:B_double_scalar_I}--\eqref{eq:B_double_macro} and Lipschitzness give $D_I=\widetilde D_I+r_I$, $D_J=\widetilde D_J+r_J$ with $r_I,r_J=o_{\mathrm P}(1)$; boundedness gives $\mathbb E|r_I|+\mathbb E|r_J|\to0$, hence $\mathbb E D_I=o(1)$ and $\mathbb E[D_ID_J]\to0$. Therefore
\begin{equation}
\mathbb E\left[\left(n^{-1}\sum_iD_i\right)^2\right]
=n^{-1}\mathbb E D_I^2+(1-n^{-1})\mathbb E[D_ID_J]\to0,
\end{equation}
which proves the empirical scalar-law convergence in \eqref{eq:empirical_scalarization}. Tightness of the marked
empirical measures follows from \eqref{eq:A_basic_bounds}, bounded
$\gamma_i$, and $h_n=O_{\mathrm P}(1)$.

\section{Scalar KKT Geometry and Large-System Balances}
\label{app:macro}

This appendix proves the leave-one-out diagonal resolvent relation and the two balances in
Theorem~\ref{thm:macro_balances}.

\subsection{Adaptive Leave-One-Out Diagonal Resolvent Relation}

For comparison with the selected full stationary point, define the algebraic
full-nuisance leave-one-out quantities
\begin{equation}
\begin{aligned}
    \mathbf R_i^{\mathrm F}
    &=\left(
      \tau\mathbf I_m
      +\mathbf A_{-i}\boldsymbol{\Gamma}_{-i}^{\mathrm F}
       \mathbf A_{-i}^{\mathsf T}
    \right)^{-1},\\
    s_i^{\mathrm F}&=\mathbf a_i^{\mathsf T}\mathbf R_i^{\mathrm F}\mathbf a_i.
\end{aligned}
    \label{eq:algebraic_loo}
\end{equation}
The identity
\begin{equation}
    d_i=\frac{s_i^{\mathrm F}}{1+\gamma_i s_i^{\mathrm F}}.
    \label{eq:di_s}
\end{equation}
This follows from Sherman--Morrison.

Along the genuine reduced response path define
$s_{i,-}(t)=\mathbf a_i^{\mathsf T}\mathbf Q_{i,-}(t)\mathbf a_i$ and
$\eta_{i,-}(t)=m^{-1}\operatorname{tr}\mathbf Q_{i,-}(t)$. At $t=0$, conditional independence gives
\begin{equation}
\begin{aligned}
\mathbb E[s_{i,-}(0)\mid\mathcal F_{-i}]&=\eta_{i,-}(0),\\
\operatorname{Var}(s_{i,-}(0)\mid\mathcal F_{-i})&=O_{\mathrm P}(m^{-1}).
\end{aligned}
\end{equation}
Writing $\mathbf c_{i,-}=\mathbf A_{-i}^{\mathsf T}\mathbf Q_{i,-}\mathbf a_i$,
$q_j=m^{-1}\mathbf a_j^{\mathsf T}\mathbf Q_{i,-}^2\mathbf a_j$, and
$\mathbf q=(q_j)_{j\ne i}$,
\begin{equation}
\frac{d}{dt}(s_{i,-}-\eta_{i,-})
=-\sum_{j\ne i}\dot\gamma_j(c_{ji,-}^2-q_j).
\end{equation}
Appendix~\ref{app:scalar} gives
$\|\dot{\boldsymbol\gamma}\|_2=O_{\mathrm P}(1)$ and
$\|\mathbf c_{i,-}^{\circ2}\|_2=O_{\mathrm P}(\rho_n)$, while
$\|\mathbf q\|_2=O_{\mathrm P}(m^{-1/2})$. Hence, uniformly on bounded response
intervals,
\begin{equation}
\sup_{|t|\le B}|s_{i,-}(t)-\eta_{i,-}(t)|=O_{\mathrm P}(\rho_n).
\end{equation}
At the pseudo-response endpoint this is $o_{\mathrm P}(1)$. The resolvent identity and
\eqref{eq:B_gamma_transfer} then transfer the relation to the algebraic
full-nuisance resolvent $\mathbf R_i^{\mathrm F}$; reinserting the single covariance
term changes the normalized trace by $O_{\mathrm P}(m^{-1})$. Thus
\begin{equation}
    s_I^{\mathrm F}-\eta_n=o_{\mathrm P}(1),
    \label{eq:selfscreen_s}
\end{equation}
and the $1$-Lipschitz map $s\mapsto s/(1+\gamma s)$, together with
\eqref{eq:di_s}, gives \eqref{eq:di_selfscreen}.
Combining this with finite-dimensional KKT inequalities gives the limiting
inactive and active scalar graph \eqref{eq:scalar_inactive}--\eqref{eq:scalar_active},
including active sequences that approach the boundary $\Gamma=0$.

\subsection{Large-System Balances}

From $\mathbf C\mathbf Q=\mathbf I_m$, we have
\begin{equation}
m=\tau\operatorname{tr}\mathbf Q+\sum_i\gamma_i d_i.
\end{equation}
At a KKT point $\gamma_i d_i=\mu_i p_i$, giving
\eqref{eq:exact_trace_balance}. Since
$|\mu_i p_i|\le\gamma_{\max}\tau_\star^{-1}\|\mathbf a_i\|^2$, Gaussian
column moments provide uniform integrability, so weak convergence of the
marked empirical law and the scalar KKT relation give \eqref{eq:balance_eta}.

For the noise balance we use the following second-order Gaussian
Stein identity. If $\mathbf z\sim\mathcal N(0,\sigma^2\mathbf I_m)$ and
$\mathbf f\in W^{1,2}$ has symmetric weak Jacobian $\mathbf J_f$, then
\begin{equation}
\mathbb E[(\mathbf z^{\mathsf T}\mathbf f-
\sigma^2\operatorname{tr}\mathbf J_f)^2]
=\sigma^2\mathbb E\|\mathbf f\|^2+
\sigma^4\mathbb E\|\mathbf J_f\|_{\mathrm F}^2.
\label{eq:C_second_order_stein}
\end{equation}
This is the second-order Gaussian Stein identity; its weakly differentiable
form follows by Sobolev approximation \cite{Stein1981,BellecZhang2021}. Conditional on $(\mathbf A,\zeta)$,
Lemma~\ref{lem:sobolev_residual} applies it to the complete adaptive map
$\mathbf v\mapsto\mathbf r(\mathbf A\mathbf x+\mathbf v)$. With
\eqref{eq:A_moment_bounds},
\begin{equation}
\mathbb E\left[\left(m^{-1}\mathbf v^{\mathsf T}\mathbf r-
\sigma_0^2h_n\right)^2\right]=O(m^{-1}),
\end{equation}
which proves \eqref{eq:stein_noise} without any independence assumption
between $\mathbf v$ and $\mathbf r$. Multiplying \eqref{eq:noise_identity} by
$\mathbf r^{\mathsf T}/m$ gives
\begin{equation}
m^{-1}\mathbf v^{\mathsf T}\mathbf r
=\tau V_n+m^{-1}\sum_i(\mu_i-x_i)p_i.
\end{equation}
The same Gaussian moment bounds give uniform integrability of
$(\mu_I-x_I)p_I$, and passage to the limiting marked law yields
\eqref{eq:balance_noise}.

\section{Adaptive-Response Closure}
\label{app:susceptibility}

This appendix proves Theorem~\ref{thm:susceptibility}. We first derive the
finite-dimensional profiled derivative and then identify its large-system
adaptive response through a random-endpoint response-trace argument.

\subsection{Finite-Dimensional Profiled Coordinate Response}

Fix an active coordinate $i$ and write $\mathcal S=\{i\}\cup\mathcal C$.
With
\begin{equation}
\mathbf G_{\mathcal S}=
\begin{bmatrix}d_i&\mathbf c_i^{\mathsf T}\\
\mathbf c_i&\mathbf G_{\mathcal C}\end{bmatrix},\quad
\mathbf w_i=\operatorname{diag}(\mathbf p_{\mathcal C})\mathbf c_i,\quad
\mathbf z_i=\mathbf c_i^{\circ2},
\end{equation}
set
$\mathbf K_i=2\operatorname{diag}(\mathbf p_{\mathcal C})\mathbf G_{\mathcal C}
\operatorname{diag}(\mathbf p_{\mathcal C})-\mathbf G_{\mathcal C}\circ\mathbf G_{\mathcal C}$.
Since $p_i^2=d_i$,
\begin{equation}
\mathbf H_{\mathcal S}=
\begin{bmatrix}d_i^2&\boldsymbol{\xi}_i^{\mathsf T}\\
\boldsymbol{\xi}_i&\mathbf K_i\end{bmatrix},\qquad
\boldsymbol{\xi}_i=2p_i\mathbf w_i-\mathbf z_i.
\end{equation}
Define
$\vartheta_i=d_i-2\mathbf w_i^{\mathsf T}\mathbf K_i^{-1}\mathbf w_i$,
$\psi_i=\mathbf z_i^{\mathsf T}\mathbf K_i^{-1}\mathbf w_i$, and
$\phi_i=\mathbf z_i^{\mathsf T}\mathbf K_i^{-1}\mathbf z_i$.
The Schur complement of $\mathbf K_i$ is
$d_i^2-\boldsymbol{\xi}_i^{\mathsf T}\mathbf K_i^{-1}\boldsymbol{\xi}_i$.
We denote this profiled active-coordinate KKT curvature by $\kappa_i$; at a
regular support transition it is the same one-dimensional profiled curvature
used in Appendix~B. Substituting the
definitions above gives
\begin{equation}
    \kappa_i=d_i(2\vartheta_i-d_i)+4p_i\psi_i-\phi_i.
    \label{eq:D_kappa}
\end{equation}
Perturbing $\mathbf y(t)=\mathbf y+t\mathbf a_i$ and differentiating gives
\begin{equation}
\mathbf H_{\mathcal S}
\begin{bmatrix}\dot\gamma_i\\ \dot{\boldsymbol\gamma}_{\mathcal C}\end{bmatrix}
=\begin{bmatrix}2p_id_i\\2\mathbf w_i\end{bmatrix},\qquad
\dot\gamma_i=\frac{2p_i\vartheta_i+2\psi_i}{\kappa_i},
\end{equation}
and $\dot p_i=\vartheta_i-(p_i\vartheta_i+\psi_i)\dot\gamma_i$.
Since $\mu_i=\gamma_ip_i$,
\begin{equation}
\begin{aligned}
    \dot\mu_i
    =\gamma_i\vartheta_i+[p_i(1-\gamma_i\vartheta_i)-\gamma_i\psi_i]\dot\gamma_i.
\end{aligned}
    \label{eq:D_mudot}
\end{equation}
The identities above are finite-dimensional. From this point onward, the
$O_{\mathrm P}$ and $o_{\mathrm P}$ statements in the coordinatewise asymptotics refer to a uniformly
sampled index $I$, and active-coordinate statements are understood on
$\{\gamma_I>0\}$. The nuisance block $\mathbf K_I$ is a principal submatrix
of $\mathbf H_{\mathcal S}$, hence
$\mathbf K_I\succeq c_H\mathbf I$. Appendix~B gives
$\|\mathbf c_I\|_\infty=O_{\mathrm P}(\rho_n)$ and
$\|\mathbf c_I\|_2=O_{\mathrm P}(1)$. Together with the uniform bound on
$\|\mathbf D_{\mathbf p}\|_{\mathrm{op}}$ this gives
$\psi_I=O_{\mathrm P}(\rho_n)$ and $\phi_I=O_{\mathrm P}(\rho_n^2)$.
Moreover, the scalar Schur complement satisfies
$\kappa_I^{-1}=(\mathbf H_{\mathcal S}^{-1})_{II}$, so
$\kappa_I\ge c_H$. Equation~\eqref{eq:D_kappa} therefore implies
\begin{equation}
    d_I(2\vartheta_I-d_I)\ge c_H/2,
    \label{eq:D_exact_denominator_gap}
\end{equation}
with probability tending to one on $\{\gamma_I>0\}$. Since
$d_I\le \tau_\star^{-1}\|\mathbf a_I\|^2=O_{\mathrm P}(1)$, the exact denominator
$2\vartheta_I-d_I$ is separated from zero in probability. We may therefore
substitute the estimates for $\psi_I$ and $\phi_I$ into
\eqref{eq:D_kappa}--\eqref{eq:D_mudot} without dividing by a vanishing
quantity. Therefore, for a uniformly sampled active index,
\begin{equation}
    \mathbf a_I^{\mathsf T}\nabla_{\mathbf y}\mu_I
    =\frac{\vartheta_I(2-\gamma_I d_I)}{2\vartheta_I-d_I}+o_{\mathrm P}(1).
    \label{eq:mu_directional}
\end{equation}

\subsection{Adaptive-Response Trace Transfer}

Remove the $i$th covariance term while retaining the full nuisance state:
\begin{equation}
\mathbf R_i^{\mathrm F}=(\tau\mathbf I_m+\mathbf A_{\mathcal C}
\boldsymbol\Gamma_{\mathcal C}^{\mathrm F}\mathbf A_{\mathcal C}^{\mathsf T})^{-1},
\qquad
s_i^{\mathrm F}=\mathbf a_i^{\mathsf T}\mathbf R_i^{\mathrm F}\mathbf a_i,
\end{equation}
with $\mathbf c_i^{\mathrm R}=\mathbf A_{\mathcal C}^{\mathsf T}\mathbf R_i^{\mathrm F}\mathbf a_i$.
At the algebraic pseudo-response point,
$\mathbf r^{\mathrm F}=\mathbf R_i^{\mathrm F}\mathbf u_i^{\mathrm{ps}}$ and
$\mathbf p_{\mathcal C}^{\mathrm R}=\mathbf p_{\mathcal C}^{\mathrm F}$. With
$\mathbf w_i^{\mathrm R}=\operatorname{diag}(\mathbf p_{\mathcal C}^{\mathrm R})\mathbf c_i^{\mathrm R}$ and
$\mathbf G_i^{\mathrm R}=\mathbf A_{\mathcal C}^{\mathsf T}\mathbf R_i^{\mathrm F}\mathbf A_{\mathcal C}$,
define
\begin{equation}
\mathbf K_i^{\mathrm R}=2\operatorname{diag}(\mathbf p_{\mathcal C}^{\mathrm R})\mathbf G_i^{\mathrm R}
\operatorname{diag}(\mathbf p_{\mathcal C}^{\mathrm R})-\mathbf G_i^{\mathrm R}\circ\mathbf G_i^{\mathrm R},
\end{equation}
and
$\ell_i^{\mathrm R}=s_i^{\mathrm F}-2(\mathbf w_i^{\mathrm R})^{\mathsf T}(\mathbf K_i^{\mathrm R})^{-1}\mathbf w_i^{\mathrm R}$.
We call $\ell_i^{\mathrm R}$ an algebraic profiled response rather than a Jacobian,
because the full nuisance vector is only an approximate KKT solution of the
reduced pseudo-response problem.

Let $e_i:=x_i-\mu_i^{\mathrm F}$ and define
$\ell_i^{\mathrm{ps}}:=\mathbf a_i^{\mathsf T}\mathbf J_i^{\mathrm{ps}}\mathbf a_i$,
where $\mathbf J_i^{\mathrm{ps}}$ is the genuine reduced Jacobian at $t=e_i$.
By \eqref{eq:B_gamma_transfer}, the algebraic and genuine pseudo-response
nuisance vectors are $O_{\mathrm P}(\rho_n)$ apart. On the no-crossing event in the
second clause of Assumption~\ref{ass:derivative}, the correction remains in a
single smooth nuisance branch segment, so its two endpoints belong to the same nuisance
active face. Interpolate linearly between the two nuisance hyperparameter
vectors on this fixed face while keeping $\mathbf u_i^{\mathrm{ps}}$ fixed.
Let $\Delta\boldsymbol\gamma_{\mathcal C}:=
\boldsymbol\gamma_{\mathcal C}^{\mathrm F}-\boldsymbol\gamma_{\mathcal C}^{\mathrm{ps}}$ and
$\boldsymbol\gamma_{\mathcal C}(u):=\boldsymbol\gamma_{\mathcal C}^{\mathrm{ps}}
+u\Delta\boldsymbol\gamma_{\mathcal C}$, $u\in[0,1]$; then
$\|\Delta\boldsymbol\gamma_{\mathcal C}\|_2=O_{\mathrm P}(\rho_n)$. Here $s_i,\mathbf c_i,\mathbf G_i,\mathbf w_i,\mathbf K_i$ denote their current-resolvent counterparts.

Standard resolvent differentiation gives
\begin{equation}
    \dot{\mathbf R}
    =-\mathbf R\mathbf A_{\mathcal C}
      \operatorname{diag}(\Delta\boldsymbol\gamma_{\mathcal C})
      \mathbf A_{\mathcal C}^{\mathsf T}\mathbf R.
\end{equation}
Starting from the genuine pseudo-response endpoint, where Appendix~B gives
$\|\mathbf c_i\|_\infty=O_{\mathrm P}(\rho_n)$ and
$\|\mathbf c_i\|_2=O_{\mathrm P}(1)$, Gronwall's inequality yields the same bounds
uniformly along the interpolation. The nuisance residual projections satisfy
$\dot{\mathbf p}_{\mathcal C}
=-\mathbf G_i(\Delta\boldsymbol\gamma_{\mathcal C}\circ\mathbf p_{\mathcal C})$;
since their maximum norm is $O_{\mathrm P}(1)$ at the genuine stationary endpoint,
another one-dimensional Gronwall bound preserves
$\|\mathbf p_{\mathcal C}\|_\infty=O_{\mathrm P}(1)$ on the comparison segment.
Consequently,
$|\dot s_i|=O_{\mathrm P}(\rho_n^2)$,
$\|\dot{\mathbf p}_{\mathcal C}\|_\infty=O_{\mathrm P}(\rho_n)$,
$\|\dot{\mathbf c}_i\|_2=O_{\mathrm P}(\rho_n^2)$, and
$\|\dot{\mathbf w}_i\|_2=O_{\mathrm P}(\rho_n)$.
Moreover,
$\|\dot{\mathbf G}_i\|_{\mathrm F}=O_{\mathrm P}(\rho_n)$. Since the entries of
$\mathbf G_i$ are uniformly $O_{\mathrm P}(1)$, the Hadamard-square term obeys
\begin{equation}
    \left\|\frac{d}{du}(\mathbf G_i\circ\mathbf G_i)\right\|_{\mathrm{op}}
    \leq
    \left\|\frac{d}{du}(\mathbf G_i\circ\mathbf G_i)\right\|_{\mathrm F}
    =O_{\mathrm P}(\rho_n),
\end{equation}
so $\|\dot{\mathbf K}_i\|_{\mathrm{op}}=O_{\mathrm P}(\rho_n)$.
At the genuine pseudo-response stationary point, the nuisance Hessian is a
principal active block and is therefore bounded below by $c_H\mathbf I$.
The perturbation above keeps $\mathbf K_i\succeq(c_H/2)\mathbf I$ uniformly
with probability tending to one. Differentiating
$\ell_i=s_i-2\mathbf w_i^{\mathsf T}\mathbf K_i^{-1}\mathbf w_i$ then gives
$|\dot\ell_i|=O_{\mathrm P}(\rho_n)$, and hence
\begin{equation}
    \ell_i^{\mathrm R}-\ell_i^{\mathrm{ps}}=O_{\mathrm P}(\rho_n)=o_{\mathrm P}(1).
    \label{eq:D_reaction_transfer}
\end{equation}
\subsection{Random-Endpoint Adaptive-Response Concentration}

Let $\mathbf J_i(t)$ be the nuisance-adaptive residual Jacobian of the
genuine reduced branch already defined in Appendix~B.
For deterministic bounded $t$, Gaussian regression gives
\begin{equation}
    \mathbf a_i^{\mathsf T}\mathbf J_i(t)\mathbf a_i
    -\frac1m\operatorname{tr}\mathbf J_i(t)
    =o_{\mathrm P}(1).
    \label{eq:D_fixed_t}
\end{equation}
The endpoint $e_I=x_I-\mu_I$ is random for a uniformly sampled coordinate.
We first record that the derivative envelope needed for interpolation follows
from Assumptions~\ref{ass:model}--\ref{ass:regular}, rather than being an
additional assumption. On a smooth reduced response segment,
\begin{equation}
    \dot{\boldsymbol\gamma}_{\mathcal S}
    =2\mathbf H_{\mathcal S}^{-1}\mathbf D_{\mathbf p}
      \mathbf A_{\mathcal S}^{\mathsf T}\mathbf Q\mathbf a_I,
\end{equation}
so the Hessian gap, the resolvent bound, and the Gaussian spectral- and
column-norm bounds give $\|\dot{\boldsymbol\gamma}_{\mathcal S}\|_2=O_{\mathrm P}(1)$
uniformly over the smooth segments of any fixed bounded response interval.
The same diagonal-perturbation argument as in Appendix~B gives
$\|\dot{\mathbf Q}\|_{\mathrm F}+\|\dot{\mathbf p}_{\mathcal S}\|_2+\|\dot{\mathbf G}_{\mathcal S}\|_{\mathrm F}+\|\dot{\mathbf H}_{\mathcal S}\|_{\mathrm F}=O_{\mathrm P}(1)$. Differentiating \eqref{eq:J_formula} therefore yields
\begin{equation}
    \sup_{t\in[-B,B]\,\mathrm{smooth}}
    \left\|\frac{d}{dt}\mathbf J_I(t)\right\|_{\mathrm{op}}
    =O_{\mathrm P}(1).
    \label{eq:D_J_derivative_derived}
\end{equation}
Thus the interpolation regularity required below is a consequence of the
regular branch bounds rather than an additional assumption.

For every $\xi>0$, first choose a fixed $B$ such that
$\limsup_n\Pr(|e_I|>B)<\xi$. On $|e_I|\leq B$, choose only in the proof a deterministic grid
$\mathcal G_n(B)\subset[-B,B]$ of mesh
$\varepsilon_n=\min\{m^{-1},\omega_n^2\}$.
Then $\varepsilon_n=o(\omega_n)$,
$\varepsilon_n\to0$, and
$\log|\mathcal G_n(B)|=o(m)$ by Assumption~\ref{ass:derivative}.
Let
\begin{equation}
\Delta_i^{\mathrm{quad}}(t)
:=\mathbf a_i^{\mathsf T}\mathbf J_i(t)\mathbf a_i
-m^{-1}\operatorname{tr}\mathbf J_i(t).
\end{equation}
For deterministic $t$, condition on the reduced response and selection randomization.
The Gaussian regression representation of Appendix~B writes
$\mathbf a_i=\mathbf m_i(t)+\varsigma_i(t)\,\mathbf z$ with
$\mathbf z\sim\mathcal N(0,\mathbf I_m)$,
$\varsigma_i^2(t)=m^{-1}+O(m^{-2})$, and
$\|\mathbf m_i(t)\|=O_{\mathrm P}(m^{-1/2})$. Fix a sufficiently large constant $M$. The operator bound in Appendix~A is
controlled by $\|\mathbf A_{-i}\|_{\mathrm{op}}$, the maximum reduced-column
norm, $\tau_\star$, and $c_H$, and is therefore uniform in $t$. Likewise,
$\sup_{|t|\leq B}\|\mathbf m_i(t)\|\leq C m^{-1/2}$ on a Gaussian norm
event whose probability tends to one. Let $\mathcal E_n^{\mathrm{glob}}(B,M)$
denote such a common high-probability envelope. For each deterministic grid
point $t$, also let $\mathcal E_{n,t}(M)$ be the event
\begin{align}
  \|\mathbf J_i(t)\|_{\mathrm{op}}&\leq M,
  &\|\mathbf J_i(t)\|_{\mathrm F}^2&\leq Mm,\\
  \|\mathbf m_i(t)\|&\leq Mm^{-1/2}.&&
\end{align}
The event $\mathcal E_{n,t}(M)$ is $\mathcal G_i(t)$-measurable, while
$\mathcal E_n^{\mathrm{glob}}(B,M)\subseteq
 \bigcap_{t\in\mathcal G_n(B)}\mathcal E_{n,t}(M)$ and
$\Pr((\mathcal E_n^{\mathrm{glob}})^c)=o(1)$ after $M$ is chosen large.
The concentration is conditioned separately at each deterministic grid point,
not on the collection $\{\mathbf u_i(t):t\in\mathcal G_n(B)\}$, which
would reveal additional information about $\mathbf a_i$.

For each fixed grid point, conditional Gaussian quadratic-form concentration
\cite{RudelsonVershynin2013,Vershynin2018} on the
$\mathcal G_i(t)$-measurable event $\mathcal E_{n,t}(M)$ gives, for
fixed $\epsilon_0>0$ and constants $C_M,c_M>0$ depending only on $M$,
\begin{equation}
  \Pr\!\left(
    |\Delta_i^{\mathrm{quad}}(t)|>\epsilon_0,
    \mathcal E_{n,t}(M)
  \right)
  \leq C_Me^{-c_Mm\epsilon_0^2},
\end{equation}
where the $O(m^{-1})$ conditional bias is absorbed into the threshold.
Removing the conditioning and taking a union bound gives
\begin{equation}
\begin{aligned}
\Pr\!\left(\max_{t\in\mathcal G_n(B)}
|\Delta_i^{\mathrm{quad}}(t)|>\epsilon_0\right)
&\le o(1)+|\mathcal G_n(B)|C_Me^{-c_Mm\epsilon_0^2}\\
&\to0.
\end{aligned}
\end{equation}
Let $t_n$ be the nearest grid point to $e_I$. By the first clause of
Assumption~\ref{ass:derivative}, $e_I$ is farther than $\omega_n$ from the
response-transition set with probability tending to one. Since
$\varepsilon_n=o(\omega_n)$, $e_I$ and $t_n$ then lie on the same smooth
branch segment. Equation~\eqref{eq:D_J_derivative_derived} and
$\varepsilon_n\to0$ give
\begin{equation}
    \|\mathbf J_I(e_I)-\mathbf J_I(t_n)\|_{\mathrm{op}}=o_{\mathrm P}(1).
\end{equation}
Since $\|\mathbf a_I\|^2=O_{\mathrm P}(1)$, the quadratic-form difference is also
$o_{\mathrm P}(1)$, while the normalized trace difference is bounded by the same
operator norm. Therefore
\begin{equation}
    \mathbf a_I^{\mathsf T}\mathbf J_I(e_I)\mathbf a_I
    -\frac1m\operatorname{tr}\mathbf J_I(e_I)
    =o_{\mathrm P}(1).
    \label{eq:D_random_endpoint_quad}
\end{equation}
The adaptive-response trace stability from Appendix~B gives
\begin{equation}
    \frac1m\operatorname{tr}\mathbf J_I(e_I)
    =h_n+o_{\mathrm P}(1).
\end{equation}
Consequently
$\ell_I^{\mathrm{ps}}=h_n+o_{\mathrm P}(1)$. On
$\{\gamma_I^{\mathrm F}>0\}$, the no-crossing clause of
Assumption~\ref{ass:derivative} and \eqref{eq:D_reaction_transfer} then give
\begin{equation}
    \ell_I^{\mathrm R}=h_n+o_{\mathrm P}(1).
    \label{eq:D_ell_selfscreen}
\end{equation}
Letting $\xi\downarrow0$ removes the truncation.

Let $\beta_i=(1+\gamma_i s_i^{\mathrm F})^{-1}$ and
$\lambda_i=\gamma_i\beta_i$. Sherman--Morrison gives
$d_i=\beta_i s_i^{\mathrm F}$, $\mathbf c_i=\beta_i\mathbf c_i^{\mathrm R}$, and
$\mathbf w_i=\beta_i\mathbf w_i^{\mathrm R}$.
Moreover,
\begin{equation}
\begin{aligned}
    \mathbf K_i
    ={}&\mathbf K_i^{\mathrm R}
    -2\lambda_i\mathbf w_i^{\mathrm R}(\mathbf w_i^{\mathrm R})^{\mathsf T}
    +\mathbf E_i^{\mathrm K},
\end{aligned}
\end{equation}
where, with $\mathbf z_i^{\mathrm R}=(\mathbf c_i^{\mathrm R})^{\circ2}$,
\begin{equation}
    \mathbf E_i^{\mathrm K}
    =2\lambda_i\operatorname{diag}(\mathbf c_i^{\mathrm R})
      \mathbf G_i^{\mathrm R}\operatorname{diag}(\mathbf c_i^{\mathrm R})
     -\lambda_i^2\mathbf z_i^{\mathrm R}(\mathbf z_i^{\mathrm R})^{\mathsf T}.
\end{equation}
Since $\|\mathbf c_i^{\mathrm R}\|_\infty=O_{\mathrm P}(\rho_n)$,
$\|\mathbf G_i^{\mathrm R}\|_{\mathrm{op}}=O_{\mathrm P}(1)$, and
$\|\mathbf z_i^{\mathrm R}\|_2=O_{\mathrm P}(\rho_n)$,
$\|\mathbf E_i^{\mathrm K}\|_{\mathrm{op}}=O_{\mathrm P}(\rho_n^2)=o_{\mathrm P}(1)$.
Let
$\mathbf K_i^0:=\mathbf K_i^{\mathrm R}-2\lambda_i\mathbf w_i^{\mathrm R}
(\mathbf w_i^{\mathrm R})^{\mathsf T}$. Since the full nuisance block
$\mathbf K_i$ is a principal submatrix of
$\mathbf H_{\mathcal S}\succeq c_H\mathbf I$ and
$\|\mathbf E_i^{\mathrm K}\|_{\mathrm{op}}=o_{\mathrm P}(1)$,
$\mathbf K_i^0\succeq(c_H/2)\mathbf I$ with probability tending to one.
The Sherman--Morrison denominator for this rank-one update is
\begin{equation}
1-2\lambda_i(\mathbf w_i^{\mathrm R})^{\mathsf T}
(\mathbf K_i^{\mathrm R})^{-1}\mathbf w_i^{\mathrm R}
=\beta_i(1+\gamma_i\ell_i^{\mathrm R}).
\end{equation}
The spectral gaps of $\mathbf K_i^{\mathrm R}$ and $\mathbf K_i^0$, together with the
bounded operator norm of $\mathbf K_i^{\mathrm R}$, keep this denominator uniformly
away from zero. Indeed, since $\lambda_i\geq0$,
\begin{equation}
\begin{aligned}
    1-2\lambda_i(\mathbf w_i^{\mathrm R})^{\mathsf T}
      (\mathbf K_i^{\mathrm R})^{-1}\mathbf w_i^{\mathrm R}\geq
      \frac{\lambda_{\min}(\mathbf K_i^0)}
           {\|\mathbf K_i^{\mathrm R}\|_{\mathrm{op}}}.
\end{aligned}
\end{equation}
For a uniformly sampled coordinate $I$, the bound
$s_I^{\mathrm F}\leq\tau_\star^{-1}\|\mathbf a_I\|^2$ and bounded $\gamma_I$ imply
that $\beta_I$ is bounded away from zero in probability. Hence, on
$\{\gamma_I>0\}$, $1+\gamma_I\ell_I^{\mathrm R}$ is also bounded away from zero in
probability. Since
$\mathbf K_i^{-1}=(\mathbf K_i^0)^{-1}+o_{\mathrm P}(1)$ in operator norm, the
rank-one Sherman--Morrison identity gives
\begin{equation}
    (\mathbf w_i^{\mathrm R})^{\mathsf T}(\mathbf K_i^0)^{-1}\mathbf w_i^{\mathrm R}
    =\frac{(\mathbf w_i^{\mathrm R})^{\mathsf T}(\mathbf K_i^{\mathrm R})^{-1}\mathbf w_i^{\mathrm R}}
    {1-2\lambda_i(\mathbf w_i^{\mathrm R})^{\mathsf T}
      (\mathbf K_i^{\mathrm R})^{-1}\mathbf w_i^{\mathrm R}}.
\end{equation}
Substitution into
$\vartheta_i=d_i-2\beta_i^2(\mathbf w_i^{\mathrm R})^{\mathsf T}
\mathbf K_i^{-1}\mathbf w_i^{\mathrm R}$ and the identity defining $\ell_i^{\mathrm R}$
yield
\begin{equation}
    \vartheta_i
    =\frac{\ell_i^{\mathrm R}}{1+\gamma_i\ell_i^{\mathrm R}}+o_{\mathrm P}(1).
\end{equation}
Together with the adaptive-response transfer proved above, this yields
\begin{equation}
    \vartheta_I=\frac{h_n}{1+h_n\gamma_I}+o_{\mathrm P}(1).
    \label{eq:theta_selfscreen}
\end{equation}
Combining \eqref{eq:theta_selfscreen} with \eqref{eq:di_selfscreen} gives,
for a uniformly sampled coordinate $I$ on
$\{\gamma_I>0\}$,
\begin{equation}
    \mathbf a_I^{\mathsf T}\nabla_{\mathbf y}\mu_I
    =h_n\frac{2+\eta_n\gamma_I}
    {(2h_n-\eta_n)+\eta_nh_n\gamma_I}+o_{\mathrm P}(1).
    \label{eq:D_coord_susceptibility}
\end{equation}
A strictly inactive coordinate has zero local derivative.

The denominator in \eqref{eq:D_coord_susceptibility} is controlled by the
finite-dimensional Schur gap rather than by an assumption such as
$2h_n>\eta_n$. Equation~\eqref{eq:D_exact_denominator_gap} already shows
that $2\vartheta_I-d_I$ is separated from zero in probability on a typical
active coordinate. In addition, the stable rank-one denominator above and
\eqref{eq:D_ell_selfscreen} keep $1+h_n\gamma_I$ separated from zero, while
$1+\eta_n\gamma_I\geq1$. Therefore
\begin{equation}
2\frac{h_n}{1+h_n\gamma_i}
-\frac{\eta_n}{1+\eta_n\gamma_i}
=
\frac{(2h_n-\eta_n)+\eta_nh_n\gamma_i}
{(1+h_n\gamma_i)(1+\eta_n\gamma_i)},
\end{equation}
is also separated from zero with probability tending to one for a uniformly
sampled active coordinate. Consequently every active limiting mark satisfies
$ (2h-\eta)+\eta h\Gamma\neq0$, and the rational adaptive-response factor is
well-defined on the active part of the limiting marked law. No sign condition
on $2h_n-\eta_n$ is used.

There is one boundary issue to exclude before taking empirical limits. From
Section~\ref{sec:model}, every regular limiting state has $V>0$. Hence the
conditional scalar law $U\mid(X=x,\mathcal M)\sim\mathcal N(hx,V)$ has a
continuous density, so $\nu_b\{U^2=\eta\}=0$ almost surely. To track activity,
augment the marks by
$A_i^{\mathrm{act}}:=\mathbf 1_{\{\gamma_i>0\}}$.
Along any further subsequence this binary mark has a limiting mark
$A^{\mathrm{act}}\in\{0,1\}$. On $\{A^{\mathrm{act}}=1\}$, the active
finite-dimensional KKT relation together with the leave-one-out resolvent limits gives
$P^2=\eta/(1+\eta\Gamma)$ and $U=P(1+h\Gamma)$. Therefore
$\{A^{\mathrm{act}}=1,\Gamma=0\}$ is contained in $\{U^2=\eta\}$ and
has zero limiting mass. Conversely, $A^{\mathrm{act}}=0$ implies
$\Gamma=0$. Hence $A^{\mathrm{act}}=\mathbf 1_{\{\Gamma>0\}}$ almost surely under the
limiting marked law. This justifies the activity
indicator in Theorem~\ref{thm:susceptibility} without a boundary term.

For the empirical trace, first note that the exact directional response
has a uniform moment envelope. Along $\mathbf y+t\mathbf a_i$,
\begin{equation}
\|\dot{\boldsymbol\gamma}_{\mathcal S}\|_2
\leq2c_H^{-1}\|\mathbf D_{\mathbf p}\|_{\mathrm{op}}
\|\mathbf A\|_{\mathrm{op}}\|\mathbf Q\|_{\mathrm{op}}
\|\mathbf a_i\|,
\end{equation}
and
\begin{equation}
|\dot p_i|
\leq d_i+\|\mathbf A\|_{\mathrm{op}}\|\mathbf Q\|_{\mathrm{op}}
\|\mathbf a_i\|\|\mathbf D_{\mathbf p}\|_{\mathrm{op}}
\|\dot{\boldsymbol\gamma}_{\mathcal S}\|_2.
\end{equation}
Thus $|\dot\mu_i|$ is bounded by a fixed-degree polynomial in Gaussian
spectral and column norms. Consequently, for some $r>1$,
\begin{equation}
\sup_n\mathbb E
|\mathbf a_I^{\mathsf T}\nabla_{\mathbf y}\mu_I|^r<\infty.
\end{equation}
This supplies uniform integrability without imposing a separate moment
assumption on the scalar rational expression.

To pass from a typical coordinate to the empirical trace, augment the marked
measure by $A_i^{\mathrm{act}}$ and
$S_i^{\mathrm{dir}}:=\mathbf a_i^{\mathsf T}\nabla_{\mathbf y}\mu_i$, taking
$S_i^{\mathrm{dir}}=0$ inside an inactive face. The moment bound above gives
tightness and uniform integrability. By \eqref{eq:D_coord_susceptibility} and
the Schur-gap control, the limit satisfies
$S^{\mathrm{dir}}=h\Psi_{\eta,h}(\Gamma)$, with
$A^{\mathrm{act}}=\mathbf 1_{\{\Gamma>0\}}$ almost surely. Differentiating
\eqref{eq:reconstruction_identity} and taking normalized traces gives
$1=\tau h_n+(n/m)n^{-1}\sum_iS_i^{\mathrm{dir}}$; passage to the augmented
limit and $n/m\to1/\delta$ yield \eqref{eq:balance_h}.

\section{Joint Noise Variance Estimation and Reconstruction Error}
\label{app:noise}

This appendix proves the noise-sensitivity and MSE statements. Bars denote
fixed-$\tau$ hyperparameter-profiled quantities $(\overline{\mathbf r},
\overline{\boldsymbol\mu})$, with $\overline{\mathbf J}:=D_{\mathbf y}
\overline{\mathbf r}$ and $\overline h_n:=m^{-1}\operatorname{tr}\overline{\mathbf J}$.
The selected full joint value is $\widehat\tau$; at that point
$\mathbf J^{\mathrm{jt}}=\mathbf J_{\mathrm{joint}}$ and
$h_n^{\mathrm{joint}}:=m^{-1}\operatorname{tr}\mathbf J_{\mathrm{joint}}$. Unless a joint decoration is shown, local quantities such as $\mathbf Q$, $\mathbf p$, and $\mathbf H_{\mathcal S}$ in fixed-$\tau$ derivatives are evaluated at the same hyperparameter-profiled state.

\subsection{Noise-Variance Sensitivity}

Differentiating the active KKT equations at fixed $\tau$ gives
\begin{equation}
\begin{aligned}
\mathbf H_{\mathcal S}\partial_\tau\overline{\boldsymbol\gamma}_{\mathcal S}
&=-\partial_\tau\mathbf g_{\mathcal S},\\
\partial_\tau\mathbf g_{\mathcal S}
&=-\operatorname{diag}(\mathbf A_{\mathcal S}^{\mathsf T}\mathbf Q^2
\mathbf A_{\mathcal S})
+2\mathbf D_{\mathbf p}\mathbf A_{\mathcal S}^{\mathsf T}\mathbf Q\mathbf r.
\end{aligned}
\end{equation}
The second line has $i$th component
$-\mathbf a_i^{\mathsf T}\mathbf Q^2\mathbf a_i
+2p_i\mathbf a_i^{\mathsf T}\mathbf Q\mathbf r$.
The first term has $\ell_2$ norm $O_{\mathrm P}(\sqrt m)$, while the second is bounded
by
$2\|\mathbf D_{\mathbf p}\|_{\mathrm{op}}
\|\mathbf A\|_{\mathrm{op}}\|\mathbf Q\|_{\mathrm{op}}
\|\mathbf r\|=O_{\mathrm P}(\sqrt m)$. Hence
$\|\partial_\tau\overline{\boldsymbol\gamma}_{\mathcal S}\|_2=O_{\mathrm P}(\sqrt m)$.
Set $\mathbf r_\tau:=\partial_\tau\overline{\mathbf r}$. Then
$\mathbf r_\tau=-\mathbf Q\mathbf r-\mathbf Q\mathbf A_{\mathcal S}
\mathbf D_{\mathbf p}\partial_\tau\overline{\boldsymbol\gamma}_{\mathcal S}$,
so $\|\mathbf r_\tau\|=O_{\mathrm P}(\sqrt m)$. Mixed differentiation of the
profiled $\tau$-derivative gives
\begin{align}
    \nabla_{\mathbf y}\widehat\tau
    &=-\frac{2\mathbf r_\tau}{\partial S_\tau/\partial\tau},
    \label{eq:tau_gradient}\\
    \mathbf J_{\mathrm{joint}}
    &=\overline{\mathbf J}
      -\frac{2\mathbf r_\tau\mathbf r_\tau^{\mathsf T}}
      {\partial S_\tau/\partial\tau}.
    \label{eq:J_joint}
\end{align}
The same
Gaussian moment bounds used in Appendix~A give the moment bounds
\begin{equation}
\begin{aligned}
\mathbb E\left\|
  \frac{\partial\overline{\boldsymbol\gamma}_{\mathcal S}}{\partial\tau}
\right\|_2^2&=O(m),&
\mathbb E\|\mathbf r_\tau\|^2&=O(m),\\
\mathbb E\|\mathbf r_\tau\|^4&=O(m^2),&&
\end{aligned}
\label{eq:tau_response_moments}
\end{equation}
with the same orders uniformly over every fixed compact regular response
family used below. Indeed, the deterministic inverse-Hessian bound reduces
these quantities to fixed-degree Gaussian spectral-, column-, and residual-norm moments. Together with \eqref{eq:tau_curvature},
\eqref{eq:tau_gradient} gives
$\|\nabla_{\mathbf y}\widehat\tau\|_2=O_{\mathrm P}(m^{-1/2})$.
For a uniformly sampled $I$ (with $\mathbb E_I$ denoting conditional expectation over $I$),
\begin{equation}
\begin{aligned}
    \mathbb E_I[(\mathbf a_I^{\mathsf T}\mathbf r_\tau)^2
      \mid\mathbf A,\mathbf v]
    \leq\frac{\|\mathbf A\|_{\mathrm{op}}^2}{n}
      \|\mathbf r_\tau\|^2
    =O_{\mathrm P}(1).
\end{aligned}
\end{equation}
Thus $D_{\mathbf a_I}\widehat\tau=O_{\mathrm P}(m^{-1})$ for a uniformly sampled
direction. The rank-one correction \eqref{eq:J_joint} then gives
$m^{-1}\operatorname{tr}(\mathbf J_{\mathrm{joint}}-\overline{\mathbf J})
=O_{\mathrm P}(m^{-1})$.

We also use smooth fixed-$\tau$ sensitivities on the regular profiled branch:
$|\partial_\tau\eta_n|+|\partial_\tau V_n|+|\partial_\tau\overline h_n|=O_{\mathrm P}(1)$.
The first two follow from the resolvent identity and
$|\partial_\tau V_n|=2m^{-1}|\mathbf r^{\mathsf T}\mathbf r_\tau|$. For the third,
differentiating \eqref{eq:h_trace} gives
$\|\partial_\tau\mathbf H_{\mathcal S}\|_{\mathrm F}+\|\partial_\tau\mathbf L_{\mathcal S}\|_{\mathrm F}=O_{\mathrm P}(\sqrt m)$; with
$\|\mathbf H_{\mathcal S}^{-1}\|_{\mathrm{op}}=O(1)$ and
$\|\mathbf L_{\mathcal S}\|_{\mathrm{op}}=O_{\mathrm P}(1)$, the normalized trace derivative is $O_{\mathrm P}(1)$. For a uniformly sampled $I$,
$\partial_\tau\overline\mu_I=(\partial_\tau\overline\gamma_I)p_I+\overline\gamma_I\mathbf a_I^{\mathsf T}\mathbf r_\tau=O_{\mathrm P}(1)$, using
$n^{-1}\|\partial_\tau\overline{\boldsymbol\gamma}\|_2^2=O_{\mathrm P}(1)$ and the
average directional bound above.

\subsection{Direct Jointly Profiled Cavity Transport}

The profiled $\tau$ solution is data dependent, so the joint extension cannot be
obtained by inserting a random value of $\tau$ into a pointwise fixed-$\tau$
cavity theorem. We instead repeat the cavity transport on the jointly profiled
branch itself.

\begin{lemma}[Jointly profiled cavity transport]
\label{lem:joint_profiled_cavity}
Suppose Assumptions~\ref{ass:model}--\ref{ass:continuation} and
Assumption~\ref{ass:joint_regular} hold. Let $I$ be uniformly sampled. On the genuine
reduced joint branch driven by
$\mathbf u_I(t)=\mathbf y_{-I}+t\mathbf a_I$, let
$\widehat\tau_{I,-}(t)$, $\boldsymbol\gamma^{\mathrm{jt}}_{I,-}(t)$,
$\mathbf Q^{\mathrm{jt}}_{I,-}(t)$, $\mathbf r^{\mathrm{jt}}_{I,-}(t)$, and
$\mathbf J^{\mathrm{jt}}_{I,-}(t)$ denote the jointly selected value of $\tau$,
hyperparameters, resolvent, residual, and complete Jacobian; set
$V^{\mathrm{jt}}_{I,-}:=m^{-1}\|\mathbf r^{\mathrm{jt}}_{I,-}(0)\|^2$,
$h^{\mathrm{jt}}_{I,-}(t):=m^{-1}\operatorname{tr}\mathbf J^{\mathrm{jt}}_{I,-}(t)$ and
$\widehat\tau_{I,-}:=\widehat\tau_{I,-}(0)$. Then, for every fixed $B<\infty$,
\begin{align}
  &Z^{\mathrm{jt}}_{I,-}:=
  \mathbf a_I^{\mathsf T}\mathbf r^{\mathrm{jt}}_{I,-}(0)
  \ \big|\ \mathcal F_{-I}
  \sim \mathcal N(0,V^{\mathrm{jt}}_{I,-}),
  \label{eq:joint_fresh_field}\\
  &\int_{-B}^{B}
  \left|\frac{d}{dt}\widehat\tau_{I,-}(t)\right|dt
  =O_{\mathrm P}(m^{-1}),
  \label{eq:joint_response_tau_motion}\\
  &\sup_{|t|\le B}
  \|\mathbf A_{-I}^{\mathsf T}
      \mathbf Q^{\mathrm{jt}}_{I,-}(t)\mathbf a_I\|_\infty
  =O_{\mathrm P}(\rho_n).
  \label{eq:joint_response_cross}
\end{align}
Moreover, the jointly profiled response transport and the jointly profiled
one-coordinate insertion continuation change the normalized quantities
$\eta_n$, $V_n$, and $h_n$ only by $o_{\mathrm P}(1)$. At the pseudo-response endpoint
$e_I:=x_I-\mu_I^{\mathrm F}$ (with $\mathrm{ps}$ meaning $t=e_I$), the regular joint
correction continuation satisfies
\begin{align}
  \|\boldsymbol\gamma^{\mathrm{jt},\mathrm{ps}}_{-I}
    -\boldsymbol\gamma^{\mathrm F}_{-I}\|_2&=O_{\mathrm P}(\rho_n),
  \label{eq:joint_corr_gamma}\\
  |\widehat\tau^{\mathrm{jt},\mathrm{ps}}_{I}
    -\widehat\tau|&=O_{\mathrm P}(\rho_n/\sqrt m),
  \label{eq:joint_corr_tau}\\
  \left|\mathbf a_I^{\mathsf T}
  \mathbf r^{\mathrm{jt},\mathrm{ps}}_{I}-p_I^{\mathrm F}\right|&=O_{\mathrm P}(\rho_n).
  \label{eq:joint_corr_projection}
\end{align}
Consequently the scalar law and the two-coordinate empirical
scalar-law convergence of Theorem~\ref{thm:scalarization} hold on the jointly selected
branch, with the complete joint adaptive response in place of the fixed-$\tau$ adaptive response.
\end{lemma}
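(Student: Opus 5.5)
The plan is to repeat the fixed-$\tau$ cavity transport of Appendix~\ref{app:scalar} on the jointly profiled branch, treating $\tau$ as one additional free coordinate. The decisive structural fact is that the $\tau$-Hessian has scale $m$ by \eqref{eq:tau_curvature}, whereas each coordinate perturbation moves $S_\tau$ only by $O_{\mathrm P}(1)$. This scale separation makes the $\tau$ motion lower order in every transport.

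\emph{Fresh field.} The first claim, \eqref{eq:joint_fresh_field}, is immediate. The genuine reduced joint branch, including $\widehat\tau_{I,-}(0)$, is selected from $(\mathbf A_{-I},\mathbf y_{-I},\zeta)$ and is therefore $\mathcal F_{-I}$-measurable. By \eqref{eq:fresh_column}, $\mathbf a_I$ is independent of $\mathcal F_{-I}$, so the projection is exactly conditionally Gaussian, as in \eqref{eq:exact_cavity_gaussian}.

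\emph{Motion of $\tau$ along the response path.} For \eqref{eq:joint_response_tau_motion}, apply \eqref{eq:tau_gradient} to the reduced branch, giving
\[
\frac{d}{dt}\widehat\tau_{I,-}(t)=-\frac{2\,\mathbf a_I^{\mathsf T}\mathbf r_\tau}{\partial S_\tau/\partial\tau}.
\]
The denominator is at least $c_\tau m$ by \eqref{eq:tau_curvature}. For the numerator, use the Gaussian regression representation $\mathbf a_I=\mathbf m_I(t)+\varsigma_I(t)\mathbf z$ at deterministic $t$, together with \eqref{eq:tau_response_moments}. This gives $\mathbf a_I^{\mathsf T}\mathbf r_\tau=O_{\mathrm P}(1)$ in $L^1$, uniformly in $t$. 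Fubini and Markov then bound the integral by $O_{\mathrm P}(m^{-1})$. At the finitely many support transitions, Assumption~\ref{ass:continuation} limits their number, and $\widehat\tau$ is continuous across them, so they add nothing.

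\emph{Cross-coupling bound.} For \eqref{eq:joint_response_cross}, I would rerun the Gronwall argument for $\mathbf c_{-I}(t)$ from Appendix~\ref{app:scalar}. Differentiating the resolvent now produces an extra term $-\dot{\widehat\tau}\,\mathbf A_{-I}^{\mathsf T}\mathbf Q^2\mathbf a_I$. Its $\ell_\infty$ norm is $O_{\mathrm P}(\rho_n)\cdot|\dot{\widehat\tau}|$, and by the previous step $|\dot{\widehat\tau}|$ has integral $O_{\mathrm P}(m^{-1})$. The $\gamma$-part is unchanged, because the joint hyperparameter velocity equals the fixed-$\tau$ velocity plus $\partial_\tau\overline{\boldsymbol\gamma}\,\dot{\widehat\tau}$. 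Since $\|\partial_\tau\overline{\boldsymbol\gamma}\|_2=O_{\mathrm P}(\sqrt m)$, the added term is $O_{\mathrm P}(m^{-1/2})$ after integration.

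\emph{Stability of $\eta_n$, $V_n$, $h_n$.} The normalized quantities move by
\[
\bigl(|\partial_\tau\eta_n|+|\partial_\tau V_n|+|\partial_\tau\overline h_n|\bigr)\cdot|\Delta\widehat\tau|=O_{\mathrm P}(m^{-1}),
\]
using the fixed-$\tau$ sensitivity bounds. The remainder is handled exactly as in \eqref{eq:B_hstable}. In addition, the rank-one term in \eqref{eq:J_joint} changes the normalized trace by $O_{\mathrm P}(m^{-1})$.

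\emph{Correction continuation.} For \eqref{eq:joint_corr_gamma}--\eqref{eq:joint_corr_projection}, augment the differentiated correction KKT system \eqref{eq:B_correction_KKT}--\eqref{eq:B_correction_pi} with the $\tau$ row. The right-hand side of the $\tau$ equation is the $\theta$-derivative of $S_\tau$. It is driven by $\gamma_I^{\mathrm F}\mathbf a_I^{\mathsf T}\mathbf Q^2\mathbf a_I$-type terms and the projections $\mathbf c^{\mathrm c}$, and I expect it to be $O_{\mathrm P}(\rho_n\sqrt m)$. Dividing by the curvature $c_\tau m$ gives $|\dot\tau|=O_{\mathrm P}(\rho_n/\sqrt m)$. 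The bootstrap of Appendix~\ref{app:scalar} then closes with the $\gamma$ and $\Delta_I$ rates unchanged. Integrating over $\theta$ yields all three bounds.

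\emph{Conclusion.} The scalar law and the two-column self-averaging then follow verbatim from Appendix~\ref{app:scalar}, with the complete joint Jacobian in place of the fixed-$\tau$ one.

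\emph{Main obstacle.} I expect the hardest step to be the $\tau$-row estimate in the correction continuation. Showing that $\partial_\theta S_\tau$ is $O_{\mathrm P}(\rho_n\sqrt m)$ rather than $O_{\mathrm P}(\sqrt m)$ requires the $\ell_\infty$ smallness of $\mathbf c^{\mathrm c}$ and the $\Delta_I$ factor to enter jointly. My first attempt will be to split $\partial_\theta S_\tau$ into a $\Delta_I$-proportional part and a quadratic part in $\mathbf c^{\mathrm c}$. I would bound the first by Cauchy--Schwarz with $\|\mathbf r\|=O_{\mathrm P}(\sqrt m)$ and the second by $\|(\mathbf c^{\mathrm c})^{\circ2}\|_2=O_{\mathrm P}(\rho_n)$.
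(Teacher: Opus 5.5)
Your proposal follows essentially the same route as the paper's proof: measurability of the complete reduced joint solution for the fresh field; implicit differentiation of the profiled $\tau$ equation, with Gaussian regression and the $c_\tau m$ curvature, for the $\tau$ motion; Gronwall with the extra $\dot{\widehat\tau}$ term for the cross-coupling; $\partial_\tau$-sensitivity bounds for $\eta_n,V_n,h_n$; and an augmented correction KKT system whose $\tau$ row is bounded linearly in $|\Delta_I|$ and closed by a joint Gronwall bootstrap. The one piece you do not treat is the jointly profiled insertion continuation, where the response-path conditioning is unavailable and the paper instead uses a coarse mixed-partial bound $|\dot\tau_\lambda|=O_{\mathrm P}(m^{-1/2})$. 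There, and likewise through the $\sqrt m$-amplified term $\dot\tau\,\mathbf a_I^{\mathsf T}\mathbf Q\mathbf r$ in the correction continuation, the $\tau$-induced terms are of the same order as the fixed-$\tau$ terms rather than lower order, as your heuristic claims, but this still yields the required $o_{\mathrm P}(1)$ changes.
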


\begin{IEEEproof}
At $t=0$, interpret $\mathcal F_{-I}$ for the random index as
$\sigma(I,\mathbf A_{-I},\mathbf y_{-I},\zeta)$. The complete reduced
joint solution is measurable with respect to this sigma-field because both
$\boldsymbol\gamma$ and $\tau$ are selected from the reduced data and the
common auxiliary randomization. Hence $\mathbf a_I$ is
still conditionally independent, and \eqref{eq:joint_fresh_field} follows as in
Appendix~B.

For deterministic $t$, the jointly selected tuple based on
$\mathbf u_I(t)$ is measurable with respect to
$\mathcal G_I(t)=\sigma(I,\mathbf A_{-I},\mathbf u_I(t),\zeta)$. Let
$S_{\tau,I}(t)$ be its profiled $\tau$-derivative and $\mathbf r_{\tau,I}(t)$ the
fixed-$\tau$ profiled residual derivative at the selected value. By implicit differentiation of the profiled $\tau$ equation,
\begin{equation}
  \dot{\widehat\tau}_{I,-}(t)
  =-\frac{2\mathbf a_I^{\mathsf T}\mathbf r_{\tau,I}(t)}
  {\partial_\tau S_{\tau,I}(t)}.
  \label{eq:joint_response_taudot}
\end{equation}
Conditionally on $\mathcal G_I(t)$, the Gaussian regression formula of
Appendix~B applies to $\mathbf a_I$. Writing its conditional mean and
variance as $\mathbf m_I(t)$ and $\varsigma_I^2(t)\mathbf I_m$, respectively,
\begin{align}
  &\mathbb E[|\mathbf a_I^{\mathsf T}\mathbf r_{\tau,I}(t)|
      \mid\mathcal G_I(t)] \nonumber\\
  &\quad\le |\mathbf m_I(t)^{\mathsf T}\mathbf r_{\tau,I}(t)|
     +C\varsigma_I(t)\,\|\mathbf r_{\tau,I}(t)\|.
\end{align}
Here $\varsigma_I^2(t)=O(m^{-1})$ and
$\sup_{|t|\le B}\mathbb E\|\mathbf m_I(t)\|^2=O(m^{-1})$.
Together with
$\sup_{|t|\le B}\mathbb E\|\mathbf r_{\tau,I}(t)\|^2=O(m)$ from
\eqref{eq:tau_response_moments}, Cauchy--Schwarz gives
$\sup_{|t|\le B}\mathbb E|\mathbf a_I^{\mathsf T}\mathbf r_{\tau,I}(t)|=O(1)$.
Since $\partial_\tau S_{\tau,I}(t)\ge c_\tau m$, Fubini and Markov then give
\eqref{eq:joint_response_tau_motion}.

On a smooth segment the total hyperparameter response decomposes as
\begin{equation}
  \dot{\boldsymbol\gamma}_{\mathcal S}
  =2\mathbf H_{\mathcal S}^{-1}\mathbf D_{\mathbf p}
    \mathbf A_{\mathcal S}^{\mathsf T}\mathbf Q\mathbf a_I
   +\frac{\partial\overline{\boldsymbol\gamma}_{\mathcal S}}
   {\partial\tau}\,\dot{\widehat\tau}_{I,-}(t).
  \label{eq:joint_response_gammadot}
\end{equation}
The first term has $O_{\mathrm P}(1)$ norm and
$\|\partial_\tau\overline{\boldsymbol\gamma}_{\mathcal S}\|_2
=O_{\mathrm P}(\sqrt m)$. Hence its total variation over a bounded interval is
$O_{\mathrm P}(1)$. With $\mathbf c_I=\mathbf A_{-I}^{\mathsf T}\mathbf Q\mathbf a_I$ and
$\mathbf d_I^{(2)}=\mathbf A_{-I}^{\mathsf T}\mathbf Q^2\mathbf a_I$,
$\dot{\mathbf c}_I=-\mathbf G_{-I}(\dot{\boldsymbol\gamma}_{-I}\circ\mathbf c_I)-\dot{\widehat\tau}_{I,-}\mathbf d_I^{(2)}$.
The uniform resolvent/column bounds give $\|\mathbf d_I^{(2)}\|_\infty=O_{\mathrm P}(1)$;
starting from the initial $O_{\mathrm P}(\rho_n)$ Gaussian bound,
\eqref{eq:joint_response_tau_motion}--\eqref{eq:joint_response_gammadot} and
Gronwall give \eqref{eq:joint_response_cross}.

Let $\overline{\mathbf J}_{I,-}(t)$ freeze the profiled $\tau$ value and set
$\overline h_{I,-}(t):=m^{-1}\operatorname{tr}\overline{\mathbf J}_{I,-}(t)$, while
$\mathbf J^{\mathrm{jt}}_{I,-}(t)$ includes its response. By \eqref{eq:J_joint}, both are
$\mathcal G_I(t)$-measurable for deterministic $t$. The moment
bounds \eqref{eq:A_moment_bounds} and \eqref{eq:tau_response_moments}, together
with $\partial_\tau S_\tau\ge c_\tau m$, give uniformly for $|t|\le B$,
$\mathbb E\|\mathbf J^{\mathrm{jt}}_{I,-}(t)\|_{\mathrm{op}}^2=O(1)$ and
$\mathbb E\|\mathbf J^{\mathrm{jt}}_{I,-}(t)\|_{\mathrm F}^2=O(m)$. Therefore the same conditional quadratic-form argument as in Appendix~B gives
\begin{equation}
  \int_{-B}^{B}
  \left|\mathbf a_I^{\mathsf T}\mathbf J^{\mathrm{jt}}_{I,-}(t)
  \mathbf a_I-h^{\mathrm{jt}}_{I,-}(t)\right|dt
  \xrightarrow{\mathrm P}0.
  \label{eq:joint_response_quadint}
\end{equation}
Also, uniformly on the response path,
$m^{-1}\operatorname{tr}(\mathbf J^{\mathrm{jt}}_{I,-}-\overline{\mathbf J}_{I,-})
=O_{\mathrm P}(m^{-1})$. Along a smooth joint response segment,
$d\overline h_{I,-}/dt=\mathbf a_I^{\mathsf T}\nabla_{\mathbf u}\overline h_{I,-}
+(\partial_\tau\overline h_{I,-})\dot{\widehat\tau}_{I,-}$. The first term is $O_{\mathrm P}(m^{-1/2})$ by the fixed-$\tau$ calculation of
Appendix~B, now evaluated on the regular compact joint family, while
$|\partial_\tau\overline h_{I,-}|=O_{\mathrm P}(1)$. Equation
\eqref{eq:joint_response_tau_motion} makes the second term negligible after
integration. At a simple support transition, the fixed-$\tau$ part has the
$O_{\mathrm P}(m^{-1})$ normalized rank-one jump from Appendix~B; the noise-profile
correction has normalized trace $O_{\mathrm P}(m^{-1})$ on either side, so the complete
joint jump has the same order. The jointly profiled clause of
Assumption~\ref{ass:continuation} therefore controls their cumulative
contribution. Thus
\begin{equation}
  \sup_{|t|\le B}
  |h^{\mathrm{jt}}_{I,-}(t)-h^{\mathrm{jt}}_{I,-}(0)|=o_{\mathrm P}(1),
  \label{eq:joint_response_hstable}
\end{equation}
and the identical decomposition gives $o_{\mathrm P}(1)$ stability of $\eta$ and $V$.
Equations \eqref{eq:joint_response_quadint}--
\eqref{eq:joint_response_hstable} therefore integrate the joint response
exactly as in Appendix~B.

For completeness, consider next the jointly profiled insertion continuation
from the genuine joint cavity to the selected full point. At fixed $\tau$ the
nuisance KKT forcing is the one in Appendix~B and has $O_{\mathrm P}(1)$ Euclidean norm.
Let $S_{\tau,\lambda}$ denote the $\tau$-derivative after profiling the nuisance
hyperparameters while the explicit insertion parameter $\lambda$ is held
fixed, and let $\tau_\lambda$ be its selected $\tau$ value. The envelope theorem gives the fixed-$\tau$ profiled objective derivative
$\partial_\lambda\overline{\mathcal L}
=2x_ip_i+\gamma_i^{\mathrm F}g_i$, $g_i:=d_i-p_i^2$. Differentiating this identity with respect to
$\tau$ along the nuisance-profiled branch therefore gives the exact mixed
partial
\begin{equation}
  \left.\frac{d}{d\lambda}S_{\tau,\lambda}\right|_{\tau\,\mathrm{fixed}}
  =2x_i\mathbf a_i^{\mathsf T}\mathbf r_\tau
   +\gamma_i^{\mathrm F}\left(
      b_i+\boldsymbol\xi_i^{\mathsf T}
      \frac{\partial\overline{\boldsymbol\gamma}_{\mathcal S}}
           {\partial\tau}
    \right),
\end{equation}
where
$ b_i=-\mathbf a_i^{\mathsf T}\mathbf Q^2\mathbf a_i
      +2p_i\mathbf a_i^{\mathsf T}\mathbf Q\mathbf r$
and $\boldsymbol\xi_i=\partial_{\boldsymbol\gamma_{\mathcal S}}g_i$
is the cross-Hessian row between the externally inserted coordinate and the
nuisance active variables. The insertion cross-coupling bounds give
$\|\boldsymbol\xi_i\|_2=O_{\mathrm P}(1)$, while
$\|\partial_\tau\overline{\boldsymbol\gamma}_{\mathcal S}\|_2
=O_{\mathrm P}(\sqrt m)$. Hence the right-hand side is $O_{\mathrm P}(\sqrt m)$. Since the
jointly selected $\tau$ satisfies
$S_{\tau,\lambda}(\tau_\lambda)=0$ and its profiled curvature is at least
$c_\tau m$, implicit differentiation gives
$|\dot\tau_\lambda|=O_{\mathrm P}(m^{-1/2})$. Since
$\|\partial_\tau\overline{\boldsymbol\gamma}\|_2=O_{\mathrm P}(\sqrt m)$, the total
nuisance velocity remains $O_{\mathrm P}(1)$. The additional residual term
$\mathbf r_\tau\dot\tau_\lambda$ is also $O_{\mathrm P}(1)$ in norm. Consequently the
cross-coupling remains $O_{\mathrm P}(\rho_n)$, and the normalized $\eta$, $V$, and
nuisance-profiled $h$ changes are $o_{\mathrm P}(1)$ after the same smooth-segment and
transition-jump calculation as above. At $\lambda=1$ the state itself is the
selected full joint KKT point, but the insertion-path Jacobian still freezes
the $i$th hyperparameter. If $i$ is active, releasing that one scalar response
changes the fixed-$\tau$ residual Jacobian by the same rank-one Schur
correction as in Appendix~B, hence by $O_{\mathrm P}(m^{-1})$ after normalized trace; if
$i$ is strictly inactive the two fixed-$\tau$ Jacobians coincide locally, and
the same conclusion holds at a regular one-sided boundary. Profiling the
single $\tau$ response changes the normalized trace by another $O_{\mathrm P}(m^{-1})$
through \eqref{eq:J_joint}. Thus the insertion endpoint also matches the
complete full joint adaptive response up to $o_{\mathrm P}(1)$.

It remains to connect the joint pseudo-response endpoint to the full nuisance
state. We use the same explicit correction geometry as in Appendix~B,
\begin{align}
  \mathbf y_\theta^{\mathrm{c}}
  &=\mathbf u_i^{\mathrm{ps}}+\theta\mu_i^{\mathrm F}\mathbf a_i,\\
  \mathbf C_\theta^{\mathrm{c}}
  &=\tau^{\mathrm{c}}(\theta)\mathbf I_m
    +\mathbf A_{-i}\boldsymbol\Gamma_{-i}^{\mathrm{c}}(\theta)
       \mathbf A_{-i}^{\mathsf T}\\
  &\quad+\theta\gamma_i^{\mathrm F}\mathbf a_i\mathbf a_i^{\mathsf T},
  \qquad 0\le\theta\le1.
\end{align}
Set $\mathbf Q_\theta^{\mathrm c}:=(\mathbf C_\theta^{\mathrm c})^{-1}$,
$\mathbf r_\theta^{\mathrm c}:=\mathbf Q_\theta^{\mathrm c}\mathbf y_\theta^{\mathrm c}$, and
$p_i^{\mathrm c}:=\mathbf a_i^{\mathsf T}\mathbf r_\theta^{\mathrm c}$. Both nuisance
hyperparameters and the model-noise parameter are reoptimized. The path
is anchored at the full joint state at $\theta=1$ and at the genuine joint
pseudo-response state at $\theta=0$. Along this auxiliary correction path,
$S_\tau$ denotes the $\tau$-derivative after profiling the hyperparameters that are
free on the path (the nuisance coordinates); the curvature clause of
Assumption~\ref{ass:joint_regular} applies in this auxiliary profiled sense.
Define $\Delta_i:=p_i^{\mathrm F}-p_i^{\mathrm c}$,
$s_i:=\mathbf a_i^{\mathsf T}\mathbf Q_\theta^{\mathrm c}\mathbf a_i$,
$\mathbf c_i^{\mathrm c}:=\mathbf A_{\mathcal S}^{\mathsf T}\mathbf Q_\theta^{\mathrm c}\mathbf a_i$,
and $\mathbf b:=\partial_\tau\mathbf g_{\mathcal S}$,
$s_{\tau\tau}=\partial_\tau s_\tau$, and
$\kappa_\tau=s_{\tau\tau}-\mathbf b^{\mathsf T}
\mathbf H_{\mathcal S}^{-1}\mathbf b
=\partial_\tau S_\tau\ge c_\tau m$.
By equality of mixed partials on the current smooth chart,
$\partial_{\boldsymbol\gamma}s_\tau=\partial_\tau\mathbf g=\mathbf b$.
The regular bounds give $\|\mathbf b\|_2=O_{\mathrm P}(\sqrt m)$. Differentiating the
joint KKT system yields
\begin{align}
  \mathbf H_{\mathcal S}\dot{\boldsymbol\gamma}_{\mathcal S}
  +\mathbf b\dot\tau
  & =\mathbf f_\gamma,
  \label{eq:joint_corr_block1}\\
  \mathbf b^{\mathsf T}\dot{\boldsymbol\gamma}_{\mathcal S}
  +s_{\tau\tau}\dot\tau
  & =f_\tau,
  \label{eq:joint_corr_block2}
\end{align}
where
\begin{align}
  \mathbf f_\gamma
  &=\gamma_i^{\mathrm F}(\mathbf c_i^{\mathrm c})^{\circ2}
    +2\gamma_i^{\mathrm F}\Delta_i\mathbf D_{\mathbf p}\mathbf c_i^{\mathrm c},\\
  f_\tau
  &=\gamma_i^{\mathrm F}\mathbf a_i^{\mathsf T}\mathbf Q^2\mathbf a_i
    +2\gamma_i^{\mathrm F}\Delta_i\mathbf r^{\mathsf T}\mathbf Q\mathbf a_i.
\end{align}
Eliminating $\dot{\boldsymbol\gamma}$ gives
\begin{equation}
  \dot\tau
  =\frac{f_\tau-
  \mathbf b^{\mathsf T}\mathbf H_{\mathcal S}^{-1}\mathbf f_\gamma}
  {\kappa_\tau}.
  \label{eq:joint_corr_taudot}
\end{equation}
A coarse Gronwall argument gives bounded $\Delta_i$ and bounded total
nuisance velocity. The cross-coupling equation now contains only the extra
term $-\dot\tau\,\mathbf A_{-i}^{\mathsf T}\mathbf Q^2\mathbf a_i$; the
coarse bound $|\dot\tau|=O_{\mathrm P}(m^{-1/2})$ and the $O_{\mathrm P}(1)$ maximum norm of this
vector preserve
$\sup_\theta\|\mathbf c_i^{\mathrm c}(\theta)\|_\infty=O_{\mathrm P}(\rho_n)$ because
$m^{-1/2}=O(\rho_n)$. Hence
\begin{equation}
  \|\mathbf f_\gamma\|_2
  \le C\{\rho_n+|\Delta_i|\},
  \qquad
  |f_\tau|\le C\{1+\sqrt m|\Delta_i|\}.
\end{equation}
Equations \eqref{eq:joint_corr_block1}--\eqref{eq:joint_corr_taudot} give
$|\dot\tau|\le C(\rho_n+|\Delta_i|)/\sqrt m$ and
$\|\dot{\boldsymbol\gamma}_{\mathcal S}\|_2\le C(\rho_n+|\Delta_i|)$.
Finally,
\begin{equation}
  \dot p_i
  =\gamma_i^{\mathrm F}s_i\Delta_i
   -(\mathbf D_{\mathbf p}\mathbf c_i^{\mathrm c})^{\mathsf T}
     \dot{\boldsymbol\gamma}_{\mathcal S}
   -\dot\tau\,\mathbf a_i^{\mathsf T}\mathbf Q\mathbf r.
\end{equation}
Since $|\mathbf a_i^{\mathsf T}\mathbf Q\mathbf r|=O_{\mathrm P}(\sqrt m)$ by the
operator bounds, the preceding estimates imply
$|\dot\Delta_i|\le C|\Delta_i|+O_{\mathrm P}(\rho_n)$. As
$\Delta_i(1)=0$, Gronwall yields
$\sup_\theta|\Delta_i(\theta)|=O_{\mathrm P}(\rho_n)$, which sharpens the displayed
velocity bounds to
$|\dot\tau|=O_{\mathrm P}(\rho_n/\sqrt m)$ and
$\|\dot{\boldsymbol\gamma}\|_2=O_{\mathrm P}(\rho_n)$. Integration proves
\eqref{eq:joint_corr_gamma}--\eqref{eq:joint_corr_projection}.

Combining the integrated response identity with
\eqref{eq:joint_corr_projection} gives
\begin{equation}
  p_I=Z^{\mathrm{jt}}_{I,-}
      +h^{\mathrm{jt}}_{I,-}(0)(x_I-\mu_I)+o_{\mathrm P}(1),
\end{equation}
and the jointly profiled insertion bounds replace the cavity large-system state
quantities by their full-data counterparts. This proves the one-coordinate
joint scalar-law representation. For the two-coordinate argument, keep $I$ deleted and
insert $J$ only up to the selected one-coordinate \emph{joint} cavity based on
$(\mathbf A_{-I},\mathbf y_{-I})$, rather than toward the full-data joint
point. This nested reduced joint continuation, including its moving $\tau$ value,
is measurable with respect to
$\mathcal F_{-IJ}\vee\sigma(\mathbf a_J)$ and is therefore independent of
$\mathbf a_I$. Its residual velocity is $O_{\mathrm P}(1)$, including the term
$\mathbf r_\tau\dot\tau$, so its residual displacement is $O_{\mathrm P}(1)$ and
the projection onto the still conditionally independent $\mathbf a_I$ is $o_{\mathrm P}(1)$ in $L^2$.
Interchanging $I$ and $J$ gives the analogous statement. The double-cavity
conditional independence and the corrected $L^2$ bias--covariance argument of
Appendix~B therefore apply to the joint branch as well, proving the empirical
statement without introducing any full-data quantity into the conditioning
sigma-field.
\end{IEEEproof}

The adaptive leave-one-out diagonal resolvent relation of Appendix~C also extends directly
to this joint transport. Set
$s_{I,-}^{\mathrm{jt}}(t):=\mathbf a_I^{\mathsf T}\mathbf Q^{\mathrm{jt}}_{I,-}(t)\mathbf a_I$ and
$\eta_{I,-}^{\mathrm{jt}}(t):=m^{-1}\operatorname{tr}\mathbf Q^{\mathrm{jt}}_{I,-}(t)$. At $t=0$, conditional independence gives
$s_{I,-}^{\mathrm{jt}}(0)-\eta_{I,-}^{\mathrm{jt}}(0)=O_{\mathrm P}(m^{-1/2})$. Along a smooth joint response
segment, set $\boldsymbol\gamma_\tau:=\partial_\tau
\overline{\boldsymbol\gamma}_{-I}$, $q_j:=m^{-1}\mathbf a_j^{\mathsf T}\mathbf Q^2\mathbf a_j$,
and $\mathbf q:=(q_j)_{j\ne I}$. The chain rule
gives
\begin{align}
  &\frac{d}{dt}(s_{I,-}^{\mathrm{jt}}-\eta_{I,-}^{\mathrm{jt}})
  ={}\left.\frac{d}{dt}(s_{I,-}-\eta_{I,-})\right|_{\tau\ \mathrm{fixed}} -\dot{\widehat\tau}_{I,-}(t) \nonumber\\
  &\ \ 
  \times \Bigg\{\mathbf a_I^{\mathsf T}\mathbf Q^2\mathbf a_I
  -m^{-1}\operatorname{tr}\mathbf Q^2
  +\sum_{j\ne I}(\boldsymbol\gamma_\tau)_j
   \big(c_{jI}^2-q_j\big)\Bigg\}.
\end{align}
The first term is $O_{\mathrm P}(\rho_n)$ by Appendix~C and
\eqref{eq:joint_response_cross}. The explicit quadratic-form difference in
braces is $O_{\mathrm P}(1)$ under the uniform resolvent and column-norm envelope, while
\begin{align}
  \left|\sum_{j\ne I}(\boldsymbol\gamma_\tau)_j(c_{jI}^2-q_j)\right|
  &\le \|\boldsymbol\gamma_\tau\|_2
      \{\|\mathbf c_I^{\circ2}\|_2+\|\mathbf q\|_2\}\\
  &=O_{\mathrm P}(\sqrt{\log n}).
\end{align}
Equation \eqref{eq:joint_response_tau_motion} therefore makes the integrated
$\tau$-induced contribution
$O_{\mathrm P}(\sqrt{\log n}/m)=o_{\mathrm P}(1)$. Hence, evaluating at $t=e_I$ (superscript
$\mathrm{ps}$),
$s_I^{\mathrm{jt},\mathrm{ps}}-\eta_I^{\mathrm{jt},\mathrm{ps}}=o_{\mathrm P}(1)$. The resolvent identity,
\eqref{eq:joint_corr_gamma}, and \eqref{eq:joint_corr_tau} give
\begin{equation}
  \|\mathbf R_I^{\mathrm F}-\mathbf Q_I^{\mathrm{jt},\mathrm{ps}}\|_{\mathrm{op}}
  =O_{\mathrm P}(\rho_n),
\end{equation}
and the single covariance reinsertion changes the normalized trace by
$O_{\mathrm P}(m^{-1})$. Therefore
\begin{equation}
  s_I^{\mathrm F}-\eta_n=o_{\mathrm P}(1).
\end{equation}
On the joint branch as well. Consequently the scalar KKT relation
\eqref{eq:scalar_inactive}--\eqref{eq:scalar_active} and the active relation
\eqref{eq:scalar_kkt_graph} are unchanged under joint noise estimation.

The lemma also supplies the direct joint version of the adaptive-response
transport. For deterministic response points,
$\mathbf J^{\mathrm{jt}}_{I,-}(t)$ is measurable with respect to
$\mathcal G_I(t)$. We now verify the derivative envelope used by the
random-endpoint interpolation directly on the jointly profiled path.

On a smooth joint response segment, \eqref{eq:joint_response_gammadot} and the
pointwise consequence of \eqref{eq:joint_response_taudot},
\begin{equation}
  |\dot{\widehat\tau}_{I,-}(t)|
  \le
  \frac{2\|\mathbf a_I\|\,\|\mathbf r_{\tau,I}(t)\|}
       {c_\tau m}
  =O_{\mathrm P}(m^{-1/2}),
\end{equation}
which gives
\begin{equation}
  \|\dot{\boldsymbol\gamma}_{\mathcal S}\|_2=O_{\mathrm P}(1).
\end{equation}
Thus the total resolvent derivative is
\begin{equation}
  \dot{\mathbf Q}
  =-\mathbf Q
   \left\{
      \mathbf A_{\mathcal S}
      \operatorname{diag}(\dot{\boldsymbol\gamma}_{\mathcal S})
      \mathbf A_{\mathcal S}^{\mathsf T}
      +\dot{\widehat\tau}_{I,-}\mathbf I_m
   \right\}\mathbf Q.
\end{equation}
The diagonal structure and the bounds above imply
$\|\dot{\mathbf Q}\|_{\mathrm{op}}=O_{\mathrm P}(1)$ and
$\|\dot{\mathbf Q}\|_{\mathrm F}=O_{\mathrm P}(1)$ uniformly on every fixed regular response
interval. The complete residual derivative is
$\dot{\mathbf r}=\mathbf J^{\mathrm{jt}}\mathbf a_I=O_{\mathrm P}(1)$ in norm, because
both $\overline{\mathbf J}$ and the rank-one noise-profile correction in
\eqref{eq:J_joint} have $O_{\mathrm P}(1)$ operator norm. Consequently
$\|\dot{\mathbf p}_{\mathcal S}\|_2$,
$\|\dot{\mathbf G}_{\mathcal S}\|_{\mathrm F}$, and
$\|\dot{\mathbf H}_{\mathcal S}\|_{\mathrm F}$ are all $O_{\mathrm P}(1)$.

To control the derivative of the noise-profile correction, write
\begin{equation}
\begin{aligned}
  \mathbf T=\mathbf A_{\mathcal S}^{\mathsf T}\mathbf Q^2
             \mathbf A_{\mathcal S}, \quad \mathbf z=\mathbf A_{\mathcal S}^{\mathsf T}\mathbf Q\mathbf r,\\
  \mathbf b=-\operatorname{diag}(\mathbf T)+2\mathbf D_{\mathbf p}\mathbf z.
\end{aligned}
\end{equation}
Along the same total joint path,
\begin{equation}
  \dot{\mathbf T}
  =\mathbf A_{\mathcal S}^{\mathsf T}
    (\dot{\mathbf Q}\mathbf Q+\mathbf Q\dot{\mathbf Q})
    \mathbf A_{\mathcal S},
  \qquad
  \dot{\mathbf z}
  =\mathbf A_{\mathcal S}^{\mathsf T}
    (\dot{\mathbf Q}\mathbf r+\mathbf Q\dot{\mathbf r}).
\end{equation}
The first derivative has Frobenius norm $O_{\mathrm P}(1)$ while
$\|\dot{\mathbf z}\|_2=O_{\mathrm P}(\sqrt m)$, hence
$\|\dot{\mathbf b}\|_2=O_{\mathrm P}(\sqrt m)$.
Since
$\boldsymbol\gamma_\tau=-\mathbf H_{\mathcal S}^{-1}\mathbf b$,
\begin{equation}
  \dot{\boldsymbol\gamma}_\tau
  =\mathbf H_{\mathcal S}^{-1}\dot{\mathbf H}_{\mathcal S}
    \mathbf H_{\mathcal S}^{-1}\mathbf b
   -\mathbf H_{\mathcal S}^{-1}\dot{\mathbf b},
  \qquad
  \|\dot{\boldsymbol\gamma}_\tau\|_2=O_{\mathrm P}(\sqrt m).
\end{equation}
Termwise differentiation of
\begin{equation}
  \mathbf r_\tau
  =-\mathbf Q\mathbf r
   -\mathbf Q\mathbf A_{\mathcal S}\mathbf D_{\mathbf p}
    \boldsymbol\gamma_\tau,
\end{equation}
which gives
\begin{equation}
  \left\|\frac{d\mathbf r_{\tau,I}(t)}{dt}\right\|=O_{\mathrm P}(\sqrt m).
  \label{eq:joint_rtau_total}
\end{equation}

The profiled curvature is
\begin{equation}
  \kappa_\tau:=\partial_\tau S_\tau
  =s_{\tau\tau}
   -\mathbf b^{\mathsf T}\mathbf H_{\mathcal S}^{-1}\mathbf b,
  \quad
  s_{\tau\tau}
  =-\operatorname{tr}\mathbf Q^2+2\mathbf r^{\mathsf T}\mathbf Q\mathbf r.
\end{equation}
Using $\|\dot{\mathbf Q}\|_{\mathrm F}=O_{\mathrm P}(1)$,
$\|\dot{\mathbf r}\|=O_{\mathrm P}(1)$,
$\|\dot{\mathbf b}\|_2=O_{\mathrm P}(\sqrt m)$, and
$\|\dot{\mathbf H}_{\mathcal S}\|_{\mathrm{op}}=O_{\mathrm P}(1)$ gives
\begin{equation}
  \left|\frac{d\kappa_{\tau}(t)}{dt}\right|=O_{\mathrm P}(m).
  \label{eq:joint_kappa_total}
\end{equation}
(The term $\mathbf r^{\mathsf T}\dot{\mathbf Q}\mathbf r$ is bounded
coarsely by $O_{\mathrm P}(m)$; this order is already sufficient.)

Now write the noise-profile correction as
\begin{equation}
  \mathbf R_\tau^{\mathrm{prof}}:=\frac{2\mathbf r_\tau\mathbf r_\tau^{\mathsf T}}{\kappa_\tau}.
\end{equation}
Using $\|\mathbf r_\tau\|=O_{\mathrm P}(\sqrt m)$,
$\kappa_\tau\ge c_\tau m$, and
\eqref{eq:joint_rtau_total}--\eqref{eq:joint_kappa_total},
\begin{equation}
  \left\|\frac{d}{dt}\mathbf R_\tau^{\mathrm{prof}}\right\|_{\mathrm{op}}
  \le
  \frac{4\|\mathbf r_\tau\|\,\|\dot{\mathbf r}_\tau\|}{\kappa_\tau}
  +\frac{2\|\mathbf r_\tau\|^2|\dot\kappa_\tau|}{\kappa_\tau^2}
  =O_{\mathrm P}(1).
\end{equation}
The same total-path differentiation of the explicit formula
\eqref{eq:J_formula} in fact gives the sharper bound
$\|d\overline{\mathbf J}/dt\|_{\mathrm{op}}=O_{\mathrm P}(1)$. Indeed,
$\|\dot{\mathbf Q}\|_{\mathrm{op}}=O_{\mathrm P}(1)$,
$\|\dot{\mathbf p}_{\mathcal S}\|_2=O_{\mathrm P}(1)$, and
$\|\dot{\mathbf H}_{\mathcal S}\|_{\mathrm{op}}=O_{\mathrm P}(1)$, while all undifferentiated
factors in \eqref{eq:J_formula} have bounded operator norm. Therefore
\begin{equation}
  \sup_{|t|\le B,\ {\mathrm{smooth}}}
  \left\|\frac{d}{dt}\mathbf J^{\mathrm{jt}}_{I,-}(t)\right\|_{\mathrm{op}}
  =O_{\mathrm P}(1).
  \label{eq:joint_J_derivative_envelope}
\end{equation}
Repeating Appendix~D's deterministic-grid argument (conditioning separately
at each grid point) for the $\mathcal G_I(t)$-measurable matrices
$\mathbf J^{\mathrm{jt}}_{I,-}(t)$ and $\overline{\mathbf J}_{I,-}(t)$, using their common
moment envelopes and \eqref{eq:joint_J_derivative_envelope}, gives at $t=e_I$
\begin{align}
  \mathbf a_I^{\mathsf T}\mathbf J^{\mathrm{jt}}_{I,-}(e_I)\mathbf a_I
  -m^{-1}\operatorname{tr}\mathbf J^{\mathrm{jt}}_{I,-}(e_I)
  &=o_{\mathrm P}(1),\\
  \mathbf a_I^{\mathsf T}\overline{\mathbf J}_{I,-}(e_I)\mathbf a_I
  -m^{-1}\operatorname{tr}\overline{\mathbf J}_{I,-}(e_I)
  &=o_{\mathrm P}(1).
\end{align}
Since $m^{-1}\operatorname{tr}(\mathbf J^{\mathrm{jt}}-\overline{\mathbf J})=O_{\mathrm P}(m^{-1})$
uniformly, the two directional responses differ by $o_{\mathrm P}(1)$ at the endpoint. Hence the genuine fixed-$\tau$
nuisance-profiled response there equals $h_n^{\mathrm{joint}}+o_{\mathrm P}(1)$.

By Assumption~\ref{ass:joint_regular}, the typical-endpoint condition of
Assumption~\ref{ass:derivative} applies to this jointly profiled path.
On its no-crossing event, the algebraic full-nuisance and genuine joint
pseudo-response endpoints have the same nuisance active face, denoted
$\mathcal C$. Interpolate
linearly between their nuisance hyperparameters and noise variances while
keeping $\mathbf u_i^{\mathrm{ps}}$ fixed:
\begin{equation}
\begin{aligned}
\boldsymbol\gamma_{\mathcal C}(u)&=\boldsymbol\gamma_{\mathcal C}^{\mathrm{jt},\mathrm{ps}}
+u\,\Delta\boldsymbol\gamma_{\mathcal C},\\
\tau(u)&=\widehat\tau_i^{\mathrm{jt},\mathrm{ps}}+u\,\Delta\tau,\qquad u\in[0,1].
\end{aligned}
\end{equation}
where $\Delta\boldsymbol\gamma_{\mathcal C}:=
\boldsymbol\gamma_{\mathcal C}^{\mathrm F}-\boldsymbol\gamma_{\mathcal C}^{\mathrm{jt},\mathrm{ps}}$ and
$\Delta\tau:=\widehat\tau-\widehat\tau_i^{\mathrm{jt},\mathrm{ps}}$, with
$\|\Delta\boldsymbol\gamma_{\mathcal C}\|_2=O_{\mathrm P}(\rho_n)$ and
$|\Delta\tau|=O_{\mathrm P}(\rho_n/\sqrt m)$.
This is only a comparison path; with $\mathbf R(u)$ its resolvent, use the current-resolvent $s_i,\mathbf c_i,\mathbf G_i,\mathbf w_i,\mathbf K_i$ of Appendix~D. Direct differentiation gives
\begin{equation}
  \dot{\mathbf R}
  =-\mathbf R
   \left\{
      \mathbf A_{\mathcal C}
      \operatorname{diag}(\Delta\boldsymbol\gamma_{\mathcal C})
      \mathbf A_{\mathcal C}^{\mathsf T}
      +\Delta\tau\,\mathbf I_m
   \right\}\mathbf R.
\end{equation}
Starting from the genuine joint pseudo-response endpoint,
$\|\mathbf c_i\|_\infty=O_{\mathrm P}(\rho_n)$,
$\|\mathbf c_i\|_2=O_{\mathrm P}(1)$, and
$\|\mathbf p_{\mathcal C}\|_\infty=O_{\mathrm P}(1)$. Along the interpolation,
\begin{equation}
  \dot{\mathbf p}_{\mathcal C}
  =-\mathbf G_i
    (\Delta\boldsymbol\gamma_{\mathcal C}\circ\mathbf p_{\mathcal C})
   -\Delta\tau\,\mathbf A_{\mathcal C}^{\mathsf T}
     \mathbf R^2\mathbf u_i^{\mathrm{ps}}.
\end{equation}
Hence
$\|\dot{\mathbf p}_{\mathcal C}\|_\infty
\le C\rho_n\|\mathbf p_{\mathcal C}\|_\infty+C\rho_n$;
Gronwall preserves the $O_{\mathrm P}(1)$ maximum-norm bound over $u\in[0,1]$.
The same diagonal-perturbation bounds used in
\eqref{eq:D_reaction_transfer}, together with
$|\Delta\tau|=O_{\mathrm P}(\rho_n/\sqrt m)$, then give uniformly for $u\in[0,1]$,
$\|\dot{\mathbf p}_{\mathcal C}\|_2=O_{\mathrm P}(\rho_n)$,
$\|\dot{\mathbf c}_i\|_2=O_{\mathrm P}(\rho_n^2)$,
$\|\dot{\mathbf G}_i\|_{\mathrm F}=O_{\mathrm P}(\rho_n)$, and
$\|\dot{\mathbf K}_i\|_{\mathrm{op}}=O_{\mathrm P}(\rho_n)$. The extra noise-variance
terms have size $O_{\mathrm P}(|\Delta\tau|\sqrt m)=O_{\mathrm P}(\rho_n)$; the remaining
bounds are the same diagonal-perturbation estimates as in
\eqref{eq:D_reaction_transfer}.

Since $\mathbf K_i(0)\succeq c_H\mathbf I$ and
$\sup_u\|\mathbf K_i(u)-\mathbf K_i(0)\|_{\mathrm{op}}=O_{\mathrm P}(\rho_n)$, Weyl's inequality gives
$\inf_u\lambda_{\min}\mathbf K_i(u)\ge c_H/2$ with high probability.
For $\ell_i=s_i-2\mathbf w_i^{\mathsf T}\mathbf K_i^{-1}\mathbf w_i$,
the same bounds give $|\dot s_i|=O_{\mathrm P}(\rho_n^2)$,
$\|\dot{\mathbf w}_i\|_2=O_{\mathrm P}(\rho_n)$, and
$\sup_u|\dot\ell_i|=O_{\mathrm P}(\rho_n)$; hence the two endpoint responses differ
by $O_{\mathrm P}(\rho_n)=o_{\mathrm P}(1)$.

At the full point let $\boldsymbol\mu^{\mathrm{joint}}$ be the joint posterior mean. The coordinate and empirical corrections satisfy
\begin{align}
  \mathbf a_I^{\mathsf T}
  (\nabla_{\mathbf y}\mu_I^{\mathrm{joint}}-\nabla_{\mathbf y}\overline\mu_I)
  &=(\partial_\tau\overline\mu_I)D_{\mathbf a_I}\widehat\tau=O_{\mathrm P}(m^{-1}),\\
  \frac1n\sum_i(\partial_\tau\overline\mu_i)^2&=O_{\mathrm P}(1),\\
  \frac1n\sum_i(D_{\mathbf a_i}\widehat\tau)^2&=O_{\mathrm P}(m^{-2}),\\
  \frac1n\sum_i|(\partial_\tau\overline\mu_i)D_{\mathbf a_i}\widehat\tau|
  &=O_{\mathrm P}(m^{-1}).
\end{align}
The middle bounds follow from $\|\partial_\tau\overline{\boldsymbol\gamma}\|_2^2=O_{\mathrm P}(m)$,
$n^{-1}\|\mathbf A^{\mathsf T}\mathbf r_\tau\|^2=O_{\mathrm P}(1)$, and \eqref{eq:tau_gradient}; the last is Cauchy--Schwarz. Thus the two adaptive responses have the same limit. Differentiating the full joint identity
$\mathbf y=\mathbf A\boldsymbol\mu^{\mathrm{joint}}+\widehat\tau\mathbf r$ gives
\begin{equation}
  \mathbf I_m
  =\mathbf A D_{\mathbf y}\boldsymbol\mu^{\mathrm{joint}}
   +\widehat\tau\mathbf J_{\mathrm{joint}}
   +\mathbf r(\nabla_{\mathbf y}\widehat\tau)^{\mathsf T}.
\end{equation}
The extra normalized trace is
$m^{-1}O_{\mathrm P}(\|\mathbf r\|\|\nabla_{\mathbf y}\widehat\tau\|)
=O_{\mathrm P}(m^{-1})$. Hence the joint branch has the same adaptive-response limit; the noise balance is verified directly below.

\subsection{Full-to-Reduced $\tau$ Transfer}

Along the joint response path,
\eqref{eq:joint_response_tau_motion} and $e_I=O_{\mathrm P}(1)$ give
$|\widehat\tau_I^{\mathrm{jt},\mathrm{ps}}-\widehat\tau_{I,-}|=O_{\mathrm P}(m^{-1})$;
combining this with \eqref{eq:joint_corr_tau} yields
\begin{equation}
  |\widehat\tau-\widehat\tau_{I,-}|
  =O_{\mathrm P}\!\left(\frac{\rho_n}{\sqrt m}\right)
  =O_{\mathrm P}\!\left(\frac{\sqrt{\log n}}{m}\right)
  =o_{\mathrm P}(m^{-1/2}),
  \label{eq:tau_root_transfer_app}
\end{equation}
which is the rate used in Section~\ref{sec:noise}; large-system state replacement follows from Lemma~\ref{lem:joint_profiled_cavity}.

By Assumption~\ref{ass:joint_regular}, the joint selected residual map inherits the
locally Lipschitz, piecewise-$C^1$ chart structure of Assumption~\ref{ass:regular};
positive profiled curvature and the implicit-function formula identify its weak Jacobian with
$\mathbf J_{\mathrm{joint}}$ almost everywhere. Thus \eqref{eq:J_joint},
\eqref{eq:tau_response_moments}, and Stein's identity \eqref{eq:C_second_order_stein} give
\begin{align}
  \mathbb E\|\mathbf J_{\mathrm{joint}}\|_{\mathrm F}^2
  &\le C\mathbb E\|\overline{\mathbf J}\|_{\mathrm F}^2
    +\frac{C}{m^2}\mathbb E\|\mathbf r_\tau\|^4=O(m),\\
  m^{-1}\mathbf v^{\mathsf T}\mathbf r-\sigma_0^2 h_n^{\mathrm{joint}}
  &\xrightarrow{\mathrm P}0.
\end{align}

\subsection{Branchwise Reconstruction Error}

By \eqref{eq:A_basic_bounds}, $|\mu_i|\le\gamma_{\max}\tau_\star^{-1/2}\|\mathbf a_i\|$;
bounded $x_i$ and Gaussian column moments give uniform integrability, hence the
MSE limit stated in Section~\ref{sec:noise} (in probability for deterministic limits).

\subsection{Closed MSE Relation}

Differentiating the scalar KKT relation gives
$\partial_U\widehat X=(2+\eta\Gamma)/[(2h-\eta)+\eta h\Gamma]$ on the active
branch and zero otherwise; moreover $P=-h(\widehat X-X)+\sqrt V\,G$.
For each realized branch limit $(\mathcal M,\nu_b)$, apply scalar Stein conditionally on that limit; together with Theorem~\ref{thm:susceptibility} and \eqref{eq:balance_noise}, this yields
\begin{align}
1-\tau h
&=\frac{h}{\delta}
  \left\langle\frac{\partial\widehat X}{\partial U}\right\rangle_b,
\label{eq:E_susceptibility_scalar}\\
\langle(\widehat X-X)G\rangle_b
&=\sqrt V\left\langle\frac{\partial\widehat X}{\partial U}\right\rangle_b,
\label{eq:E_stein}\\
\langle(\widehat X-X)P\rangle_b
&=-h\mathcal E_b
  +V\left\langle\frac{\partial\widehat X}{\partial U}\right\rangle_b,\\
\sigma_0^2 h
&=\frac{V}{h}-\frac{h}{\delta}\mathcal E_b.
\end{align}
Since $h\neq0$ by Theorem~\ref{thm:susceptibility}, the last identity gives
\eqref{eq:closed_mse}; with $V=\eta$ it gives
\eqref{eq:closed_mse_learned}--\eqref{eq:alpha_mse}.

%% file: sbl_tsp_technical_main.bbl
\begin{thebibliography}{40}
\bibitem{Tipping2001}
M.~E. Tipping, ``Sparse Bayesian learning and the relevance vector machine,''
\emph{J. Mach. Learn. Res.}, vol.~1, pp.~211--244, 2001.
\bibitem{TippingFaul2003}
M.~E. Tipping and A.~C. Faul, ``Fast marginal likelihood maximisation for sparse Bayesian models,'' in \emph{Proc. Int. Workshop Artif. Intell. Statist. (AISTATS)}, Key West, FL, USA, 2003, pp.~276--283.
\bibitem{WipfRao2004}
D.~P. Wipf and B.~D. Rao, ``Sparse Bayesian learning for basis selection,'' \emph{IEEE Trans. Signal Process.}, vol.~52, no.~8, pp.~2153--2164, Aug.~2004.
\bibitem{Dai2015}
J.~Dai, N.~Hu, W.~Xu, and C.~Chang, ``Sparse Bayesian learning for DOA estimation with mutual coupling,'' \emph{Sensors}, vol.~15, no.~10, pp.~26267--26280, Oct.~2015, doi: 10.3390/s151026267.
\bibitem{Srivastava2019}
S.~Srivastava, A.~Mishra, A.~Rajoriya, A.~K. Jagannatham, and G.~Ascheid, ``Quasi-static and time-selective channel estimation for block-sparse millimeter wave hybrid MIMO systems: Sparse Bayesian learning (SBL) based approaches,'' \emph{IEEE Trans. Signal Process.}, vol.~67, no.~5, pp.~1251--1266, Mar.~2019, doi: 10.1109/TSP.2018.2890058.
\bibitem{Ojeda2018}
A.~Ojeda, K.~Kreutz-Delgado, and T.~Mullen, ``Fast and robust block-sparse Bayesian learning for EEG source imaging,'' \emph{NeuroImage}, vol.~174, pp.~449--462, 2018.
\bibitem{Grebien2024}
S.~Grebien, E.~Leitinger, K.~Witrisal, and B.~H. Fleury, ``Super-resolution estimation of UWB channels including the dense component---An SBL-inspired approach,'' \emph{IEEE Trans. Wireless Commun.}, vol.~23, no.~8, pp.~10301--10318, Aug.~2024, doi: 10.1109/TWC.2024.3371352.
\bibitem{FaulTipping2001}
A.~C. Faul and M.~E. Tipping, ``Analysis of sparse Bayesian learning,'' in \emph{Adv. Neural Inf. Process. Syst.}, vol.~14, pp.~383--389, 2001.
\bibitem{WipfNagarajan2007}
D.~P. Wipf and S.~S. Nagarajan, ``A new view of automatic relevance determination,'' in \emph{Adv. Neural Inf. Process. Syst.}, vol.~20, pp.~1625--1632, 2007.
\bibitem{WipfNagarajan2010}
D.~P. Wipf and S.~S. Nagarajan, ``Iterative reweighted $\ell_1$ and $\ell_2$ methods for finding sparse solutions,''
\emph{IEEE J. Sel. Topics Signal Process.}, vol.~4, no.~2, pp.~317--329, Apr.~2010,
doi: 10.1109/JSTSP.2010.2042413.
\bibitem{WipfRaoNagarajan2011}
D.~P. Wipf, B.~D. Rao, and S.~S. Nagarajan, ``Latent variable Bayesian models for promoting sparsity,''
\emph{IEEE Trans. Inf. Theory}, vol.~57, no.~9, pp.~6236--6255, Sep.~2011,
doi: 10.1109/TIT.2011.2162174.
\bibitem{Hashemi2021}
A.~Hashemi, C.~Cai, G.~Kutyniok, K.-R.~M{\"u}ller, S.~S. Nagarajan, and S.~Haufe, ``Unification of sparse Bayesian learning algorithms for electromagnetic brain imaging with the majorization-minimization framework,'' \emph{NeuroImage}, vol.~239, p.~118309, Oct.~2021, doi: 10.1016/j.neuroimage.2021.118309.
\bibitem{Yu2024}
F.~Yu, L.~Shen, and G.~Song, ``Hyperparameter estimation for sparse Bayesian learning models,'' \emph{SIAM/ASA J. Uncertain. Quantif.}, vol.~12, no.~3, pp.~759--787, 2024, doi: 10.1137/24M162844X.
\bibitem{Song2024}
Y.~Song, Z.~Gong, Y.~Chen, and C.~Li, ``Towards inversion-free sparse Bayesian learning: A universal approach,'' \emph{IEEE Trans. Signal Process.}, vol.~72, pp.~4992--5006, 2024, doi: 10.1109/TSP.2024.3484908.
\bibitem{Ji2008}
S.~Ji, Y.~Xue, and L.~Carin, ``Bayesian compressive sensing,''
\emph{IEEE Trans. Signal Process.}, vol.~56, no.~6, pp.~2346--2356, Jun.~2008,
doi: 10.1109/TSP.2007.914345.
\bibitem{WipfRao2007}
D.~P. Wipf and B.~D. Rao, ``An empirical Bayesian strategy for solving the simultaneous sparse approximation problem,''
\emph{IEEE Trans. Signal Process.}, vol.~55, no.~7, pp.~3704--3716, Jul.~2007,
doi: 10.1109/TSP.2007.894265.
\bibitem{ZhangRao2013}
Z.~Zhang and B.~D. Rao, ``Extension of SBL algorithms for the recovery of block sparse signals with intra-block correlation,''
\emph{IEEE Trans. Signal Process.}, vol.~61, no.~8, pp.~2009--2015, Apr.~2013,
doi: 10.1109/TSP.2013.2241055.
\bibitem{PrasadMurthy2013}
R.~Prasad and C.~R. Murthy, ``Cram\'er--Rao-type bounds for sparse Bayesian learning,'' \emph{IEEE Trans. Signal Process.}, vol.~61, no.~3, pp.~622--632, 2013.
\bibitem{KhannaMurthy2022}
S.~Khanna and C.~R. Murthy, ``On the support recovery of jointly sparse Gaussian sources via sparse Bayesian learning,'' \emph{IEEE Trans. Inf. Theory}, vol.~68, no.~11, pp.~7361--7378, 2022.
\bibitem{Moderl2025}
J.~M{\"o}derl, E.~Leitinger, and B.~H. Fleury, ``General pruning criteria for fast SBL,'' \emph{IEEE Signal Process. Lett.}, vol.~32, pp.~4374--4378, 2025, doi: 10.1109/LSP.2025.3632230.
\bibitem{YangXieZhang2013}
Z.~Yang, L.~Xie, and C.~Zhang, ``Off-grid direction of arrival estimation using sparse Bayesian inference,''
\emph{IEEE Trans. Signal Process.}, vol.~61, no.~1, pp.~38--43, Jan.~2013,
doi: 10.1109/TSP.2012.2222378.
\bibitem{Gerstoft2016}
P.~Gerstoft, C.~F. Mecklenbr\"auker, A.~Xenaki, and S.~Nannuru, ``Multisnapshot sparse Bayesian learning for DOA,''
\emph{IEEE Signal Process. Lett.}, vol.~23, no.~10, pp.~1469--1473, Oct.~2016,
doi: 10.1109/LSP.2016.2598550.
\bibitem{YoshidaWatanabe2026}
T.~Yoshida and K.~Watanabe, ``Empirical Bayes estimation for lasso-type regularizers and its consistency,'' \emph{IEICE Trans. Fundamentals}, vol.~E109-A, no.~3, pp.~490--499, Mar.~2026, doi: 10.1587/transfun.2025TAP0008.
\bibitem{XiaoSlock2026ICASSP}
F.~Xiao and D.~T.~M. Slock, ``Large-system fixed-point law and deterministic closure for sparse Bayesian learning,''
in \emph{Proc. IEEE Int. Conf. Acoust., Speech Signal Process. (ICASSP)}, Barcelona, Spain, May 2026,
doi: 10.1109/ICASSP55912.2026.11464948.
\bibitem{Donoho2009}
D.~L. Donoho, A.~Maleki, and A.~Montanari, ``Message passing algorithms for compressed sensing,'' \emph{Proc. Natl. Acad. Sci. USA}, vol.~106, no.~45, pp.~18914--18919, 2009.
\bibitem{BayatiMontanari2011}
M.~Bayati and A.~Montanari, ``The dynamics of message passing on dense graphs, with applications to compressed sensing,'' \emph{IEEE Trans. Inf. Theory}, vol.~57, no.~2, pp.~764--785, 2011.
\bibitem{Rangan2019}
S.~Rangan, P.~Schniter, and A.~K. Fletcher, ``Vector approximate message passing,'' \emph{IEEE Trans. Inf. Theory}, vol.~65, no.~10, pp.~6664--6684, 2019.
\bibitem{VilaSchniter2013}
J.~P. Vila and P.~Schniter, ``Expectation-maximization Gaussian-mixture approximate message passing,'' \emph{IEEE Trans. Signal Process.}, vol.~61, no.~19, pp.~4658--4672, 2013.
\bibitem{BaoHanXu2025}
Z.~Bao, Q.~Han, and X.~Xu, ``A leave-one-out approach to approximate message passing,'' \emph{Ann. Appl. Probab.}, vol.~35, no.~4, pp.~2716--2766, Aug.~2025, doi: 10.1214/25-AAP2186.
\bibitem{BayatiMontanari2012}
M.~Bayati and A.~Montanari, ``The LASSO risk for Gaussian matrices,'' \emph{IEEE Trans. Inf. Theory}, vol.~58, no.~4, pp.~1997--2017, Apr.~2012, doi: 10.1109/TIT.2011.2174612.
\bibitem{Thrampoulidis2015}
C.~Thrampoulidis, S.~Oymak, and B.~Hassibi, ``Regularized linear regression: A precise analysis of the estimation error,'' in \emph{Proc. Conf. Learn. Theory (COLT)}, 2015, pp.~1683--1709.
\bibitem{Thrampoulidis2018}
C.~Thrampoulidis, E.~Abbasi, and B.~Hassibi, ``Precise error analysis of regularized $M$-estimators in high dimensions,'' \emph{IEEE Trans. Inf. Theory}, vol.~64, no.~8, pp.~5592--5628, Aug.~2018, doi: 10.1109/TIT.2018.2840720.
\bibitem{Stein1981}
C.~M. Stein, ``Estimation of the mean of a multivariate normal distribution,'' \emph{Ann. Statist.}, vol.~9, no.~6, pp.~1135--1151, 1981.
\bibitem{BellecZhang2021}
P.~C. Bellec and C.-H. Zhang, ``Second-order Stein: SURE for SURE and other applications in high-dimensional inference,'' \emph{Ann. Statist.}, vol.~49, no.~4, pp.~1864--1903, Aug.~2021, doi: 10.1214/20-AOS2005.
\bibitem{Vershynin2018}
R.~Vershynin, \emph{High-Dimensional Probability: An Introduction with Applications in Data Science}. Cambridge, U.K.: Cambridge Univ. Press, 2018.
\bibitem{RudelsonVershynin2013}
M.~Rudelson and R.~Vershynin, ``Hanson-Wright inequality and sub-Gaussian concentration,'' \emph{Electron. Commun. Probab.}, vol.~18, no.~82, pp.~1--9, 2013.
\end{thebibliography}
